\g@addto@macro\normalsize{%
  \setlength\abovedisplayskip{4pt}
  \setlength\belowdisplayskip{4pt}
  \setlength\abovedisplayshortskip{4pt}
  \setlength\belowdisplayshortskip{4pt}
}
\let\norm\relax
\let\proof\relax
\newcommand\rurl[1]{%
  \href{http://#1}{\nolinkurl{#1}}%
}
\def\red#1{{\color{red}#1}}
\def\cyan#1{{\color{cyan}#1}}
\let\abs\relax
\newcommand{\abs}[1]{\left\lvert#1\right\rvert}
\let \norm \relax 
\newcommand{\norm}[1]{\left\lVert#1\right\rVert}
\newcommand{\infnorm}[1]{\left\lVert#1\right\rVert_\infty}
\newcommand{\linfnorm}[1]{\left\lVert#1\right\rVert_{\mathcal{L}_\infty}}
\newcommand{\linfnormtruc}[2]{\left\lVert#1\right\rVert_{\mathcal{L}_\infty^{[0,#2]}}}
\newcommand{\lonenorm}[1]{\left\lVert#1\right\rVert_{\mathcal{L}_1}}
\newcommand{\lonenormbar}[1]{\left\lVert#1\right\rVert_{\bar{\mathcal{L}}_1}}
\def\lone{{\mathcal{L}_1}}
\def\lonebar{{\bar{\mathcal{L}}_1}}
\def\lonew{${\mathcal{L}_1}$ }
\def\linf{{\mathcal{L}_\infty}}
\def\rhor{{\rho_r}}
\def \loneAC {$\lone$AC}
\def\reft{{{\textup{ref}}}}
\def \idt{{\textup{id}}}
\def\tilx{\tilde{x}}
\def \hatx{\hat{x}}
\def \hsigma{\hat{\sigma}}
\def\xin{x_\textup{in}}
\def\xref{x_\reft}
\def \rt {\textup{r}}
\def\mbR{\mathbb{R}}
\def\mbI{\mathbb{I}}
\def\mbZ{\mathbb{Z}}
\def\mbZ{\mathbb{Z}}
\def\hsigma{\hat{\sigma}}
\def\mcC{\mathcal{C}}
\def\mcF{\mathcal{F}}
\def\mcH{\mathcal{H}}
\def\mcG{\mathcal{G}}
\def\mcW{\mathcal{W}}
\def\mrx{\mathrm{x}}
\def\mru{\mathrm{u}}
\def\mry{\mathrm{y}}
\def\mrA{\mathrm{A}}
\def\mrB{\mathrm{B}}
\def\mrC{\mathrm{C}}
\def\mrD{\mathrm{D}}
\def\mrP{\mathrm{P}}
\def\trieq{\triangleq}
\newtheorem{theorem}{Theorem}
\newtheorem{lemma}{Lemma}
\theoremstyle{definition}  \newtheorem{definition}{Definition}
\theoremstyle{definition} 
\newtheorem{assumption}{Assumption}
\theoremstyle{remark}  
\newtheorem{remark}{Remark}
\def\cl@part {\@elt {chapter}}
\def \lrangle#1{\left\langle#1\right\rangle}
\renewcommand*\env@matrix[1][\arraystretch]{%
  \edef\arraystretch{#1}%
  \hskip -\arraycolsep
  \let\@ifnextchar\new@ifnextchar
  \array{*\c@MaxMatrixCols c}}
\def \proof{\noindent{\it Proof}. }
\def\lammax#1{{\bar{\lambda}(#1)}}
\def\lammin#1{{\underline{\lambda}(#1)}}
\def\lammaxtheta#1{{\max_{\theta\in \Theta}\lammax{#1(\theta)}}}
\def\lammintheta#1{{\min_{\theta\in \Theta}\lammin{#1(\theta)}}}
\newcommand{\linfnormtrucinterval}[2]{\left\lVert#1\right\rVert_{\mathcal{L}_\infty^{[#2]}}}
\newcommand{\ssreal}[4]{ \arraycolsep=1.8pt \renewcommand{\arraystretch}{0.95}
\left[\begin{array}{c|c}
      #1 & #2 \\
     \hline
     #3 & #4
\end{array}\right]}
\def\te{\textup{e}}
\def\xin{x_\textup{in}}
\def\tileta{\tilde{\eta}}
\def\tilalpha{\tilde{\alpha}}
\def\tilsigma{{\tilde{\sigma}}}
\def\tilq{\tilde{q}}
\def\tildeltae{\tilde{\delta}_e}
\def\bbracket#1{\bm{[}#1\bm{]}}
\def\hsigmaum{\hsigma_{um}}
\def\uum{u_{um}}
\def \rt{\textup{r}}
\def\barH{{\bar{\mathcal{H}}}}
\def\Bu{B_{u}}
\def \total{\textup{total}}
\def \bl{\textup{bl}}
\def\l1rho{L_{1\rho}}
\def\l2rho{L_{2\rho}}
\def\b10{b_{10}}
\def\b20{b_{20}}
\def\bi0{b_{i0}}
\def\barqs{\bar{q}_s}
\crefname{equation}{}{} 
\crefname{lemma}{Lemma}{Lemmas}
\crefname{theorem}{Theorem}{Theorems}
\crefname{table}{Table}{Tables}
\crefname{figure}{Fig.}{Figs.}
\crefname{remark}{Remark}{Remarks}
\crefname{assumption}{Assumption}{Assumptions}
\crefname{section}{Section}{Sections}
\crefname{definition}{Definition}{Definitions}
\crefname{algorithm}{Algorithm}{Algorithms}
\crefname{proposition}{Proposition}{Propositions}
\crefname{appendix}{Appendix}{Appendices}
\newcolumntype{L}[1]{>{\raggedright\arraybackslash}p{#1}}
\newcolumntype{C}[1]{>{\centering\arraybackslash}p{#1}}
\newcolumntype{R}[1]{>{\raggedleft\arraybackslash}p{#1}}
\def\qedclosed{\hfill$\blacksquare$}
\def\red#1{#1} 
\def\arraystretch{1.2}
\def\@fnsymbol#1{\ensuremath{\ifcase#1\or *\or \ddagger\or
  \mathsection\or \mathparagraph\or \|\or **\or \dagger\dagger
  \or \ddagger\ddagger \else\@ctrerr\fi}}
\begin{document}

\title{\LARGE Robust Adaptive Control of Linear Parameter-Varying Systems with Unmatched Uncertainties}
\vspace{-.7cm}

\author{Pan Zhao$^1$, 
Steven Snyder$^2$, 
Naira Hovakimyan$^3$, 
and
Chengyu Cao$^4$
\thanks{This work is in part by the Air Force Office of Scientific Research through Grant FA9550-18-1-0269, and in part by the NASA Langley Research Center through Grant 80NSSC17M0051. 
(Corresponding author: Pan Zhao)}
\thanks{$^1$P.~Zhao is with the Department of Aerospace Engineering and Mechanics, University of Alabama, Tuscaloosa, AL, USA 35487 (e-mail:~pan.zhao@ua.edu)}
\thanks{$^2$S.~Snyder is with the Dynamics Systems and Control Branch, NASA Langley Research Center, Hampton, VA, USA 23681 (e-mail:~steven.m.snyder@nasa.gov)}
\thanks{$^3$N.~Hovakimyan is with the Department of Mechanical Science and Engineering, University of Illinois at Urbana-Champaign, Urbana, IL, USA 61801 (e-mail:~nhovakim@illinois.edu)}
\thanks{$^4$C.~Cao is with the Department of Mechanical Engineering, University of Connecticut, Storrs, CT, USA 06269 (e-mail:~chengyu.cao@uconn.edu)} \vspace{-.7cm}}

\maketitle
\begin{abstract}
In controlling systems with large operating envelopes, it is often necessary to adjust the desired dynamics according to operating conditions. This paper presents a robust adaptive control architecture for linear parameter-varying (LPV) systems that allows for the
desired dynamics to be systematically scheduled, while being able to handle a broad class of
uncertainties, both matched and unmatched, which can depend on both time and states. The proposed controller adopts an \lonew adaptive control architecture for designing the adaptive
control law and peak-to-peak gain (PPG) minimization for designing the robust control law
to mitigate the effect of unmatched uncertainties. 
{Leveraging the PPG bound of an LPV system,} we derive transient and steady-state performance bounds in terms of the input and output signals of the actual closed-loop system as compared to the same signals of a nominal system. 
The efficacy of the proposed method is validated by extensive simulations using the short-period dynamics of an F-16 aircraft operating in a large envelope. 
\end{abstract}

\begin{IEEEkeywords}
Robust control; Adaptive control; LPV system; Disturbance rejection; Unmatched Uncertainty;  Flight control
\end{IEEEkeywords}

\section{Introduction}\label{sec:introduction}

Gain scheduling (GS) is a common practice for controlling nonlinear and time-varying systems with large operating envelopes \cite{Rugh00gs_survey}.  
As a systematic way for GS design and analysis, the {\em linear parameter-varying} (LPV) systems approach has been extensively studied in the last three decades \cite{Shamma91AM,Wu96Induced,Apk98,Sato11AM},
with numerous applications
, e.g.,~in aerospace \cite{biannic1997parameter,Lu06F16,he2018lpv-blended,he2019smooth-slpv}
and automotive systems \cite{wei2007lpv-engine,white2011hardware}.
On the other hand, model uncertainties and exogenous disturbances are inherent in real-world systems, which  lead to degraded stability and performance if not addressed properly.
{\it Adaptive control methods} have presented a number of promising solutions for controlling uncertain systems under certain assumptions with different guarantees.
In the last several decades, the field of adaptive control has witnessed tremendous developments, capturing different classes of linear and nonlinear systems, subject to various parametric and state-dependent uncertainties, and unmodeled dynamics \cite{ioannou2012robust,narendra1989adaptive-book,lavretsky2013robust-adaptive-book}. 
  Despite the significant progress made in the past several decades in model reference adaptive control (MRAC), the solutions are limited to parametric or structured nonlinear uncertainties \cite{ioannou2012robust,lavretsky2013robust-adaptive-book}.
{\it Disturbance-observer (DOB) based control} and related methods such as active disturbance rejection control  (ADRC) \cite{han2009adrc} 
are a group of promising approaches for controlling linear and nonlinear systems subject to uncertainties and disturbances \cite{chen2015dobc}. Hereafter, we use {\it DOBC} to refer to both DOB based control and the related methods including ADRC.  
The main idea of DOBC is to lump model uncertainties, unmodeled dynamics, and external disturbances together as a ``disturbance'', estimate it via an observation mechanism, 
 and then compute control actions to compensate for the estimated disturbance. {However, DOB design often needs inversion of the nominal dynamics (for frequency-domain design methods) which are generally only applicable to minimum-phase LTI systems, or an exogenous system with known parameters that is assumed to generate the disturbances (for time-domain design methods). Moreover, 
 state-dependence of uncertainties are generally ignored in theoretic analysis of DOBC \cite{chen2015dobc}. }  
As an alternative and promising solution for controlling uncertain systems, the {\lonew adaptive control} (\loneAC) \cite{naira2010l1book} can handle a broad class of uncertainties without needing a parametric structure, and 
allows for the use of fast adaptation without sacrificing robustness thanks to the decoupling of the control loop from the estimation loop. 
Further, \loneAC~provides guaranteed tracking performance, both in {\it transient} and steady-state, and guaranteed robustness margins (e.g., nonzero time-delay margin), in the presence of various uncertainties. \loneAC~has been verified on many real-world systems 
 including
NASA's subscale commercial jet \cite{naira2011L1_Safety}, 
and~Calspan's  Learjet (a piloted aircraft) \cite{ackerman2017evaluation,ackerman2019Learjet}. {It has also been adopted within the Learn-to-Fly framework of NASA and validated by flight tests \cite{snyder2022l2f}}. 

Despite the significant progress made in the past several decades, existing adaptive control and DOBC schemes generally employ a linear time-invariant (LTI) system to describe the desired or nominal dynamics. 
However, using an LTI model for a system with a large operating envelope
could lead to large uncertainties in certain operating regions. Large uncertainties will cause large
compensation actions, which could be energy inefficient or even cause actuator saturation and consequently loss of stability or degraded performance. Additionally, in certain applications, it is preferable to change
the desired dynamics according to the operating condition. For instance, for the longitudinal dynamics of manned aircraft, pilots usually prefer faster response at higher speeds and slower response at lower speeds during up-and-away flight \cite{biannic1997parameter}.
Under these scenarios, using an {LPV} model to characterize the {desired dynamics} provides the desired perspective.  {Nonlinear models have also been employed in adaptive and DOB controllers to describe the desired dynamics \cite{wang2012L1-nonlinear,chen2015dobc,lopez2020adaptive-ccm,lakshmanan2020safe}, but are probably not as appropriate for characterizing the variation of desired dynamics according to operating conditions as an LPV formulation does. Furthermore, for complex systems such as aerospace systems, it may be challenging to get a nonlinear analytic model for the whole operating envelope, e.g., due to the use of look-up tables to model sophisticated aerodynamics. In contrast, it is much easier to obtain an LPV model through interpolation of a series of LTI models resulting from linearization at different operating conditions. 
However, adaptive or DOB-based control of LPV systems has rarely been explored. \cite{xie2016MRAC-LPV} studied an MRAC scheme for switched LPV systems,  but considered only {matched parametric} uncertainties with only asymptotic performance guarantees.   \cite{snyder2020l1lpv} presented an \loneAC~controller for LPV systems in the presence of matched parametric uncertainties and external disturbances.


It remains a challenging problem in adaptive control or DOBC to handle unmatched uncertainties (UUs), i.e., those that enter the system through different channels from the control inputs. In general, it is impossible to remove the influence of the UUs on system states; therefore, almost all of the existing efforts focus on eliminating or attenuating the effect of UUs on interested variables, e.g., system outputs. 
For nominal LTI systems that are minimum-phase and square (i.e., having the same number of input and output channels), \cite{xargay2010unmatched} used the nominal system inversion to design a {\it dynamic} feedforward compensation term to remove the effect of the UUs on the output channels in \loneAC. Similar ideas were explored independently in DOBC, in which {\it static} feedforward terms were designed to eliminate the effect of UUs on the outputs in {\it steady state} only for
LTI systems \cite{li2012extendedESO} and nonlinear systems \cite{yang2012nonlinear-unmatched}. 
$H_\infty$ control theory was utilized to design state-feedback gains to attenuate the effect of UUs  
in DOBC \cite{wei2010composite-unmatched}.
For systems transformable to strict or semi-strict feedback forms, backstepping techniques have been leveraged to handle UUs in adaptive control \cite{yao1997adaptive,koshkouei2004backstepping}, and in DOBC \cite{sun2015global-unmatched}. 
{Alternative approaches based on learning the UUs and incorporating the learned model to adapt the reference model were investigated in linear MRAC settings \cite{yayla2016adaptive-unmatched,joshi2019hybrid}. 
 However, the tracking performance during the learning transients can be quite poor, due to a lack of active compensation for the UUs.  }

\underline{\it Statement of Contributions.} 
The contributions of this paper are threefold. To facilitate systematic scheduling of desired dynamics in controlling uncertain aerospace systems with large operating envelopes, this paper presents a robust adaptive control scheme for LPV systems that can
 handle a broad class of uncertainties that can depend on both time and states and can be unmatched. 
The proposed control scheme leverages the $\mathcal{L}_1$AC~architecture to design the adaptive control law. 
A dynamic feedforward LPV mapping, computed using linear matrix inequality (LMI) techniques, is introduced to attenuate the effect of unmatched uncertainties.
This approach is more general and does not need restrictive assumptions such as strict-feedback form, system squareness and non-minimum phase, needed in \cite{yao1997adaptive,koshkouei2004backstepping,sun2015global-unmatched,xargay2010unmatched}. 
In addition, we derive uniform transient performance bounds in terms of the inputs and states of the robust adaptive system as compared to the same signals of a nominal system. We furthermore show that the transient performance bounds can be {\it systematically improved} by reducing the estimation sampling time and increasing the \red{bandwidth of the low-pass filter (an important element of an \lonew adaptive controller, to be introduced in Section~\ref{sec:l1_architecture})}.  
{Finally, we apply the proposed method to control the longitudinal motion of an F-16 aircraft operating in a large envelope, and conduct extensive simulations using both LPV  and {fully nonlinear} models to validate the control performance}.

The paper is organized as follows. Section~\ref{sec:problem_formulation} introduces the main problem and key assumptions. Section~\ref{sec:prelinimaries} includes basic definitions and preliminary lemmas. Section~\ref{sec:l1_architecture} explains the proposed robust adaptive control architecture, while the performance analysis is given in Section~\ref{sec:analysis_l1}. 
Section~\ref{sec:simulation-example} includes the simulation results on the short-period dynamics of an F-16 aircraft. 

We denote by $\mathbb{R}^n$  the set of $n$-dimensional real vectors, $\mathbb{R}^+$ the set of non-negative real numbers, $\mathbb{R}^{m\times n}$ the set of real $m$ by $n$ matrices.
$\mbI_n$ denotes an $n$ by $n$ identity matrix, and {${0}$ denotes a zero matrix of compatible dimension}. $\mbZ_i$ and $\mbZ_1^n$ denote the integer sets $\{i, i+1, \cdots\}$ and $\{1, 2,\cdots,n\}$, respectively.
$\norm{\cdot}$, $\norm{\cdot}_1$ and $\norm{\cdot}_\infty$ to  denote the $2$-norm, $1$-norm and $\infty$-norm of a vector or a matrix, respectively.  $\bar \lambda (P)$ and $\underline \lambda(P)$ denote the maximal and minimal eigenvalues of the matrix $P$, respectively. For symmetric matrices $P$ and $Q$, $P>Q$ means $P-Q$ is positive definite. $\langle X\rangle$ is the shorthand notation of $X +X^{\top\!}$. 
In symmetric matrices, $\star$ denotes an off-diagonal block induced by symmetry. For a matrix $A\in \mbR^{m\times n}$, $\mathrm{vec}(A)\in R^{mn}$ represents the vectorization of $A$.
The Laplace transform of a function $x(t)$ is denoted by $x(s)\triangleq\mathfrak{L}[x(t)]$.
The mapping corresponding to a state-space realization $(A,B,C,D)$ is denoted by \scalebox{0.8}{${ \ssreal{A}{B}{C}{D}}$}.


\section{Problem Statement}\label{sec:problem_formulation}
Consider the class of multiple-input multiple-output LPV systems subject to uncertainties
  {\begin{subequations}\label{eq:plant-dynamics}
\begin{align}
  \dot{x}(t) = &~ A(\theta(t)) x(t)+ B(\theta(t))\omega u_\textup{total}(t)+ f(t,x(t)) \label{eq:plant-dynamics-1}\\
    y(t) = &~ C_m(\theta(t)) x(t), \ x(0) = x_0 ,
\end{align}
\end{subequations}}where $x(t)\in\mbR^n$ is the state vector, $u_\textup{total}(t)\in\mbR^m$ is the control input vector, $y(t)\in \mbR^p$ is the output vector,  $\theta(t)\triangleq[\theta_1(t),\cdots,\theta_s(t)]\in\mathbb{R}^s$ is a vector of time-varying parameters that can be measured online, {$A(\theta(t))\in\mbR^{n\times n}$}, $B(\theta(t))\in\mbR^{n\times m}$, and $C_m(\theta(t))\in \mbR^{m\times n}$  are known matrices dependent on $\theta(t)$. {The nominal model $\dot x(t) = A(\theta(t))x(t) + B(\theta(t))u_\total(t), y(t) = C_m(\theta(t))x(t)$ is termed as an LPV model and can be used to characterize a nonlinear system in a large operating envelope \cite{Rugh00gs_survey}.} 
Additionally, $\omega\in \mbR^{m\times m}$ is an unknown matrix {introduced to model actuator failures or modeling errors, which could lead to uncertain control gains or incorrectly estimated control effectiveness \cite[Section 9.5]{lavretsky2013robust-adaptive-book}}. Moreover, $f: \mbR\times \mbR^n\rightarrow \mbR^n$ is an unknown nonlinear function {that can represent model uncertainties, unknown parameters and external disturbances \red{\cite[Section 2.4]{naira2010l1book}}}. The initial state vector $x_0$ is assumed to be inside an {arbitrarily large} known set, i.e., for some known $\rho_0>0$, $
    \norm{x_0}\leq \rho_0 <\infty.
$ {Hereafter, for the brevity of notation, we often omit the dependence of $\theta(t)$, $x(t)$ and $u(t)$ on $t$. Given \cref{eq:plant-dynamics}, we would like to design a controller to ensure stability of the closed-loop system and to track a given bounded piecewise-continuous reference signal $r(t)\in \mbR^{p}$ (which satisfies $\linfnorm{r}\leq \bar r$ for a known constant $\bar r$) with prescribed performance.}  


\begin{assumption}\label{assump:unknown-input-gain}
{(Partial knowledge of input gain)} The input gain matrix $\omega$ is assumed to be an unknown strictly row-diagonally dominant matrix with $\textup{sgn}(\omega_{ii})$ known. Furthermore, we assume that $\omega\in \Omega\trieq  \{\omega\in \mbR^{m\times m}: \mathrm{vec}(w) \in \hat{\Omega}\}$ with $\hat\Omega\subset \mbR^{m^2}$ being a convex polytope. Without loss of generality, we further assume that $\mbI_m\in \Omega$.
\end{assumption}
{
\begin{remark}
The first statement in \cref{assump:unknown-input-gain} indicates that $\omega$ is always non-singular with known sign for the diagonal elements, and is essentially the same as the assumptions made in existing relevant work in adaptive control, e.g, \cite{lavretsky2010predictor-mrac} \cite[Sections 6 and 7]{ioannou2012robust}. {The choice of the form for $\omega$ is motivated by aerospace applications where control directions are known but their magnitudes and coupling are uncertain \cite{lavretsky2010predictor-mrac}}. 
\end{remark}}
{\begin{assumption}\label{assump:ft0-bound}
{(Uniform boundedness of $f(t,0)$)} 
There exist constants $b_f^0$ such that $\norm{f(t,0)}\leq b_f^0$, for any $ t\geq 0$.
\end{assumption}
\begin{assumption}
{(Semiglobal Lipschitz continuity of $f(t,x)$)} \label{assump:ftx-semiglobal_lipschitz}
For arbitrary $\delta>0$, there exist positive constants $L_f^\delta$ such that $
    \norm{f(t,x_1)-f(t,x_2)} \leq L_f^\delta \norm{x_1-x_2},
$
for any $t\geq 0$ and any $x_j$ satisfying $\norm{x_j}\leq \delta,\ j=1,2$.
\end{assumption}
\begin{assumption}\label{assump:ftx-rate-bounded}
{(Uniform boundedness of rate of variation of $f(t,x)$ with respect to $t$)}  There exists positive constant $l_f$ such that for any $x$ and $t_1,t_2\geq 0$,
$
    \left\| {f(t_1,x) - f(t_2,x)} \right\| \le {l_f}\abs{t_1-t_2}.
$
\vspace{-2mm}
\end{assumption}}
\begin{remark}
Roughly speaking, 
\cref{assump:ftx-semiglobal_lipschitz,assump:ftx-rate-bounded} require the {\it rate of variation} of $f(t,x)$ with respect to both $x$ and $t$ to be bounded, which are less restrictive than requiring the uncertainties to be slowly varying (i.e.,  $\dot f \simeq 0$) \cite{yang2012nonlinear-unmatched} or have constant values in steady state (i.e., $\lim_{t\rightarrow \infty}\dot f = 0$)
\cite{li2012extendedESO,sun2015global-unmatched} often needed in \red{disturbance observer-based control}. 
\end{remark}

\begin{assumption}\label{assum:theta-thetadot-bounded}
The parameter vector $\theta(t)$ and its derivative are in compact sets, i.e., 
\begin{equation}\label{eq:theta-thetadot-constraints}
    \begin{split}
    \theta(t)\in\Theta,\quad 
       \dot{\theta}(t)\in\Theta_d, \quad 
    \max_{\dot{\theta}\in\Theta_d} \norm{\dot{\theta}(t)}\leq b_{\dot{\theta}},
  \end{split}
\end{equation}
where $\Theta\trieq[\underline\theta_1,\bar\theta_1]\times \cdots \times [\underline\theta_s,\bar\theta_s] $ and $\Theta_d \trieq[\underline\nu_1,\bar\nu_1]\times \cdots \times [\underline\nu_s,\bar\nu_s]$ are known, and $b_{\dot{\theta}}$ is a known constant.
\end{assumption}

{
Under \cref{assum:theta-thetadot-bounded}, for the nominal or uncertainty-free plant (corresponding to the plant \cref{eq:plant-dynamics} with $\omega=\mbI_m$ and $f(t,x(t))=0$), we can design 
a baseline controller 
\begin{equation}\label{eq:control-baseline-general}
    u_\bl(t) = K_x(\theta)x(t) ,
\end{equation}
where $K_x(\theta) \in \mbR^{m\times n}$ is a parameter-dependent (PD) feedback gain matrix to ensure closed-loop stability. 
\begin{remark}
\cref{assum:theta-thetadot-bounded} is standard in analysis and control synthesis for LPV systems \cite{Wu96Induced,Apk98,Sato11AM,Apk95Self}, and therefore is not restrictive. In the absence of uncertainties, under \cref{assum:theta-thetadot-bounded}, it is straightforward to design the baseline controller \cref{eq:control-baseline-general} using existing methods, e.g., via the PD Lyapunov functions and linear matrix inequalities (LMIs) \cite{Apk98,Wu96Induced,Sato11AM} and \red{\cite[Section~1.3]{Moh12LPVBook}}. 
\end{remark}

With the nominal plant and the baseline controller \cref{eq:control-baseline-general},  we can obtain the following {\it ideal} closed-loop system
\begin{subequations}\label{eq:ideal-dynamics}
\begin{align}
\dot{x}_\textup{id}(t) & =   A_m(\theta)x_\textup{id}(t) + B(\theta)u_\textup{id}(t),\ x_\idt(0) = x_0, \label{eq:ideal-dynamics-a} \\
u_\textup{id}(t) & = K_r(\theta) x_\textup{id}(t),\quad 
y_\textup{id}(t)  = C_m(\theta) x_\textup{id}(t), \label{eq:ideal-dynamics-b}
\end{align}
\end{subequations}
where $A_m(\theta) \trieq A(\theta) + B(\theta) K_x(\theta)$ and $K_r(\theta)\in \mbR^{m\times p}$ is a  parameter-dependent (PD) feedforward gain matrix for achieving desired performance in tracking $r(t)$ in the absence of uncertainties.

\begin{remark}
For square systems, i.e., $m=p$,  we can set $K_r(\theta)= -\left(C_m(\theta) A_m^{-1}(\theta)B(\theta)\right)^{-1}$ to ensure zero steady-state error when tracking a step reference command for any fixed $\theta$ value.  For non-square systems, $K_r(\theta)$ can be designed using other techniques such as $H_\infty$ control. 
\end{remark}
\red{\begin{remark}
   The state feedback $K_x(\theta)x$ plus the feedforward $K_r(\theta)r$ is just one way to design the baseline controller that achieves the desired tracking performance in the absence of uncertainties. One can employ other methods to design the baseline controller, e.g., introducing the integration of tracking error as additional states and designing a state-feedback controller for the augmented system (as done \cite[Section~20]{lavretsky2013robust-adaptive-book}), which could lead to improved robustness.
\end{remark}}
\begin{assumption}\label{assump:desired-dynamics-stable-Lyapunov} {(Lyapunov stability of the ideal CL system)} 
{There exists a PD symmetric matrix $P(\theta)$ and a positive constant $\mu_P$ such that}
{
\begin{equation}\label{eq:laypunov_stability_Am}
\begin{split}
        \left\langle \! A_m^{\top\!} (\theta)P(\theta)\!\right\rangle +\dot{P}(\theta) & \!\leq\! -\mu_P P(\theta), \ \forall (\theta,\dot{\theta}) \in \Theta \!\times \!\Theta_d,\\
      P(\theta)&\!>\!0,\ \forall \theta \in \Theta.
\end{split}
\end{equation}}
\end{assumption}
\begin{remark}\label{rem:desired-dynamics-stable-Lyapunov}
{\cref{assump:desired-dynamics-stable-Lyapunov} ensures  stability of \cref{eq:ideal-dynamics} \cite{Wu96Induced}}.
Indeed, given the closed-loop system \cref{eq:ideal-dynamics-a}, the matrix $P(\theta)$ that satisfies \cref{eq:laypunov_stability_Am} (which defines a PD Lyapunov function $V(t,\theta) \trieq x^{\top\!}(t)P(\theta) x(t)$) can be easily searched for via solving a parameter-dependent LMI (PD-LMI) problem \cite[Section~1.3]{Moh12LPVBook,Wu96Induced}. Moreover, the matrix $P(\theta)$ that satisfies \cref{eq:laypunov_stability_Am} is automatically generated if the baseline controller is designed using PD Lyapunov functions \cite{Wu96Induced,Apk98} (see also the example in \cref{sec:sub-lpv-modeling-f16}).
Therefore, \cref{assump:desired-dynamics-stable-Lyapunov} is not restrictive \red{and can be verified}. 
\end{remark}
Given the baseline controller in \cref{eq:control-baseline-general}, we adopt a compositional control law in the form of 
\begin{equation}\label{eq:control-composition}
   u_\textup{total}(t) = u_\bl(t) + u(t), 
\end{equation}
where $u(t)$ is a robust adaptive control law to be designed to compensate for the uncertainties and track the reference signal $r(t)$.}  The goal of this paper is to {\it design a state-feedback robust adaptive control law to ensure that  $y(t)$\red{, the output of the uncertain system (1),}  {stays close to }  $y_\textup{id}(t)$, 
the output of the ideal CL system \cref{eq:ideal-dynamics},
with quantifiable bounds on the transient and steady-state performance}. 

\begin{assumption}\label{assump:B-lipschitz-bounded}
$B(\theta)$ has full column rank for all $\theta\in\Theta$, 
and there exist positive constants $L_{B}$, {$L_{B^\dag}$} and {$L_{K_x}$} such that 
\begin{align}
    \norm{B(\theta^1)-B(\theta^2)} &\leq L_{B}\norm{\theta^1-\theta^2}, \label{eq:B-lipschitz-const} \\
    {\norm{B^\dagger(\theta^1)-B^\dagger(\theta^2)}}& {\leq L_{B^\dag}\norm{\theta^1-\theta^2} \label{eq:Bdagger-lipschitz-const}}, \\
  {\norm{K_x(\theta^1)-K_x(\theta^2)}}& {\leq  L_{K_x}\norm{\theta^1-\theta^2}, \label{eq:K-lipschitz-const}}
\end{align}
for any $\theta^i\in\Theta$, $i=1,2$, where $B^\dagger (\theta)$ denotes the pseudo-inverse of $B(\theta)$, and {$K_x(\theta)$ is the gain of the baseline controller \cref{eq:control-baseline-general}.}
\end{assumption}
\vspace{-1mm}

\section{Preliminaries and Basic Lemmas}\label{sec:prelinimaries}
In this section, we introduce some definitions, notations, and lemmas for deriving the main results. Hereafter, for notation compactness, we often omit the time dependence of $\theta$ and $x$.

Under \cref{assump:B-lipschitz-bounded}, let us define
\begin{equation}\label{eq:Bbar-theta-defn}
    \bar B(\theta) \triangleq \left[B(\theta), \Bu(\theta)\right],
\end{equation}
where $\Bu(\theta)\in \mbR^{n\times(n-m)}$ is a 
 matrix dependent on $\theta$ such that 
 $\textup{rank}[\bar B(\theta)] = n$ and $B^{\top\!} (\theta)\Bu(\theta) = 0$ for any $\theta \in \Theta$.  
On the other hand, since $\textup{rank}[\bar B(\theta)] = n$ and $B^{\top\!} (\theta)\Bu(\theta)) = 0$ for any $\theta \in \Theta$, we can always find  {vector-valued functions $f_i(t,x(t))$ ($i=1,2$) such that 
\begin{equation}\label{eq:f1-f2-f-relation}
\bar B(\theta)[f_1^{\top\!}(t,x), f_2^{\top\!}(t,x)]^{\top\!} = B(\theta)f_1(t,x)+\Bu(\theta)f_2(t,x)=\! {\bar f}(t,x)\!\trieq\!  f(t,x) \!+\! B(\theta) (\omega\!-\!\mbI_m)K_x(\theta)x\!
\end{equation}
holds for any $t$ and any $\theta \in \Theta$, where the dependence of $x(t)$ and $r(t)$ on $t$ has been omitted for brevity. 
With \cref{eq:control-composition,eq:control-baseline-general,eq:f1-f2-f-relation}, the  system \eqref{eq:plant-dynamics} can be rewritten as 
\begin{subequations}\label{eq:plant-dynamics-f12} 
\begin{align}
 \hspace{-2.5mm}  \dot{x}(t)  \!= & ~ A_m(\theta)x(t)\!+\! B(\theta)\omega  u(t) \!+\! {\bar f}(t,x(t)) \label{eq:plant-dynamics-f12-a-fhat} \\  
   \! = &~A_m(\theta)x(t)\!+\! B(\theta)\left(\omega  u(t) \!+\! f_1(t,x(t))\right) 
      +\!\Bu (\theta)f_2(t,x(t)),  \label{eq:plant-dynamics-f12-b}   \\
 \hspace{-2.5mm}  y(t)
\!= & ~ C_m(\theta) x(t),\ x(0) = x_0. \label{eq:plant-dynamics-f12-c} 
\end{align}
\end{subequations}}

Since  the set $\Theta$ (introduced in \cref{assum:theta-thetadot-bounded}) and the parameter-dependent matrices are all known, we can define and compute the following constants:
\begin{align}
 &  b_{A_m}\!\!\trieq\! \max_{\theta\in\Theta}\!\norm{A_m(\theta)},  \  b_{B} \!\trieq\! \max_{\theta\in\Theta}\!\norm{B(\theta)},\    b_{B^\dagger}\!\trieq\! \max_{\theta\in\Theta}\!\norm{B^\dagger(\theta)},\nonumber\\ 
  &b_C \!\trieq\! \max_{\theta\in\Theta}\!\norm{C_m(\theta)},\   b_{\Bu}\!\trieq\!\max_{\theta\in\Theta}\! \norm{\Bu(\theta)}, \   b_{\Bu^\dagger}\!\trieq\!\max_{\theta\in\Theta}\! \norm{\Bu^\dag(\theta)}, \nonumber \\ 
  & { b_{K_x} \!\trieq\! \max_{\theta\in\Theta}\!\norm{K_x(\theta)} , \ b_{K_r}\!\trieq\! \max_{\theta\in\Theta}\! \norm{K_r(\theta)}}, \label{eq:Am-B-Bu-Kx-Kr-bounds}
\end{align}
where $\Bu^\dag(\theta)$ denotes the pseudo-inverse of $\Bu(\theta)$ introduced in \cref{eq:Bbar-theta-defn}.
Next, we present a lemma stating some properties of $f_i(t,x)$ under the assumptions introduced in \cref{sec:problem_formulation}. 
The proof is given in Appendix~\ref{sec:prof-lem:f-f1-f2-constants-relation}.
{\begin{lemma}\label{lem:f-f1-f2-constants-relation}
Suppose \cref{assump:unknown-input-gain,assum:theta-thetadot-bounded,assump:B-lipschitz-bounded,assump:ft0-bound,assump:ftx-semiglobal_lipschitz,assump:ftx-rate-bounded} hold, and consider the functions ${\bar f}(t,x)$ and $f_i(t,x)$ ($i=1,2$)  that satisfy \cref{eq:f1-f2-f-relation}. Then, 
given an arbitrary $\delta>0$,  for any $x_i$ satisfying $\norm{x_i}\leq \delta$ ($i=1,2$) and for any $t_i\geq 0$ ($i=1,2$), we have
\begin{align}
\norm{{\bar f}(t,0)} & \leq b_f^0,\ \forall t\geq 0, \label{eq:fbar0-bound}\\
\left\| {f_i(t,0)} \right\| &\leq  {b_{f_i}^0},\ \forall t\geq 0  \label{eq:fit0-bound},  \\ 
\left\| {{\bar f}(t_1,x_1)\! - \!{\bar f}(t_2,x_2)} \right\| &\le { L_{{\bar f}}^\delta}\left\| {x_1 \!- \!x_2} \right\| + l_{\bar  f}\abs{t_1\!-\!t_2}, \label{eq:f-bar-lipschitz-cond} \\
\left\| {f_i(t_1,x_1) \!- \!f_i(t_2,x_2)} \right\| &\le {L_{f_i}^\delta}\left\| {x_1 \!- \!x_2} \right\| + l_{f_i}\abs{t_1\!-\!t_2}, \hspace{-2mm} \label{eq:fi-lipschitz-cond}
\end{align}
where $b_f^0$ is introduced in \cref{assump:ft0-bound}, and
\begin{subequations}
\begin{align}
  \hspace{-3mm}   b_{f_1}^0 \!& \trieq  b_{B^\dagger}b_f^0 
, b_{f_2}^0\trieq b_{B^\dagger_u} b_f^0, \label{eq:b10-b20-defn}\\
\hspace{-3mm} l_{\bar f} & \trieq {l_f} +{\max\limits_{\omega  \in \Omega}} \left\| {\omega  - {\mathbb{I}_m}} \right\|\left( {L_B  b_{K_x} + b_{B}{L_{K_x}}} \right)  \delta{b_{\dot \theta }}  , 
\label{eq:l-barf-defn}\\
\hspace{-3mm} L_{\bar f} & \trieq  {L_f^\delta }+ \mathop {\max }\limits_{\theta  \in \Theta } \left\| {B(\theta )K_x(\theta )} \right\|{\max\limits_{\omega  \in \Omega }} \left\| {\omega  - {\mathbb{I}_m}} \right\| \label{eq:L-barf-defn},  \\
\hspace{-3mm} l_{f_1} \!& \trieq  {l_f}b_{B^\dagger} \! +\! L_{B^\dag }b_{\dot \theta}({b_f^0} \!+ \!{L_f^\delta }\delta )\! + \! {L_{K_x}} b_{\dot \theta}\delta {\max_{\omega  \in \Omega }}\left\| {\omega \! -\! {\mathbb{I}_m}} \right\|, \hspace{-1mm} \label{eq:l1t-defn}\\
\hspace{-3mm} l_{f_2} \!&\trieq    {l_f}b_{\Bu^\dagger} + L_{\Bu^\dag }b_{\dot \theta}({b_f^0} + {L_f^\delta }\delta ), \label{eq:l2t-defn}\\
\hspace{-3mm} L_{f_1}^\delta \! & \trieq  {L_f^\delta }b_{B^\dagger}\!+  \!\! {\max\limits_{\omega  \in \Omega ,\:\theta  \in \Theta }}\left\| {(\omega \! -\! {\mathbb{I}_m})K_x(\theta )} \right\|, \label{eq:L1delta-defn}\\
\hspace{-3mm} L_{f_2}^\delta \!& \trieq  {L_f^\delta }b_{\Bu^\dagger}.\label{eq:L2delta-defn}
\end{align}
\end{subequations}

\end{lemma}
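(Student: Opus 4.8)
The plan is to first fix a convenient choice of the splitting functions $f_1,f_2$ and then read off all four inequalities from that choice. Since $B^{\top\!}(\theta)\Bu(\theta)=0$ and $\textup{rank}[\bar B(\theta)]=n$, the column spaces of $B(\theta)$ and $\Bu(\theta)$ are orthogonal complements, so $B(\theta)B^\dagger(\theta)+\Bu(\theta)\Bu^\dagger(\theta)=\mbI_n$, $B^\dagger(\theta)B(\theta)=\mbI_m$, $\Bu^\dagger(\theta)B(\theta)=0$, and $B^\dagger(\theta)\Bu(\theta)=0$. I would therefore pick $f_1(t,x)\trieq B^\dagger(\theta)\bar f(t,x)$ and $f_2(t,x)\trieq \Bu^\dagger(\theta)\bar f(t,x)$, which indeed satisfy \cref{eq:f1-f2-f-relation}. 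Substituting $\bar f = f + B(\theta)(\omega-\mbI_m)K_x(\theta)x$ and using the two annihilation identities collapses the input-gain term and yields the clean closed forms
\begin{equation*}
f_1(t,x)=B^\dagger(\theta)f(t,x)+(\omega-\mbI_m)K_x(\theta)x,\qquad f_2(t,x)=\Bu^\dagger(\theta)f(t,x).
\end{equation*}
From here \cref{eq:fbar0-bound,eq:fit0-bound} are immediate: at $x=0$ the baseline term drops out, so $\bar f(t,0)=f(t,0)$, $f_1(t,0)=B^\dagger(\theta)f(t,0)$ and $f_2(t,0)=\Bu^\dagger(\theta)f(t,0)$, and applying \cref{assump:ft0-bound} together with the operator-norm bounds $b_{B^\dagger},b_{\Bu^\dagger}$ from \cref{eq:Am-B-Bu-Kx-Kr-bounds} gives the stated constants.

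For the joint Lipschitz/rate bound \cref{eq:f-bar-lipschitz-cond} on $\bar f$, I would write $\bar f=f+g$ with $g(t,x)\trieq B(\theta(t))(\omega-\mbI_m)K_x(\theta(t))x$ and bound the two differences separately. Splitting $\bar f(t_1,x_1)-\bar f(t_2,x_2)$ through the intermediate point $(t_1,x_2)$, the $x$-variation at frozen $t_1$ is controlled by \cref{assump:ftx-semiglobal_lipschitz} for $f$ and by $\norm{B(\theta)(\omega-\mbI_m)K_x(\theta)}$ for $g$, while the $t$-variation at frozen $x_2$ uses \cref{assump:ftx-rate-bounded} for $f$ and, for $g$, the Lipschitz continuity of $B(\cdot)$ and $K_x(\cdot)$ from \cref{assump:B-lipschitz-bounded}. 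The crucial link is that $\theta$ is itself a function of $t$: integrating $\dot\theta$ and invoking \cref{assum:theta-thetadot-bounded} gives $\norm{\theta(t_1)-\theta(t_2)}\le b_{\dot\theta}\abs{t_1-t_2}$, so a change in $t$ translates into a proportional change in the parameter-dependent matrices. Collecting terms, with $\norm{x_2}\le\delta$, produces $L_{\bar f}$ and $l_{\bar f}$ as in \cref{eq:L-barf-defn,eq:l-barf-defn}.

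The bounds \cref{eq:fi-lipschitz-cond} on $f_1,f_2$ follow the same two-step pattern applied to these closed forms. I would decompose each difference into a same-matrix part, e.g.\ $\Bu^\dagger(\theta(t_1))\big(f(t_1,x_1)-f(t_2,x_2)\big)$ for $f_2$, bounded via \cref{assump:ftx-semiglobal_lipschitz,assump:ftx-rate-bounded} and scaled by $b_{B^\dagger}$ or $b_{\Bu^\dagger}$, and a frozen-argument part, e.g.\ $\big(\Bu^\dagger(\theta(t_1))-\Bu^\dagger(\theta(t_2))\big)f(t_2,x_2)$, controlled by the Lipschitz continuity of $B^\dagger(\cdot)$ and $\Bu^\dagger(\cdot)$ times $b_{\dot\theta}\abs{t_1-t_2}$ and the a priori bound $\norm{f(t,x)}\le b_f^0+L_f^\delta\delta$ obtained from \cref{assump:ft0-bound,assump:ftx-semiglobal_lipschitz} with $\norm{x}\le\delta$. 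The extra baseline term $(\omega-\mbI_m)K_x(\theta)x$ present only in $f_1$ contributes $\max_{\omega,\theta}\norm{(\omega-\mbI_m)K_x(\theta)}$ to the $x$-Lipschitz constant and $\max_\omega\norm{\omega-\mbI_m}L_{K_x}b_{\dot\theta}\delta$ to the rate constant, which is exactly what distinguishes $L_{f_1}^\delta,l_{f_1}$ in \cref{eq:L1delta-defn,eq:l1t-defn} from the $f_2$ counterparts \cref{eq:L2delta-defn,eq:l2t-defn}.

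The routine part is triangle-inequality bookkeeping and submultiplicativity of the operator norm. The one genuine subtlety, and the step I would be most careful about, is the coupling between time and the scheduling parameter: every rate-in-$t$ constant ($l_{\bar f},l_{f_1},l_{f_2}$) acquires a contribution beyond the intrinsic $l_f$ solely because the matrices $B^\dagger(\theta(t)),\Bu^\dagger(\theta(t)),K_x(\theta(t))$ drift as $\theta(t)$ moves at rate at most $b_{\dot\theta}$. This also quietly requires $\Bu(\theta)$ to be chosen as a Lipschitz, full-rank annihilator of $B(\theta)$ so that $\Bu^\dagger(\theta)$ is Lipschitz with some constant $L_{\Bu^\dagger}$; I would state this explicitly, since \cref{assump:B-lipschitz-bounded} lists only $L_B,L_{B^\dagger},L_{K_x}$.
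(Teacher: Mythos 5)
Your proposal is correct and follows essentially the same route as the paper's proof: the same explicit choice $f_1=B^\dagger(\theta)\bar f$, $f_2=\Bu^\dagger(\theta)\bar f$ with the resulting closed forms, the same splitting of each difference through the intermediate point $(t_1,x_2)$, and the same conversion of time variation into parameter variation via $\norm{\theta(t_1)-\theta(t_2)}\le b_{\dot\theta}\abs{t_1-t_2}$. Your closing observation is also well taken: the paper's own proof invokes a Lipschitz constant $L_{\Bu^\dag}$ for $\Bu^\dag(\cdot)$ (appearing in \cref{eq:l2t-defn}) even though \cref{assump:B-lipschitz-bounded} only lists $L_B$, $L_{B^\dag}$ and $L_{K_x}$, so this additional regularity of the annihilator is indeed an implicit assumption that should be stated.
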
}

\begin{definition}\label{defn:linf-signal-norm}
The $\mathcal{L}_\infty$ and truncated $\mathcal{L}_\infty$ norm of a function $x:\mathbb{R}^+ \rightarrow\mathbb{R}^n$ are defined as $\norm{x}_{\mathcal{L}_\infty}\triangleq \sup_{t\geq 0}\norm{x(t)}$, and $\linfnormtruc{x}{\tau}\triangleq \sup_{0\leq t\leq\tau}\norm{x(t)}$, respectively.
\end{definition}

\begin{definition}\label{defn:MIMO-L1} \cite[Section~III.F]{Scherer97Multi}
For a stable proper MIMO system $H(s)$ with input $\mru(t)\in \mbR^m$ and output $\mry(t)\in \mbR^p$, its \lonew norm is defined as
\begin{equation}\label{eq:l1-norm-defn}
\lonenorm{H(s)} \trieq \sup_{\mrx(0)=0,\linfnorm{\mru}\leq 1} {\linfnorm{\mry}}.
\end{equation}
\end{definition}
The \lonew norm of an LTI system in Definition~\ref{defn:MIMO-L1} is also known as the peak-to-peak gain (PPG) \cite[Section~III.F]{Scherer97Multi}. The following lemma is directly from Definition~\ref{defn:MIMO-L1}. 
\begin{lemma}\label{lem:L1-Linf-relation}
For a stable proper MIMO system $H(s)$ with input $\mru(t)\in \mbR^m$ and output $\mry(t)\in \mbR^p$, we have 
$\linfnormtruc{\mry}{\tau}\leq \lonenorm{H(s)}\linfnormtruc{\mru}{\tau}$, for any $\tau\geq 0$.
\end{lemma}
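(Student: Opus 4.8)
The plan is to reduce the truncated inequality to the non-truncated definition of the $\lonew$ norm (PPG) in \cref{defn:MIMO-L1} by exploiting homogeneity and causality. First I would dispose of the scaling: although \cref{defn:MIMO-L1} takes the supremum only over inputs with $\linfnorm{\mru}\leq 1$, linearity of $H(s)$ upgrades it to the homogeneous bound $\linfnorm{\mry}\leq \lonenorm{H(s)}\linfnorm{\mru}$ for every input with $\mrx(0)=0$. Indeed, for $\mru\neq 0$ one applies the definition to the unit-norm signal $\mru/\linfnorm{\mru}$ and uses that the corresponding output is $\mry/\linfnorm{\mru}$, while the case $\mru=0$ is trivial.

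Next I would introduce the truncated input $\mru_\tau$, defined by $\mru_\tau(t)=\mru(t)$ for $0\leq t\leq \tau$ and $\mru_\tau(t)=0$ for $t>\tau$, and let $\mry_\tau$ denote the corresponding output (again from zero initial state). The key step is causality: since $H(s)$ is a proper stable LTI system, its output is the convolution of the input with an impulse response supported on $[0,\infty)$, so the value of the output at any time $t$ depends only on the input restricted to $[0,t]$. Because $\mru$ and $\mru_\tau$ coincide on $[0,\tau]$, the outputs $\mry$ and $\mry_\tau$ coincide on $[0,\tau]$, whence $\linfnormtruc{\mry}{\tau}=\linfnormtruc{\mry_\tau}{\tau}$.

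Finally I would chain the elementary inequalities. The truncated norm is dominated by the full norm, $\linfnormtruc{\mry_\tau}{\tau}\leq \linfnorm{\mry_\tau}$; the homogeneous bound gives $\linfnorm{\mry_\tau}\leq \lonenorm{H(s)}\linfnorm{\mru_\tau}$; and since $\mru_\tau$ vanishes after $\tau$, we have $\linfnorm{\mru_\tau}=\linfnormtruc{\mru}{\tau}$. Concatenating these with the causality identity yields
\[
\linfnormtruc{\mry}{\tau}=\linfnormtruc{\mry_\tau}{\tau}\leq \linfnorm{\mry_\tau}\leq \lonenorm{H(s)}\linfnorm{\mru_\tau}=\lonenorm{H(s)}\linfnormtruc{\mru}{\tau},
\]
which is the claim. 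The only genuine subtlety — and the step I would state most carefully — is the causality argument, as it is precisely what makes the finite-horizon (truncated) output insensitive to the future of the input; everything else is homogeneity and the trivial domination of a truncated supremum by the full supremum.
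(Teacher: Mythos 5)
Your proof is correct, and it supplies exactly the details the paper leaves implicit: the paper states this lemma with no proof at all, declaring it to follow "directly from Definition~\ref{defn:MIMO-L1}." Your three ingredients — homogeneity of the linear zero-initial-state response, causality of the proper stable $H(s)$ so that truncating the input after $\tau$ leaves the output unchanged on $[0,\tau]$, and the identification $\linfnorm{\mru_\tau}=\linfnormtruc{\mru}{\tau}$ — are precisely the standard justification the authors are invoking, so the two arguments coincide in substance.
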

\begin{remark}
We use the vector 2-norm in definition of the $\linf$ function norm (Definition~\ref{defn:linf-signal-norm}) to further define the \lonew system norm (Definition~\ref{defn:MIMO-L1}), which is different from the common practice that uses the vector $\infty$-norm (see e.g., \cite[Section A.7.1]{naira2010l1book}). 
\end{remark}

Let $x_\textup{in}$ be the state of the system 
\begin{equation}\label{eq:xin-dynamics}
    \dot{x}_\textup{in}(t) = A_m(\theta)x_\textup{in}(t), \quad x_\textup{in}(0) = x_0.
\end{equation}
Then $\xin$ is bounded according to the following lemma.
\begin{lemma}\label{lem:xin_bound}
Given the autonomous system in \eqref{eq:xin-dynamics}, under Assumption~\ref{assump:desired-dynamics-stable-Lyapunov}, the following inequality holds:
\begin{equation}\label{eq:xin-linfnorm-rho-in-relation}
 \linfnorm{x_\textup{in}}\leq \rho_\textup{in}, 
\end{equation}
where 
\begin{equation}\label{eq:rhoin-defn}
   \rho_\textup{in}\triangleq\rho_0  \sqrt{\frac{\lammaxtheta{P}}{\lammintheta{P}}}
\end{equation}
with $\rho_0$ defined below \eqref{eq:plant-dynamics}.
\end{lemma}
\begin{proof}
Define $V(t,\theta) \triangleq x_\textup{in}^{\top\!}(t) P(\theta) x_\textup{in}(t) $,
where $P(\theta)$ is introduced in \eqref{eq:laypunov_stability_Am}. From the dynamics in \eqref{eq:xin-dynamics} and the inequality in \eqref{eq:laypunov_stability_Am}, we have 
\begin{align*}
    \dot{V}(t,\theta)& = \xin^{\top\!}(t)\left(A_m^{\top\!}(\theta) P(\theta) + P(\theta)A_m(\theta) + \dot{P}(\theta)\right)\xin(t)   \leq -\mu_P \xin^{\top\!}(t) P(\theta)\xin(t) 
    = -\mu_P V(t,\theta). 
\end{align*}
Therefore, 
$ V(t,\theta) \leq V(0,\theta(0))\te^{-\mu_P t}\leq \xin^{\top\!}(0)P(\theta(0))\xin(0) \te^{-\mu_P t}  \leq \lammaxtheta{P}\norm{x_0}^2 \te^{-\mu_P t}.  \label{eq:Vt-x0-bound}
$
The preceding inequality, together with the fact that
$\label{eq:Vt-xt-relation}
    V(t,\theta)\geq \lammintheta{P} \norm{\xin(t)}^2,
$
 leads to 
\[ 
\begin{split}
 \norm{\xin(t)} &\leq \norm{x_0}\sqrt{\frac{\lammaxtheta{P}}{\lammintheta{P}}}\te^{-\frac{\mu_P}{2} t} 
 \leq \rho_0\sqrt{\frac{\lammaxtheta{P}}{\lammintheta{P}}} ,\quad \forall t\geq 0,   
\end{split}
 \] 
which proves \eqref{eq:xin-linfnorm-rho-in-relation}. \qedclosed
\end{proof}

We next consider a general LPV system:
\begin{equation}\label{eq:lpv-system-general}
\mcG(\theta):
\left\{
\begin{aligned}
    \dot{\mrx}(t) &= \mrA(\theta(t))\mrx(t) + \mrB(\theta(t))\mru(t),\\
    \mry(t) & = \mrC(\theta(t))\mrx(t) + \mrD(\theta(t))\mru(t),
\end{aligned}\right.
\end{equation}
where $\mrx(t)\in\mbR^{n},\ \mru(t)\in\mbR^{m},\ \mry(t)\in\mbR^{p}$, and  $\mrA(\theta),\mrB(\theta),\mrC(\theta),\mrD(\theta)$ are state-space matrices of compatible dimensions depending on the varying parameter vector $\theta(t)\in\mbR^s$. Furthermore, the trajectory of $\theta$ is constrained by \eqref{eq:theta-thetadot-constraints}. We use $\mcG(\theta)$ to denote the input-output mapping of the system in \eqref{eq:lpv-system-general}.

\begin{definition} 
Given {a stable} LPV system in \eqref{eq:lpv-system-general} subject to Assumption~\ref{assum:theta-thetadot-bounded}, 
the PPG of $\mcG(\theta)$ is defined as 
{\begin{equation}\label{eq:ppg-defn}
\lonenorm{\mcG(\theta)} \trieq \sup_{\mrx(0)=0,\ \linfnorm{\mru}\leq 1}\sup_{\ \theta(t)\in \Theta,\ \dot \theta(t)\in \Theta_d} {\linfnorm{\mry}}.
\end{equation}}
Moreover, a positive scalar $\gamma$ is a bound on the PPG of $\mcG(\theta)$ if for $\mrx(0)=0$ and all trajectories of $\theta$ satisfying \eqref{eq:theta-thetadot-constraints}, we have
\begin{equation}\label{eq:ppg-bound-defn}
\linfnormtruc{\mry}{\tau} \leq \gamma \linfnormtruc{\mru}{\tau},\quad \forall \tau\geq 0.
\end{equation}
\end{definition}

%
The PPG is equivalent to the $\lone$ norm for an LTI system, and thus we still use $\lonenorm{\cdot}$ to denote the PPG of an LPV system. In practice, it is very challenging to compute the PPG of an LPV system exactly. However, a PPG bound can be computed using LMI techniques (see Lemma~\ref{lem:ppg-bound-computaion-lmi}). Hereafter, we use $\norm{\mcG(\theta)}_\lonebar$ to denote an arbitrary PPG bound of $\mcG(\theta)$.
Next lemma follows straightforwardly from the inequalities in \eqref{eq:ppg-defn} and \eqref{eq:ppg-bound-defn}, and thus the proof is omitted.
\begin{lemma} \label{lem:input-output-ppg-bound-relation}
Given {a stable} LPV system in \eqref{eq:lpv-system-general} with  $\mrx(0)=0$, we have 
\begin{equation} \label{eq:input-output-ppg-bound-relation}
  \linfnormtruc{\mry}{\tau} \leq \norm{\mcG(\theta)}_\lonebar  \linfnormtruc{\mru}{\tau}, \quad \forall \tau\geq 0.  
\end{equation}
\end{lemma}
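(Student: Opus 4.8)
The plan is to derive the claimed inequality directly from the definition of a peak-to-peak gain (PPG) bound, since the statement is essentially a restatement of that definition in the notation $\norm{\mcG(\theta)}_\lonebar$. Recall that, by the convention introduced immediately before the lemma, $\norm{\mcG(\theta)}_\lonebar$ denotes an \emph{arbitrary} PPG bound of $\mcG(\theta)$; that is, it is admissible as a value of $\gamma$ in \cref{eq:ppg-bound-defn}.

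First I would verify that the hypotheses required in \cref{eq:ppg-bound-defn} are in force. The lemma assumes $\mrx(0)=0$, which matches the zero-initial-condition requirement, and under \cref{assum:theta-thetadot-bounded} every admissible trajectory of $\theta$ satisfies the constraints \cref{eq:theta-thetadot-constraints}. Hence, taking $\gamma = \norm{\mcG(\theta)}_\lonebar$ in \cref{eq:ppg-bound-defn} yields
\begin{equation*}
\linfnormtruc{\mry}{\tau} \leq \norm{\mcG(\theta)}_\lonebar \linfnormtruc{\mru}{\tau}, \quad \forall \tau \geq 0,
\end{equation*}
which is precisely the assertion.

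I expect there to be essentially no obstacle here: the content is definitional, and the only thing to check is the compatibility of the hypotheses, as above. The reason the definitions \cref{eq:ppg-defn} and \cref{eq:ppg-bound-defn} are both invoked is that the former fixes the quantity being bounded (the supremal peak-to-peak gain over unit-peak inputs and admissible $\theta$), while the latter certifies that any valid bound $\gamma$ transfers to the \emph{truncated} input--output inequality used throughout the analysis. Were one instead to attempt a proof starting from the supremum definition \cref{eq:ppg-defn} alone, the only nontrivial step would be the passage from the full $\linf$ norm to the truncated norm: by causality of the linear system \cref{eq:lpv-system-general}, the output on $[0,\tau]$ is unaffected by setting the input to zero on $(\tau,\infty)$, so one could truncate $\mru$, apply homogeneity in $\mru$ to normalize, and invoke \cref{eq:ppg-defn}. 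This detour is unnecessary, however, given that $\norm{\mcG(\theta)}_\lonebar$ is a PPG bound by definition, which is exactly why the authors omit the proof.
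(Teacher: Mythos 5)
Your proposal is correct and matches the paper's treatment: the authors explicitly omit the proof, noting that the lemma follows straightforwardly from the definitions in \eqref{eq:ppg-defn} and \eqref{eq:ppg-bound-defn}, which is exactly the definitional argument you give. Your remark on the causality/truncation detour is a reasonable aside but, as you note, unnecessary given that $\norm{\mcG(\theta)}_\lonebar$ is by convention a PPG bound satisfying \eqref{eq:ppg-bound-defn}.
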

The LMI conditions for computing a PPG bound for an LTI system are given in \cite[Section~III.F]{Scherer97Multi}. The following lemma is a straightforward extension of this result to LPV systems using the PD Lyapunov function $V(t,\theta)= \mrx^{\top\!}(t) \mrP(\theta)\mrx(t)$. Therefore, the detailed proof is omitted for brevity.  
\begin{lemma}\label{lem:ppg-bound-computaion-lmi}
Given the LPV system in \eqref{eq:lpv-system-general} {subject to \cref{assum:theta-thetadot-bounded}},  it is bounded-input-bounded-output (BIBO) stable, and $\gamma>0$ is a PPG bound of this system, if there exist a PD and symmetric matrix $\mrP(\theta)$ and positive constants $\mu$ and $\upsilon$, such that the following inequalities hold:
\begin{equation}\label{eq:ppg-bound-computation-lmi1}
    \begin{bmatrix}[1.2]
        \lrangle{\mrA^{\top\!}(\theta)\mrP(\theta)}+ \mu\mrP(\theta) + \dot{\mrP}(\theta) & \star \\
        \mrB^{\top\!}(\theta)\mrP(\theta) &-\upsilon \mbI_m
    \end{bmatrix}
    < 0,  \quad  \forall (\theta,\dot{\theta})\in\Theta\times\Theta_d,   
  \end{equation}
  \begin{equation}\label{eq:ppg-bound-computation-lmi2}
      \begin{bmatrix}[1]
        \mu\mrP(\theta) & \star & \star\\
        0 & (\gamma-\upsilon)\mbI_m & \star \\
        \mrC(\theta) & \mrD(\theta) & \gamma \mbI_p
    \end{bmatrix}
    >0, \quad \forall \theta\in\Theta. 
  \end{equation}
\end{lemma}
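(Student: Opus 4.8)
The plan is to adapt the classical dissipation-inequality (storage-function) characterization of the peak-to-peak gain of an LTI system given in \cite[Section~III.F]{Scherer97Multi} to the parameter-varying setting, using the parameter-dependent quadratic storage function $V(t,\theta)\trieq \mrx^{\top\!}(t)\mrP(\theta)\mrx(t)$ built from the matrix $\mrP(\theta)$ furnished by \eqref{eq:ppg-bound-computation-lmi1}--\eqref{eq:ppg-bound-computation-lmi2}.

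First I would turn \eqref{eq:ppg-bound-computation-lmi1} into a dissipation inequality. Along any solution of \eqref{eq:lpv-system-general} one has $\dot V(t,\theta)=\mrx^{\top\!}(\langle\mrA^{\top\!}(\theta)\mrP(\theta)\rangle+\dot{\mrP}(\theta))\mrx+2\mrx^{\top\!}\mrP(\theta)\mrB(\theta)\mru$, so pre- and post-multiplying \eqref{eq:ppg-bound-computation-lmi1} by $[\mrx^{\top\!},\ \mru^{\top\!}]$ and its transpose shows that this LMI is exactly equivalent to
\[ \dot V(t,\theta)\leq -\mu V(t,\theta)+\upsilon\norm{\mru(t)}^2. \]
Since \eqref{eq:ppg-bound-computation-lmi1} holds for all $(\theta,\dot\theta)\in\Theta\times\Theta_d$ and $\dot{\mrP}(\theta)=\sum_{i}(\partial\mrP/\partial\theta_i)\dot\theta_i$, this inequality holds pointwise in time along every admissible trajectory satisfying \eqref{eq:theta-thetadot-constraints}; establishing this uniform validity is the essential content of the LPV extension.

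Next, for any input with $\linfnormtruc{\mru}{\tau}\leq c$ and zero initial condition $\mrx(0)=0$ (so $V(0,\theta(0))=0$), the comparison lemma applied to $\dot V\leq-\mu V+\upsilon c^2$ gives $V(t,\theta(t))\leq \frac{\upsilon}{\mu}c^2$ for all $t\in[0,\tau]$. In parallel, I would take the Schur complement of \eqref{eq:ppg-bound-computation-lmi2} with respect to its positive-definite $(3,3)$ block $\gamma\mbI_p$ and pre-/post-multiply the result by $[\mrx^{\top\!},\ \mru^{\top\!}]$; using $\mry=\mrC(\theta)\mrx+\mrD(\theta)\mru$ this produces the pointwise output bound
\[ \norm{\mry(t)}^2 < \gamma\mu\,V(t,\theta(t))+\gamma(\gamma-\upsilon)\norm{\mru(t)}^2. \]

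Finally I would combine the two estimates. Substituting $V\leq\frac{\upsilon}{\mu}c^2$ and $\norm{\mru(t)}^2\leq c^2$ into the output bound cancels the terms containing $\upsilon$ and leaves $\norm{\mry(t)}^2<\gamma^2c^2$ for all $t\in[0,\tau]$, i.e., $\linfnormtruc{\mry}{\tau}\leq\gamma\,\linfnormtruc{\mru}{\tau}$, which is precisely \eqref{eq:ppg-bound-defn}; BIBO stability then follows immediately since a bounded input yields a bounded output. I expect the main obstacle to be conceptual rather than computational: one must justify that LMIs imposed over the frozen sets $\Theta\times\Theta_d$ certify a gain valid uniformly over all \emph{time-varying} parameter trajectories, which relies on the $\dot{\mrP}(\theta)$ term in \eqref{eq:ppg-bound-computation-lmi1} correctly capturing the time variation of the storage function. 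A minor point requiring care is that positivity of the $(2,2)$ block of \eqref{eq:ppg-bound-computation-lmi2} forces $\gamma>\upsilon$, so that $\gamma(\gamma-\upsilon)>0$ and the final cancellation is valid.
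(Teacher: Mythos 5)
Your proposal is correct and follows exactly the route the paper intends: the paper omits the detailed proof, describing it as a straightforward extension of the LTI peak-to-peak conditions in \cite[Section~III.F]{Scherer97Multi} via the parameter-dependent storage function $V(t,\theta)=\mrx^{\top\!}(t)\mrP(\theta)\mrx(t)$, and your dissipation inequality from \eqref{eq:ppg-bound-computation-lmi1}, the Schur-complement output bound from \eqref{eq:ppg-bound-computation-lmi2}, and the cancellation of the $\upsilon$ terms are precisely that argument. The only cosmetic caveat is that the LMI gives the strict inequality only for nonzero $(\mrx,\mru)$, which is harmless since the bound \eqref{eq:ppg-bound-defn} holds trivially there.
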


\begin{remark}\label{rem:linesearch-mu}
Due to the term $\mu \mrP(\theta)$, the inequalities \eqref{eq:ppg-bound-computation-lmi1} and \eqref{eq:ppg-bound-computation-lmi2} are only LMIs if $\mu$ is fixed. Thus, finding the best bound requires performing a line search over $\mu>0$. 
{The PD-LMIs \cref{eq:ppg-bound-computation-lmi1,eq:ppg-bound-computation-lmi2}, despite involving the derivatives of decision matrices, are standard in the LPV literature \cite{Apk98}, and can be readily solved by available techniques (see Section~\ref{sec:sub-computation-perspective}}). 
\vspace{0mm} 
\end{remark}

The following lemma characterizes the state evolution of a BIBO stable LPV system, given a bounded input. 
\begin{lemma}\label{lem:lpv-time-domain-response}
Consider an LPV system in \eqref{eq:lpv-system-general} with $\mrx(0)=0$, {subject to \cref{assum:theta-thetadot-bounded}}. If there exists a  PD symmetric matrix $\mrP(\theta)$ and a constant $\mu>0$ such that \begin{equation}\label{eq:appendix-Lyapunov-stability-cond}
   \left\langle\mrA^{\top\!}(\theta)\mrP(\theta)\right\rangle+ \dot{\mrP}(\theta) \leq -\mu \mrP (\theta),\  \forall (\theta,\dot{\theta})\in\Theta\times\Theta_d, 
\end{equation}
then 
\begin{equation}\label{eq:x-bound-u-sharp-exp}
\norm{\mrx(t)}\leq \beta_\Theta(t,\mu,\mrP(\theta),\mrB(\theta))\linfnormtruc{\mru}{t}, \quad \forall t\geq 0,
\end{equation}
where
{\begin{align}
\!\!\!\beta_\Theta&(t,\mu,\mrP,\mrB) \!\trieq \! \sqrt{\!\sqrt{\frac{\lambda_2}{\lambda_1}}(1\!-\!e^{-\mu t})}\frac{2}{\mu \lambda_1}\!\max_{\theta\in\Theta}\!\norm{\mrP(\theta)\mrB(\theta)}\!,\label{eq:beta-defn-in-LPV-input-output-response} \\
   \!\!\! \lambda_1 & = \lammintheta{\mrP}, \  \lambda_2= \lammaxtheta{\mrP}. \label{eq:lambda12-defn-in-LPV-input-output-response}  
\end{align}}
\vspace{-5mm}\end{lemma}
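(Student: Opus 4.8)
The plan is to mirror the parameter-dependent Lyapunov argument already used in the proof of \cref{lem:xin_bound}, now with a nonzero input. First I would introduce the Lyapunov function $V(t,\theta)\trieq\mrx^{\top\!}(t)\mrP(\theta)\mrx(t)$ with $\mrP(\theta)$ the matrix satisfying \eqref{eq:appendix-Lyapunov-stability-cond}, and record the sandwich $\lambda_1\norm{\mrx(t)}^2\le V(t,\theta)\le\lambda_2\norm{\mrx(t)}^2$ with $\lambda_1,\lambda_2$ as in \eqref{eq:lambda12-defn-in-LPV-input-output-response}. Differentiating $V$ along the trajectories of \eqref{eq:lpv-system-general} and grouping the autonomous terms gives $\dot V=\mrx^{\top\!}(\lrangle{\mrA^{\top\!}(\theta)\mrP(\theta)}+\dot{\mrP}(\theta))\mrx+2\mrx^{\top\!}\mrP(\theta)\mrB(\theta)\mru$; applying \eqref{eq:appendix-Lyapunov-stability-cond} pointwise in $(\theta,\dot\theta)\in\Theta\times\Theta_d$ to the quadratic term yields $\dot V\le-\mu V+2\mrx^{\top\!}\mrP(\theta)\mrB(\theta)\mru$.

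Next I would bound the forcing term by Cauchy--Schwarz, $2\mrx^{\top\!}\mrP(\theta)\mrB(\theta)\mru\le2\norm{\mrx}\,\max_{\theta\in\Theta}\norm{\mrP(\theta)\mrB(\theta)}\,\linfnormtruc{\mru}{t}$, using $\norm{\mru(s)}\le\linfnormtruc{\mru}{t}$ for $s\le t$. Converting $\norm{\mrx}$ through the Lyapunov sandwich leaves a scalar differential inequality of the form $\dot V\le-\mu V+c\sqrt V$ (or, after a peak substitution, $\dot V\le-\mu V+c'\,\linfnormtruc{\mrx}{t}$), where the constant collects $\max_\theta\norm{\mrP(\theta)\mrB(\theta)}$, $\linfnormtruc{\mru}{t}$, and the appropriate powers of $\lambda_1,\lambda_2$. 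I would then integrate this inequality with the comparison lemma, using $V(0)=0$ (since $\mrx(0)=0$), to obtain an explicit bound on $V(t)$ that decays at rate $\mu$ and carries the $(1-\te^{-\mu t})$ transient factor. Resubstituting $\norm{\mrx(t)}\le\sqrt{V(t)/\lambda_1}$ and collecting the $\lambda_1,\lambda_2$ conversions produces exactly the gain $\beta_\Theta(t,\mu,\mrP(\theta),\mrB(\theta))$ of \eqref{eq:beta-defn-in-LPV-input-output-response}, establishing \eqref{eq:x-bound-u-sharp-exp}.

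The main obstacle is the forcing term: because $\mru$ is controlled only through its truncated peak norm $\linfnormtruc{\mru}{t}$, the right-hand side of the differential inequality depends on the state itself (through $\norm{\mrx}$, equivalently $\sqrt V$), so the inequality is nonlinear and self-referential rather than a standard linear one. Resolving it cleanly requires either the substitution $W=\sqrt V$ to linearize before invoking the comparison lemma, or a bootstrapping peak argument in which one bounds $\sup_{s\le t}\norm{\mrx(s)}$ in terms of itself and then cancels the common factor; the $\sqrt{\lambda_2/\lambda_1}$ ratio appearing inside $\beta_\Theta$ is precisely an artifact of mixing the lower ($\lambda_1$) and upper ($\lambda_2$) eigenvalue bounds across these two conversions, and a looser chaining of them (rather than the tightest possible estimate) is what yields the stated closed form. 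A secondary point to handle carefully is that $\mrP(\theta)$ is parameter-dependent, so $\dot V$ contains the $\dot{\mrP}(\theta)$ term and \eqref{eq:appendix-Lyapunov-stability-cond} must be applied uniformly over the admissible $(\theta,\dot\theta)$ trajectories guaranteed by \cref{assum:theta-thetadot-bounded}.
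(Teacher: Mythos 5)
Your proposal is correct and rests on the same core machinery as the paper's proof: the parameter-dependent Lyapunov function $V(t,\theta)=\mrx^{\top\!}(t)\mrP(\theta)\mrx(t)$, the inequality $\dot V\le-\mu V+2\mrx^{\top\!}\mrP(\theta)\mrB(\theta)\mru$ obtained by applying \eqref{eq:appendix-Lyapunov-stability-cond} uniformly over $\Theta\times\Theta_d$, and integration from $V(0)=0$. The one place you diverge is in how the state-dependent forcing term is resolved. The paper does this in two stages: it first derives the conservative peak bound $\linfnormtruc{\mrx}{t}\le\sqrt{\lambda_2/\lambda_1}\,\tfrac{2}{\mu\lambda_1}\max_{\theta\in\Theta}\norm{\mrP(\theta)\mrB(\theta)}\linfnormtruc{\mru}{t}$ by an ultimate-bound (level-set) argument --- $\dot V\le 0$ whenever $\norm{\mrx}$ exceeds that radius --- and then substitutes this bound for $\linfnormtruc{\mrx}{t}$ in the integrated inequality $V(t,\theta)\le\tfrac{2}{\mu}(1-e^{-\mu t})\linfnormtruc{\mrx}{t}\max_{\theta\in\Theta}\norm{\mrP(\theta)\mrB(\theta)}\linfnormtruc{\mru}{t}$; that specific chaining is exactly what generates the fourth-root factor $\sqrt{\sqrt{\lambda_2/\lambda_1}}$ in $\beta_\Theta$. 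Either of your two alternatives (the $W=\sqrt V$ linearization, or bootstrapping the peak of $\norm{\mrx}$ and cancelling the common factor) is also valid and in fact yields a slightly \emph{sharper} gain --- roughly $\tfrac{2}{\mu\lambda_1}(1-e^{-\mu t/2})\max_{\theta}\norm{\mrP\mrB}$ or $\tfrac{2}{\mu\lambda_1}(1-e^{-\mu t})\max_{\theta}\norm{\mrP\mrB}$, both of which are bounded by $\beta_\Theta$ since $1-e^{-\mu t/2}\le\sqrt{1-e^{-\mu t}}$ and $\lambda_2\ge\lambda_1$ --- so \eqref{eq:x-bound-u-sharp-exp} follows a fortiori; to reproduce the literal closed form of \eqref{eq:beta-defn-in-LPV-input-output-response} you would need the paper's looser two-stage chaining, as you yourself anticipated.
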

\begin{proof}
Define
\begin{equation}\label{eq:Vttheta-defn}
  V(t,\theta)=\mrx^{\top\!}(t)\mrP(\theta)\mrx(t).  
\end{equation}
Then, 
\begin{align}
    \dot V(t,\theta) &= \mrx^{\top\!}\left( \langle\mrA^{\top\!}(\theta)\mrP(\theta) \rangle+  \dot{\mrP}(\theta)\right)\mrx + 2\mrx^{\top\!}\mrP(\theta)\mrB(\theta)u \leq -\mu\mrx^{\top\!}\mrP(\theta)\mrx + 2\mrx^{\top\!}\mrP(\theta)\mrB(\theta)\mru \label{eq:Vdot-V-u}
    \\
    & \leq -\mu \lambda_1 \norm{\mrx(t)}^2 + 2 \norm{\mrx(t)}\norm{\mrP(\theta)\mrB(\theta)}\norm{\mru(t)}\leq 0,\quad \forall \norm{\mrx(t)}\geq \frac{2}{\mu\lambda_1}\max_{\theta\in\Theta}\norm{\mrP(\theta)\mrB(\theta)}\linfnormtruc{\mru}{t}.  \nonumber
\end{align}
The preceding inequality \red{and the fact that $V(t,\theta) \leq \lambda_2\norm{\mrx(t)}^2$ indicate} \begin{equation}
    V(t,\theta)\leq \lambda_2 \left(\frac{2}{\mu\lambda_1}\max_{\theta\in\Theta}\norm{\mrP(\theta)\mrB(\theta)}\linfnormtruc{\mru}{t}\right)^2,
\end{equation}
which, together with $ V(t,\theta)\geq \lambda_1 \norm{\mrx(t)}^2$, implies that
\begin{equation}\label{eq:x-bound-u-conservative}
    \linfnormtruc{\mrx}{t}\leq \sqrt{\frac{\lambda_2}{\lambda_1}}\frac{2}{\mu\lambda_1}\max_{\theta\in\Theta}\norm{\mrP(\theta)\mrB(\theta)}\linfnormtruc{\mru}{t}.
\end{equation}
Next, we show that the preceding bound on $\mrx$  can be further improved. Equations \red{\cref{eq:Vttheta-defn,eq:Vdot-V-u}}  indicate that
\begin{flalign*}
    \dot V(t,\theta)  \leq -\mu V(t,\theta) +  2\norm{\mrx(t)}\norm{\mrP(\theta)\mrB(\theta)}\norm{\mru(t)},
\end{flalign*}which, together with $V(0,\theta)=0$, leads to
{\begin{flalign}
    V(t,\theta)\leq &e^{-\mu t} V(0,\theta)+  \int_0^{t} e^{-\mu (t-\tau)} 2\norm{\mrx(\tau)}\norm{\mrP(\theta(\tau))\mrB(\theta(\tau))}\norm{\mru(\tau)}d\tau \nonumber \\
    \leq &2  \linfnormtruc{\mrx}{t}\max_{\theta\in\Theta}\norm{\mrP(\theta)\mrB(\theta)}\linfnormtruc{\mru}{t} \int_0^{t} e^{-\mu (t-\tau)} d\tau \nonumber  \\
    \leq  &2\linfnormtruc{\mrx}{t}\max_{\theta\in\Theta}\norm{\mrP(\theta)\mrB(\theta)}\linfnormtruc{\mru}{t}  \frac{1}{\mu} (1-e^{-\mu t}). \label{eq:Vtheta-mu-relation}
\end{flalign}}
By plugging \eqref{eq:x-bound-u-conservative} into \red{\eqref{eq:Vtheta-mu-relation}} and further considering $V(t,\theta)\geq \lambda_1 \norm{\mrx(t)}^2$, we arrive at \eqref{eq:x-bound-u-sharp-exp}. \qedclosed
\end{proof}

The following lemma refines the result in \cref{lem:lpv-time-domain-response} and establishes {\it individual}  bounds for the states of \eqref{eq:lpv-system-general} before and after an arbitrary time instant $T$. 
\begin{lemma}\label{lem:lpv-time-domain-response-split@T}
Given an LPV system in \eqref{eq:lpv-system-general} with $\mrx(0)=0$, {subject to \cref{assum:theta-thetadot-bounded}}, assuming that there exists a  PD and symmetric matrix $\mrP(\theta)$ and a constant $\mu>0$ such that 
\eqref{eq:appendix-Lyapunov-stability-cond} holds, we have
\begin{equation}\label{eq:x-bound-u-sharp-exp-0-T-infty}
\hspace{-1.2mm}
\norm{\mrx(t)}\leq \!
\left\{
  \!\!\!  \begin{array}{lr}
   \beta_\Theta(T,\mu,\mrP,\mrB)\linfnormtrucinterval{\mru}{0,T},\  \forall 0\leq t\leq T, \\
      \!\!\! \begin{array}{l}
 \sqrt{\frac{\lambda_2}{\lambda_1}}\beta_\Theta(T,\mu,\mrP,\mrB) \linfnormtrucinterval{\mru}{0,T} 
 \, \negmedspace {} + \beta_\Theta(\infty,\mu,\mrP,\mrB)\linfnormtrucinterval{\mru}{T,t},\ \forall t>T,
    \end{array} 
    \end{array}
    \right.
\end{equation}
where $\lambda_i$ ($i=1,2$) are defined in \eqref{eq:lambda12-defn-in-LPV-input-output-response}.
\vspace{-1mm}\end{lemma}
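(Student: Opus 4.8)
The plan is to exploit the linearity of the LPV system \eqref{eq:lpv-system-general} in both the state and the input: I would split the input into a piece supported on $[0,T]$ and a piece supported on $[T,t]$, bound the two resulting state responses separately using \cref{lem:lpv-time-domain-response} and an autonomous-decay estimate, and then recombine them with the triangle inequality. The two branches of \eqref{eq:x-bound-u-sharp-exp-0-T-infty} are obtained one at a time.

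For the first branch, where $0\le t\le T$, nothing new is needed. \cref{lem:lpv-time-domain-response} already gives $\norm{\mrx(t)}\le \beta_\Theta(t,\mu,\mrP,\mrB)\linfnormtruc{\mru}{t}$, and since $\beta_\Theta(\cdot,\mu,\mrP,\mrB)$ is nondecreasing in its time argument (the only $t$-dependent factor in \eqref{eq:beta-defn-in-LPV-input-output-response} is $\sqrt{1-e^{-\mu t}}$, which increases) and the truncated norm $\linfnormtruc{\mru}{t}$ is nondecreasing in $t$, both factors are majorized by their values at $T$. This yields $\norm{\mrx(t)}\le \beta_\Theta(T,\mu,\mrP,\mrB)\linfnormtrucinterval{\mru}{0,T}$ for all $t\in[0,T]$, which is exactly the first case.

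For the second branch, where $t>T$, I would write $\mru=\mru_1+\mru_2$ with $\mru_1$ equal to $\mru$ on $[0,T]$ and zero afterwards, and $\mru_2$ equal to $\mru$ on $(T,\infty)$ and zero before $T$, and correspondingly $\mrx=\mrx_1+\mrx_2$ for the zero-initial-condition responses. The response $\mrx_2$ vanishes identically on $[0,T]$ (zero input, zero initial state), so on $[T,t]$ it is the zero-initial-state response started at time $T$; rerunning the Lyapunov argument of \cref{lem:lpv-time-domain-response} on the window $[T,t]$ (using the differential inequality \eqref{eq:Vdot-V-u} and the coarse bound \eqref{eq:x-bound-u-conservative}, both of which only use $V=0$ at the left endpoint) gives $\norm{\mrx_2(t)}\le \beta_\Theta(t-T,\mu,\mrP,\mrB)\linfnormtrucinterval{\mru}{T,t}\le \beta_\Theta(\infty,\mu,\mrP,\mrB)\linfnormtrucinterval{\mru}{T,t}$, the last step again by monotonicity of $\beta_\Theta$. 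Meanwhile $\mrx_1$ coincides with $\mrx$ on $[0,T]$, so $\mrx_1(T)=\mrx(T)$ with $\norm{\mrx_1(T)}\le \beta_\Theta(T,\mu,\mrP,\mrB)\linfnormtrucinterval{\mru}{0,T}$ by \cref{lem:lpv-time-domain-response}; for $t>T$ the input $\mru_1$ is zero, so $\mrx_1$ evolves autonomously and \eqref{eq:appendix-Lyapunov-stability-cond} gives $\dot V\le -\mu V$, hence $V(t)\le e^{-\mu(t-T)}V(T)\le V(T)$. Sandwiching with $\lambda_1\norm{\mrx_1(t)}^2\le V(t)$ and $V(T)\le \lambda_2\norm{\mrx_1(T)}^2$ (exactly as in the proof of \cref{lem:xin_bound}) yields $\norm{\mrx_1(t)}\le \sqrt{\lambda_2/\lambda_1}\,\norm{\mrx_1(T)}\le \sqrt{\lambda_2/\lambda_1}\,\beta_\Theta(T,\mu,\mrP,\mrB)\linfnormtrucinterval{\mru}{0,T}$. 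Adding the two bounds through $\norm{\mrx(t)}\le\norm{\mrx_1(t)}+\norm{\mrx_2(t)}$ reproduces the second case of \eqref{eq:x-bound-u-sharp-exp-0-T-infty}.

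I expect the only delicate point to be the re-use of \cref{lem:lpv-time-domain-response} on the shifted window $[T,t]$: because the system is genuinely time-varying one cannot simply translate time, so I would re-derive the estimate from the differential inequality \eqref{eq:Vdot-V-u} started at $t=T$, noting that the elapsed-time factor $\tfrac1\mu(1-e^{-\mu(t-T)})$ produced by $\int_T^t e^{-\mu(t-\tau)}d\tau$ is precisely what turns $\beta_\Theta(t,\cdot)$ into $\beta_\Theta(t-T,\cdot)$ and is bounded by its $t\to\infty$ value. Everything else — the monotonicity observations and the autonomous decay giving the $\sqrt{\lambda_2/\lambda_1}$ amplification on the $[0,T]$ contribution — is routine.
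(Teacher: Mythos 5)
Your proposal is correct and follows essentially the same route as the paper: the paper's proof likewise obtains the first branch from Lemma~\ref{lem:lpv-time-domain-response} plus monotonicity of $\beta_\Theta$ in its time argument, and the second branch by superposing the autonomous evolution from the initial condition $\mrx(T)$ (bounded as in Lemma~\ref{lem:xin_bound}, giving the $\sqrt{\lambda_2/\lambda_1}$ factor) with the zero-initial-state forced response restarted at $T$ (bounded by rerunning Lemma~\ref{lem:lpv-time-domain-response} on $[T,t]$). Your input-splitting $\mru=\mru_1+\mru_2$ is just an equivalent phrasing of that superposition, and your remark about re-deriving the shifted-window estimate rather than time-translating is a valid point of care that the paper leaves implicit.
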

\begin{proof}
The first part follows directly from Lemma~\ref{lem:lpv-time-domain-response}, along with the fact that for $0\leq t\leq T$, $\beta_\Theta(t,\mu,\mrP,\mrB)$ $\leq \beta_\Theta(T,\mu,\mrP,\mrB)$. The second part results from application of Lemmas~\ref{lem:xin_bound} and \ref{lem:lpv-time-domain-response} considering the state evolution starting from $T$ with the initial condition $x(T)$ , as well as the fact of $\beta_\Theta(t-T,\mu,\mrP,\mrB)\leq \beta_\Theta(\infty,\mu,\mrP,\mrB)$ for any $t>T$. \qedclosed
\end{proof}

\section{Robust Adaptive Control of LPV Systems With Unmatched Uncertainties}\label{sec:l1_architecture}
{A unique feature of an $\mathcal{L}_1$AC architecture is a  low-pass filter that decouples the estimation loop from the control loop, thereby allowing for arbitrarily fast adaptation without sacrificing the robustness \red{\cite[Section~1.3]{naira2010l1book}}. In the presence of an unknown input gain as considered in this paper, the filter is constructed through a feedback gain matrix $K\in \mbR^{m \times m}$ and an $m\times m$ strictly proper transfer function matrix $D(s)$, which lead, for all $\omega\in \Omega$, to a strictly proper stable low-pass filter
$ \mcC(s) \triangleq \omega KD(s)(\mbI_m + \omega KD(s))^{-1},$
with DC gain $C(0)= \mathbb{I}_m$. The gain  $K$ and the filter $D(s)$ will be used in the control law in \eqref{eq:l1-control-law}}. See also \cref{rem:filter-in-control-law} for more insights.  For simplifying the derivations, we set $D(s) =\frac{1}{s}\mbI_m$, which leads to 
\begin{equation}\label{eq:filter-defn}
\mcC(s) = \omega K(s\mbI_m+\omega K)^{-1}.
\end{equation}
{Hereafter, we use the notation $\mcC$ (i.e., without argument) to denote the input-output mapping corresponding to the transfer function matrix $\mcC(s)$.}  We now introduce a few notations to  be used later in the controller design and performance analysis. Let $\mcH_{xm}(\theta)$ and $\mcH_m(\theta)$ respectively denote the input-to-state and input-to-output mappings{\footnote{{We use mappings instead of transfer functions since time-varying systems are considered.}}}of the system 
\begin{equation}\label{eq:lpv-system-no-nonlinearity}
   \dot{x}(t) = A_m(\theta)x(t)+B(\theta)u(t), \  
   y(t) = C_m(\theta)x(t),
\end{equation}
and 
$\mcH_{xum}(\theta)$ and $\mcH_{um}(\theta)$ respectively denote the input-to-state and input-to-output mappings of the system
\begin{equation}\label{eq:lpv-system-no-nonlinearity-Bum}
\dot{x}(t) = A_m(\theta)x(t)+\Bu(\theta)u(t),\ y(t) = C_m(\theta)x(t). 
\end{equation}
We further define
\begin{subequations}\label{eq:Gxm-Gxum-defn}
\begin{align}
    \mcG_{xm}(\theta)&\triangleq\mcH_{xm}(\theta)(\mbI_m-\mcC),\label{eq:Gxm-defn}\\
      \mcG_{xum}(\theta) &\triangleq
      \mcH_{xum}(\theta)-\mcH_{xm}(\theta)\mcC \barH(\theta).\label{eq:Gxum-defn}
\end{align}
\end{subequations}
In \eqref{eq:Gxum-defn}, $\barH(\theta)$ is a $\theta$-dependent mapping designed to mitigate the effect of the unmatched uncertainty, detailed in \cref{sec:sub-unmatched-dist-rejection}. 
For presentation of \cref{sec:sub-unmatched-dist-rejection}, we also define 
{
\begin{equation}\label{eq:Gm-Gum-defn}
\mcG_{m}(\theta) \triangleq C_m(\theta)\mcG_{xm}(\theta),\    \mcG_{um}(\theta) \triangleq C_m(\theta)\mcG_{xum}(\theta),
\end{equation}}where $C_m(\theta)$ is the output matrix in \eqref{eq:plant-dynamics}.
\subsection{Proposed control architecture}
The dynamics in \cref{eq:plant-dynamics-f12-a-fhat} is equivalent to
{\begin{equation}\label{eq:plant-dynamics-sigma-txu}
\dot{x}(t) =  A_m(\theta)x(t)+ B(\theta) u(t) + \sigma(t,x(t),u(t)), \hspace{-1.6mm}
\end{equation}
where  
\begin{equation}\label{eq:sigma-txu-defn}
\sigma(t,x(t),u(t)) \triangleq B(\theta)(\omega-\mbI_m)u(t) +{\bar f}(t,x(t)),
\end{equation}
with ${\bar f}(t,x(t))$ defined in \cref{eq:f1-f2-f-relation}}.
We now describe the proposed robust adaptive control architecture (see Fig.~\ref{fig:control-architecture}) consisting of a state predictor, an adaptive law and a low-pass filtered control law. 
\begin{figure}[htb]
\vspace{-3mm}
    \centering
    \includegraphics[width=0.4\columnwidth]{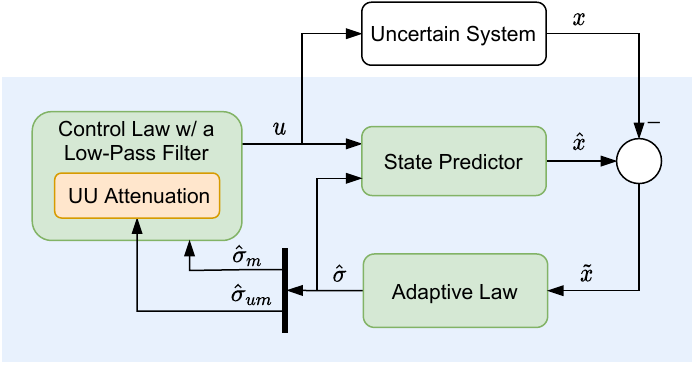}
        \vspace{-2mm}
    \caption{Proposed robust adaptive control architecture}
    \label{fig:control-architecture}
    \vspace{-3mm}
\end{figure}

{ \bf State Predictor:} The state predictor is selected as:
{
\begin{equation}\label{eq:state-predictor}
 \begin{split}
\dot{\hat{x}}(t)&=  A_m(\theta)x(t)+  B(\theta) u(t) +\hsigma(t) -a\tilde{x}(t),
  \hspace{-3mm}  
\end{split} 
\end{equation}}where $\hat{x}(0) = {x}_0$, $\hat{x}(t)\in\mbR^n$ is the predicted state vector,  
$\tilde{x}(t) \triangleq \hat{x}(t)-x(t)$ is the prediction error, 
$a$ is a positive scalar, and $\hsigma(t)$ is the adaptive estimation of the {\it lumped} uncertainty, $\sigma(t,x,u)$, defined in \cref{eq:sigma-txu-defn}.


{\bf Adaptive Law:} The estimation $\hsigma(t)$ is updated by the following piecewise-constant estimation law (similar to that in \cite[Section~3.3]{naira2010l1book}):
\begin{equation}\label{eq:adaptive_law}
\begin{split}
   \hsigma(t) 
    &=  
    \hsigma(iT)
    , \quad t\in [iT, (i+1)T), \\
 \hsigma(iT) 
&= -\Upsilon(T)\tilde{x}(iT),
\end{split}
\end{equation}
for $i=0,1,\cdots$, where $T$ is the estimation sampling time, and $\Upsilon(T) \trieq a/({e^{a T}-1})$.
We further let 
\begin{equation}\label{eq:hsigma_m_um_hsigma_relation}
    \begin{bmatrix}
    \hsigma_m(t) \\
    \hsigma_{um}(t) 
\end{bmatrix} = 
\begin{bmatrix}[1]
    \mbI_m & 0 \\
    0 & \mbI_{n-m}
\end{bmatrix}
\bar B^{-1}(\theta(t))\hsigma(t),
\end{equation}
Note that $\hsigma_m(t)$ and $\hsigma_{um}(t)$ represent the \textit{matched} and \textit{unmatched} components of $\hsigma(t)$, respectively.
\begin{remark}
The parameter $a$ in \eqref{eq:state-predictor} must be positive to ensure the stability of the prediction error dynamics (defined later in \eqref{eq:prediction-error}) and accurate estimation of the uncertainty (defined in \eqref{eq:sigma-txu-defn}). Based the estimation error bound defined later in \eqref{eq:tilsigma-bound}, the smaller $a$ is, the 
more accurate the uncertainty estimation is. On the other hand, setting $a$ too small, together with a small sampling time ($T$ near 0), will make the estimation gain $\Upsilon(T)$ in \eqref{eq:adaptive_law} approach $\infty$, potentially causing numerical issues.
\vspace{0mm} \end{remark}
{\bf Control Law:}
The control signal $u(t)$ is generated as the output of the following dynamic system: 
{\begin{equation}\label{eq:l1-control-law}
    \dot u(t) = -K\left(u(t)+\hsigma_m(t) +\hat{\eta}_2(t)- K_r(\theta)r(t)\right),
\end{equation}}where 
$\hat{\eta}_2 \triangleq \barH(\theta)\hsigma_{um}$. Here, $\barH(\theta)$ denotes a dynamic feedforward mapping for attenuating the effect of UUs, which will be detailed in \cref{sec:sub-unmatched-dist-rejection}.
{\begin{remark}\label{rem:filter-in-control-law}
Equation \cref{eq:l1-control-law} 
can be rewritten as $u=-\omega^{-1}\mcC\left( (\mbI_m-\omega)u+\hat\sigma_m +\hat\eta_2 - K_r(\theta)r\right)$, which clearly shows the  filter $\mcC(s)$ embedded in the control law. \red{As shown later in \cref{lem:ref-id-bound} and \cref{rem:ref-id-bound}, a higher bandwidth of $\mcC(s)$ is desired for satisfying the stability condition \cref{eq:l1-stability-condition} and achieving better performance (i.e., smaller error bound) but may lead to reduced robustness against time delay, as illustrated in \cite[Sections 1.3, and 2.6]{naira2010l1book} for the case of a simple LTI nominal system.} 
\end{remark}}

\begin{remark}\label{rem:not-filer-r}
To improve tracking performance when $K$ is small (indicating the filter in \eqref{eq:filter-defn} has a low (nominal) bandwidth), one can choose to use a different filter of a higher bandwidth for the feedforward signal, $K_r(\theta)r(t)$, or not to filter it. In this case, the control law in \eqref{eq:l1-control-law} should be adapted to $u = u_{ad} + \mcC_r \left(K_r(\theta)r)\right)$, where $\mcC_r$ represents a filter (with $\mcC_r=\mbI_m$ corresponding to the no-filtering case) and $u_{ad}(t)$ is generated by {$\dot u_{ad}(t) = -K\left( u_{ad}(t)+\hsigma_m(t) +\hat{\eta}_2(t)\right)$}
and $\hsigma_m$ and $\hat \eta_2$ having the same meaning as in \eqref{eq:l1-control-law}. The analysis in \cref{sec:analysis_l1} based on the control law in \eqref{eq:l1-control-law} can be straightforwardly adjusted accordingly, and is omitted.
\vspace{-3mm} \end{remark}
\subsection{Unmatched uncertainty attenuation via PPG minimization}\label{sec:sub-unmatched-dist-rejection}
\begin{figure}[h]
\vspace{-4mm}
    \centering
    \includegraphics[width=.6\textwidth]{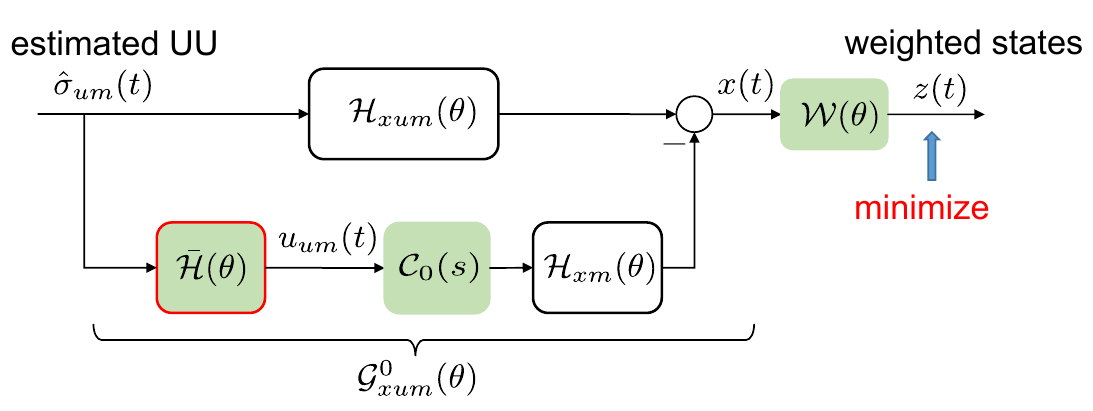}
    \vspace{-2mm}
    \caption{Scheme for unmatched uncertainty attenuation}
    \label{fig:unmatched_compensation}
    \vspace{-2mm}
\end{figure}
Existing \loneAC~architectures often rely on the direct inversion of the nominal LTI systems to cancel the effect of unmatched uncertainties (UUs) on the outputs under the assumption that the nominal systems are square and minimum phase \cite{xargay2010unmatched}.
The idea cannot be applied here since it is very challenging, if not impossible, to compute the inverse of an LPV system, or even define it. Here, we propose a new method for unmatched uncertainty attenuation (UUA) based on PPG minimization. The main idea is as follows. With the UU estimation, $\hsigma_{um}(t)$, we design a feedforward LPV mapping $\barH(\theta)$ to produce a control input $\uum(t)$ that tries to minimize the PPG gain from $\hsigmaum(t)$ to the weighted output $z(t)\in \mbR^n$. The idea is shown in Fig.~\ref{fig:unmatched_compensation}, where the weighting function $\mcW(\theta)$ can be used to tune the frequency range where we want to suppress the effect of the UUs, and emphasize different output channels. $\mcC_0(s)$ is a low-pass filter defined by $\mcC_0(s) \triangleq \omega_0 K(s\mbI_m+\omega_0 K)^{-1},$
which is obtained by plugging a nominal input gain $\omega_0$ (selected to be $\mbI_m$ in \eqref{eq:state-predictor}) into \eqref{eq:filter-defn}.   Letting $\mcC_0$ denote the input-output mapping of $\mcC_0(s)$,  with the architecture in \cref{fig:unmatched_compensation}, the mapping from $\hsigmaum(t)$ to $x(t)$ can be represented by $\mcG_{xum}^0(\theta) \triangleq
      \mcH_{xum}(\theta)-\mcH_{xm}(\theta)\mcC_0\barH(\theta)$, which will be exactly $\mcG_{xum}(\theta)$ in the stability condition \eqref{eq:l1-stability-condition} if $\omega_0 = \omega$. 
\begin{remark}\label{rem:barH-tradeoff}
If we set $\mathcal{W}(\theta)=\textup{diag}(C_m(\theta),0)$, then $z=[y,0]^{\top\!}$, which means that $\barH(\theta)$ is for attenuating the effect of the UUs on the outputs. This is desired for improving the tracking performance. 
On the other hand, if we set $\mathcal{W}(\theta) = \mbI_n$, then we have $z=x,$ which indicates that $\barH(\theta)$ is for mitigating the effect of the UUs on all the states. By setting  $\mathcal{W}(\theta) = \mbI_n$, we can minimize $\lonenormbar{\mcG_{xum}^0(\theta)}$ and potentially $\lonenormbar{\mcG_{xum}(\theta)}$ as well, which allows for the stability condition \eqref{eq:l1-stability-condition} to hold under more severe uncertainties. Thus, $\mathcal{W}(\theta)$ can  be tuned to balance the tracking performance and the robustness margin for satisfying the stability condition \eqref{eq:l1-stability-condition}. 
\vspace{0mm} \end{remark}
\begin{remark}
We design $\barH(\theta)$ to minimize the PPG as the performance index from the UUs to the weighted states, because we aim to provide a transient performance guarantee. If the transient performance is not a major concern, then other performance indices can be utilized, such as $\mathcal{L}_2$ gain.
\vspace{0mm} \end{remark}

\begin{remark}
The proposed approach for UUA can be easily adapted for LTI systems. It does not need strict-feedback forms required in backstepping based approaches \cite{yao1997adaptive,koshkouei2004backstepping,sun2015global-unmatched,li2014continuous}, and relaxes square system and non-minimum phase assumptions needed in dynamic inversion based approaches \cite{xargay2010unmatched}. Combined with the small-gain theorem (used to derive \eqref{eq:l1-stability-condition}), our approach deals with state- and time-dependent uncertainties, unlike the $H_\infty$ \cite{wei2010composite-unmatched} and coordinate transformation \cite{che2012eigenvalue} approaches that  handle only bounded disturbances. 
\vspace{-1mm} \end{remark}
Next, we will show how to design the mapping $\barH(\theta)$ leveraging LMI optimization-based control design \cite{Scherer97Multi,Apk98,sato2008inverse}. Let us denote the state-space realization of $\mcW(\theta)$ as $\left(A_\mcW(\theta), B_\mcW(\theta),C_\mcW(\theta),D_\mcW(\theta)\right)$, and the order of $\mcW(\theta)$, i.e., the number of state variables, as $n_\mcW$. The generalized plant  for designing $\barH(\theta)$ is then given by 
{\begin{flalign}
\dot{\bar{x}}(t) &= \bar A(\theta )\bar x(t) + {{\bar B}_1}(\theta ){\hsigmaum}(t) + {{\bar B}_2}(\theta ){\uum}(t),\nonumber\\
z(t) &= {{\bar C}_1}(\theta )\bar x(t) + {{\bar D}_{11}}(\theta ){\hsigmaum}(t) + {{\bar D}_{12}}(\theta ){\uum}(t),  \label{eq:lpv-generalized-plant} \hspace{-2mm} \\
\bar y(t) &= {{\bar C}_2}(\theta )\bar x(t) + {{\bar D}_{21}}(\theta ){\hsigmaum}(t) \nonumber,
\vspace{-3mm}
\end{flalign}}
where
$\bar x \trieq \left[\begin{array}{*{20}{c}}
{{x_{um}^{\top\!}}} &
{{x_m^{\top\!}}} & 
{{x_\mcW^{\top\!}}}
\end{array}\right]^{\top\!}$,
{\setlength{\arraycolsep}{1.4pt}
$$ \left[{\begin{array}{*{20}{c}}
{\bar A }&\vline& {{{\bar B}_1} }&\vline& {{{\bar B}_2} }\\
\hline
{{{\bar C}_1} }&\vline& {{{\bar D}_{11}} }&\vline& {{{\bar D}_{12}} }\\
\hline
{{{\bar C}_2} }&\vline& {{{\bar D}_{21}} }&\vline& 
\end{array}} \right]\trieq 
{\renewcommand*{\arraystretch}{1.1}
\left[ {\begin{array}{*{20}{c}}
{{A_m} }&0&0&\vline& {{\Bu} }&\vline& 0\\
0&{{A_m} }&0&\vline& 0&\vline& {{B} }\\
{{B_\mcW }{C} }&{{B_\mcW }{C} }&{{A_\mcW }}&\vline& 0&\vline& 0\\
\hline
{{D_\mcW }{C} }&{{D_\mcW }{C} }&{{C_\mcW }}&\vline& 0&\vline& 0\\
\hline
0&0&0&\vline& {{\mbI_{n - m}}}&\vline& 
\end{array}}  \right]},$$}\vspace{-1mm} \\
with the dependence of the matrices on $\theta$ omitted. Here, $x_{um}(t)
\in \mbR^n$, $x_{m}(t)
\in \mbR^n$ and $x_{W}(t)
\in \mbR^{n_\mcW}$ are the state variables associated with $\mcH_{um}(\theta)$, $\mcH_{m}(\theta)$ and  $\mcW(\theta)$, respectively. Obviously, the dimension of the generalized plant in \eqref{eq:lpv-generalized-plant} is $\bar n = 2n+n_\mcW$. 
We consider $\barH(\theta)$ of the same order as the generalized plant \eqref{eq:lpv-generalized-plant} with the state-space realization:
\begin{equation}\label{eq:barH-ss-form}
\left\{
    \begin{aligned}
        \dot x_\barH(t) & = A_\barH(\theta) x_\barH(t) + B_\barH(\theta) \bar y(t)\\
        \uum(t) & = C_\barH(\theta) x_\barH(t) + D_\barH(\theta) \bar y(t)
    \end{aligned}\right.,
\end{equation}
where $x_\barH(t)\in\mbR^{\bar n}$.
The following lemma presents synthesis conditions for designing $\barH(\theta)$ in order to minimize the PPG from $\hsigmaum(t)$ to $z(t)$. 

\begin{lemma}\label{lem:p2p-synthesis-final}
Consider the LPV generalized plant in \eqref{eq:lpv-generalized-plant} {subject to \cref{assum:theta-thetadot-bounded}}. There exists an LPV controller \eqref{eq:barH-ss-form} 
enforcing internal stability and a bound $\gamma$ on the PPG of the closed-loop system, whenever there exist PD symmetric matrices $X(\theta)$ and $Y(\theta)$, a PD quadruple of state-space data ($\hat{A}(\theta), \hat{B}(\theta), \hat C(\theta), \hat D(\theta)$) and positive constants $\mu$ and $\upsilon$ such that the conditions \eqref{eq:p2p-lim-final-1} and \eqref{eq:p2p-lim-final-2} hold. In such case, the controller in \eqref{eq:barH-ss-form} is given by
\begin{equation}\label{eq:lpv-constroller-construction-final}
    \left\{ \begin{array}{ll}
{A_{\bar {\mathcal H}}}(\theta ) = \hat A{(X - Y)^{ - 1}}, &{B_{\bar {\mathcal H}}}(\theta ) = \hat B,\\
{C_{\bar {\mathcal H}}}(\theta ) = \hat C{(X - Y)^{ - 1}}, &
{D_{\bar {\mathcal H}}}(\theta ) = \hat D.
\end{array} \right.
\end{equation}
\vspace{-3mm}
{
\setlength{\arraycolsep}{3.0pt}
\begin{flalign}
\left[ {\begin{array}{*{20}{c}}
M_{11}& \star & \star \\
M_{21} &{ - \dot Y +  \left\langle \bar AY \right \rangle + \mu Y}& \star \\
M_{31} &{{\bar B}_1^ \top \!\! +\! \bar D_{21}^{\top\!}({{\hat B}^ \top }\! \!+ \!\hat D{\bar B}_2^ \top )}&{ - \upsilon {\mbI_{n - m}}}
\end{array}} \right] & < 0, \label{eq:p2p-lim-final-1} 
\end{flalign}
 \vspace{-1mm}
\begin{flalign}
\left[ {\begin{array}{*{20}{c}}
{\mu X}& \star & \star & \star \\
{\mu Y}&{\mu Y}& \star & \star \\
0&0&{(\gamma  - \upsilon ){\mbI_{n - m}}}& \star \\
{{{\bar C}_1}X \!-\! {{\bar D}_{12}}\hat C}&{{{\bar C}_1}Y}&{{{\bar D}_{11}}\! + \!{{\bar D}_{12}}\hat D{{\bar D}_{21}}}&{\gamma {\mbI_n}}
\end{array}} \right] &> 0, \label{eq:p2p-lim-final-2} 
\end{flalign}}where $M_{11} \trieq { - \dot X + \left\langle {\bar AX - {{\bar B}_2}\hat C} \right\rangle  + \mu X}$,  $M_{21}  \trieq  - \hat A - {{\bar B}_2}\hat C + \bar AX - \dot Y + {{\bar A}^ \top } + \mu {\mbI_{\bar n}}$, $M_{31} \trieq  {{{({{\bar B}_1} + {{\bar B}_2}\hat D{{\bar D}_{21}})}^ \top }}$.
\end{lemma}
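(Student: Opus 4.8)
The plan is to reduce the synthesis statement to the peak-to-peak analysis conditions of \cref{lem:ppg-bound-computaion-lmi} applied to the controlled interconnection, and then to convexify the resulting bilinear matrix inequalities through the standard linearizing change of variables for output-feedback LPV design \cite{Scherer97Multi,Apk98,sato2008inverse}. First I would form the closed-loop LPV system obtained by connecting the generalized plant \eqref{eq:lpv-generalized-plant} with the controller \eqref{eq:barH-ss-form}. With the augmented state $\xi \trieq [\bar x^{\top\!},\, x_\barH^{\top\!}]^{\top\!}$, the closed-loop realization $(\mcA_{cl}(\theta), \mcB_{cl}(\theta), \mcC_{cl}(\theta), \mcD_{cl}(\theta))$ is affine in the controller data $(A_\barH, B_\barH, C_\barH, D_\barH)$. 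For this particular plant one has $\bar C_2 = 0$, $\bar D_{12}=0$ and $\bar D_{11}=0$, so the expressions simplify considerably (e.g.\ $\mcD_{cl}=0$, $\mcC_{cl}=[\bar C_1,\,0]$, and $\mcA_{cl}$ is block upper-triangular), which is consistent with the purely feedforward role of $\barH(\theta)$ in attenuating $\hsigmaum$.

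Next I would invoke \cref{lem:ppg-bound-computaion-lmi}: a PD symmetric Lyapunov matrix $\mcP(\theta)$ and scalars $\mu,\upsilon,\gamma>0$ satisfying the analysis inequalities \eqref{eq:ppg-bound-computation-lmi1} and \eqref{eq:ppg-bound-computation-lmi2} in the closed-loop data certify both internal (quadratic/exponential) stability and the PPG bound $\gamma$. These inequalities are bilinear in $\mcP(\theta)$ and the controller matrices, so the heart of the proof is their convexification. Partitioning $\mcP(\theta)=\left[\begin{smallmatrix} X & N \\ N^{\top\!} & \bullet \end{smallmatrix}\right]$ and $\mcP^{-1}(\theta)=\left[\begin{smallmatrix} Y & M \\ M^{\top\!} & \bullet \end{smallmatrix}\right]$ with $NM^{\top\!}=\mbI-XY$, I would use the transformation $\Pi_1\trieq\left[\begin{smallmatrix} Y & \mbI \\ M^{\top\!} & 0 \end{smallmatrix}\right]$, the identity $\mcP\Pi_1=\left[\begin{smallmatrix} \mbI & X \\ 0 & N^{\top\!} \end{smallmatrix}\right]$, and congruence transformations by $\mathrm{diag}(\Pi_1,\mbI)$ and $\mathrm{diag}(\Pi_1,\mbI,\mbI)$ on the two analysis inequalities. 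Introducing the linearizing variables $\hat D=D_\barH$, $\hat C=C_\barH M^{\top\!}+D_\barH\bar C_2 Y$, $\hat B=NB_\barH+X\bar B_2 D_\barH$ and $\hat A=N A_\barH M^{\top\!}+N B_\barH\bar C_2 Y+X\bar B_2 C_\barH M^{\top\!}+X(\bar A+\bar B_2 D_\barH\bar C_2)Y$ renders the transformed inequalities affine in $(X,Y,\hat A,\hat B,\hat C,\hat D,\upsilon,\gamma)$ for fixed $\mu$, which is precisely the content of \eqref{eq:p2p-lim-final-1} and \eqref{eq:p2p-lim-final-2}. Sufficiency then follows because any feasible tuple admits a full-rank factorization of $\mbI-XY$; inverting the change of variables recovers an admissible controller, and the particular normalization of the coupling matrices $N,M$ collapses the reconstruction to the closed form \eqref{eq:lpv-constroller-construction-final} in which $A_\barH,C_\barH$ carry the factor $(X-Y)^{-1}$ while $B_\barH=\hat B$, $D_\barH=\hat D$.

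The hard part will be the parameter-dependence peculiar to the LPV setting, namely propagating the derivative term $\dot{\mcP}(\theta)$ correctly through the congruence. The product $\Pi_1^{\top\!}\dot{\mcP}\Pi_1$ a priori generates contributions from $\dot X,\dot Y,\dot N,\dot M$, whereas the synthesis inequalities \eqref{eq:p2p-lim-final-1} retain only $\dot X$ and $\dot Y$; the delicate step is to show that the $\dot N,\dot M$ terms either cancel through $\tfrac{d}{dt}(NM^{\top\!})=-\tfrac{d}{dt}(XY)$ or can be conservatively dominated, so that feasibility of \eqref{eq:p2p-lim-final-1} and \eqref{eq:p2p-lim-final-2} is genuinely sufficient. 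Relatedly, I must verify that the reconstructed controller is implementable without access to $\dot\theta$ and without forcing a structured (block-diagonal) Lyapunov function, which is the refinement over \cite{Apk98,sato2008inverse} claimed in the contributions; this is secured by the specific choice of coupling variables underlying \eqref{eq:lpv-constroller-construction-final}. Once the change of variables and this derivative bookkeeping are fixed, the remaining work — expanding the congruence products and matching blocks to \eqref{eq:p2p-lim-final-1} and \eqref{eq:p2p-lim-final-2} — is routine.
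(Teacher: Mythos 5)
There is a genuine gap. Your one-step linearizing change of variables (partition $\mcP$, $\mcP^{-1}$, congruence by $\mathrm{diag}(\Pi_1,\cdot)$, introduce $\hat A = NA_\barH M^{\top\!}+\dots$) is the standard Scherer-type derivation, and in this paper it produces the conditions of the \emph{preliminary} lemma in the appendix (\cref{lem:p2p-synthesis}, inequalities \eqref{eq:p2p-lim-1}--\eqref{eq:p2p-lim-2}), \emph{not} the stated conditions \eqref{eq:p2p-lim-final-1}--\eqref{eq:p2p-lim-final-2}. The two sets of LMIs are genuinely different: compare the $(2,1)$ blocks ($\mu\mbI_{\bar n}$ versus $\mu Y$), the $(4,2)$ blocks ($\bar C_1$ versus $\bar C_1 Y$), and the sign of the $\bar B_2$-term in the $(1,1)$ block ($+\bar B_2\tilde C$ versus $-\bar B_2\hat C$). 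The missing step — which is the entire point of \cref{lem:p2p-synthesis-final} relative to \cref{lem:p2p-synthesis} — is a \emph{second} congruence transformation by $\mathrm{diag}(\mbI_{\bar n},\tilde Y^{-1}(\theta),\mbI_{n-m})$ (resp.\ with an extra $\mbI_n$ block) applied to \eqref{eq:p2p-lim-1}--\eqref{eq:p2p-lim-2}, followed by the substitution $X=\tilde X$, $Y=\tilde Y^{-1}$ and a corresponding reparametrization of the controller variables. It is this second transformation, in the spirit of \cite{sato2008inverse}, that yields the derivative-free reconstruction \eqref{eq:lpv-constroller-construction-final}; asserting that the first congruence already lands "precisely" on \eqref{eq:p2p-lim-final-1}--\eqref{eq:p2p-lim-final-2} skips it entirely.

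Relatedly, your proposed resolution of the "hard part" is not how the argument goes. The terms generated by $\Pi_1^{\top\!}\dot{\mcP}\Pi_1$ neither cancel via $\tfrac{d}{dt}(NM^{\top\!})=-\tfrac{d}{dt}(XY)$ nor need to be conservatively dominated: in \cref{lem:p2p-synthesis} they are absorbed into the linearized variable $\tilde A$, which is why the inverse reconstruction \eqref{eq:lpv-controller-construction} explicitly contains $\dot{\tilde Y}\tilde X+\dot N M^{\top\!}$ and hence requires $\dot\theta$ for implementation. The derivative dependence is removed only afterward, by the reparametrization described above — not by bookkeeping inside the first congruence. To repair your proof you would either have to carry out that second stage explicitly, or prove \eqref{eq:p2p-lim-final-1}--\eqref{eq:p2p-lim-final-2} directly with a Lyapunov matrix parametrized so that the coupling factor is $(X-Y)$, and verify that the resulting $\hat A,\hat B,\hat C,\hat D$ match \eqref{eq:lpv-constroller-construction-final} without residual $\dot\theta$-terms.
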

\begin{proof}
The proof is built upon a preliminary lemma, i.e., \cref{lem:p2p-synthesis}, given in Appendix~\ref{sec:p2p-synthesis}. 
The equations \cref{eq:lpv-constroller-construction-final,eq:p2p-lim-final-1,eq:p2p-lim-final-2} can be derived from \cref{eq:lpv-controller-construction,eq:p2p-lim-1,eq:p2p-lim-2} in \cref{lem:p2p-synthesis} in two steps. The first step is to do a congruence transformation, which includes (i) multiplying the left-hand side of \eqref{eq:p2p-lim-1} by  $\rm{diag}(\mbI_{\bar{n}},\tilde Y^{-1}(\theta), \mbI_{n-m})$ and its transpose from the left and right, and (ii)  multiplying the left-hand side of \eqref{eq:p2p-lim-2} by  $\rm{diag}(\mbI_{\bar{n}},\tilde Y^{-1}(\theta), \mbI_{n-m},\mbI_n)$ and its transpose from the left and right. The second step is to introduce new variables $X(\theta)$, $Y(\theta)$ such that $X(\theta)=\tilde X(\theta)$, $Y(\theta)=\tilde Y^{-1}(\theta)$. Since both of these two steps induce equivalent transformations, the result directly follows from  Lemma~\ref{lem:p2p-synthesis} in Appendix~\ref{sec:p2p-synthesis}. \qedclosed
\end{proof}

\begin{remark}
Similar to the computation of a PPG bound of a given LPV system according to \cref{lem:ppg-bound-computaion-lmi}, a line search for $\mu$ is needed to obtain the tightest bound $\gamma$. Also, since $\gamma$ appears linearly in the synthesis conditions in Lemma~\ref{lem:p2p-synthesis-final}, we can design the mapping $\barH(\theta)$ to minimize $\gamma$. 
\vspace{0mm} \end{remark}

For performance analysis in \cref{sec:analysis_l1} {(more specifically in Lemma~\ref{lem:F-theta-tilsigma-bound})}, we define
\begin{equation}\label{eq:Ftheta-defn}
    \mcF(\theta ) \trieq  {\mcC}\left( {B^\dag (\theta ) + \bar {\mathcal H} (\theta )\Bu^\dag (\theta )} \right),
\end{equation}
where $B^\dag (\theta )$ and $\Bu^\dag (\theta )$ are the pseudo-inverse of $B (\theta )$ and $\Bu(\theta )$, respectively.  
{Then, under zero initial states, given the signal $\tilsigma(t)\in \mbR^{n}$ (defined later in \eqref{eq:tilsigma-defn}) to $\mcF(\theta )$ as the input,   the output $y_\mcF(t)\in \mbR^{m}$ can be computed as} 
\begin{equation}\label{eq:F-theta-ss-form}
\begin{split}
  {{\dot x_\mcF}}(t) &= {A_\mcF}(\theta ){x_\mcF}(t) + {B_\mcF}(\theta ){\tilsigma}(t),\quad x_\mcF(0)=0,\\
y_\mcF(t) &= {C_\mcF}(\theta ){x}_\mcF(t),
\end{split}
\end{equation}
where $A_\mcF(\theta)$, $B_\mcF(\theta)$ and $C_\mcF(\theta)$ are the state-space matrices. 
Since both $\barH(\theta)$ and $\mcC$ are stable, $\mcF(\theta)$ is also stable. In order to establish the bound on $y_\mcF(t)$ given a bounded $\tilsigma(t)$ to $\mcF(\theta)$ later in Lemma~\ref{lem:F-theta-tilsigma-bound}, we make the following assumption. 
\begin{assumption}\label{assump:F-theta-stability}
There exists a PD symmetric matrix $P_\mcF(\theta)$ and a constant $\mu_\mcF$ such that 
\begin{align}
      \left\langle\! A_\mcF^{\top\!} (\theta)P_\mcF(\theta)\!\right\rangle\!+\!\dot{P}_\mcF(\theta)  & \!\leq\! -\mu_\mcF P_\mcF(\theta), \ \forall (\theta,\dot{\theta}) \in \Theta \!\times \!\Theta_d,\nonumber\\
       P_\mcF(\theta) & >0,\ \forall \theta \in \Theta. \label{eq:laypunov_stability_mcF}
\end{align}
\vspace{-5mm} \end{assumption}
\begin{remark}
\cref{assump:F-theta-stability} essentially means that there exists a quadratic PD Lyapunov function $V(t,\theta) \trieq x_\mcF^{\top\!}(t) P_\mcF(\theta) x_\mcF(t)$ guaranteeing the stability of the system defined in \eqref{eq:F-theta-ss-form}.
\vspace{0mm} \end{remark}
{\begin{remark}
The only decision variables in \eqref{eq:laypunov_stability_mcF} are $ P_\mcF(\theta)$ and $\mu_\mcF$. Thus, \cref{assump:F-theta-stability} can be verified by performing a line search for $\mu_\mcF$ and solving  a PD-LMI problem for each $\mu_\mcF$. 
\vspace{0mm} \end{remark}}

\subsection{Computational perspectives}\label{sec:sub-computation-perspective}
\ul{\it Solving PD-LMI  conditions:} Up to now, LMI technique is still the most common and effective technique for analysis and control synthesis for LPV systems \cite[Section~1.3]{Moh12LPVBook}. Therefore, it is no surprise that (PD-)LMIs are adopted in this work for (i) \red{verifying the Lyapunov stability condition \cref{eq:laypunov_stability_Am} in \cref{assump:desired-dynamics-stable-Lyapunov}}; (ii)
determining the state bounds for autonomous LPV systems (\cref{lem:xin_bound}), (iii)
computing the PPG bounds for LPV systems (\cref{lem:ppg-bound-computaion-lmi}), and (iv) designing the
feedforward mapping for unmatched uncertainty attention (\cref{lem:p2p-synthesis-final}). 
The PD-LMIs can be solved by applying the gridding technique for general parameter dependence  \cite{Apk98}. Under special circumstances like polynomial parameter dependence, techniques 
employing \red{multiconvexity concepts \cite{Apk00LMI}, homogeneous polynomially parameter-dependent solution \cite{oliveira2007pd-lmis} and 
sum-of-square relaxations \cite{wu2005sos-lpv}, and tools such as Robust LMI Parser} \cite{agulhari2019robustLMI-parser} can be leveraged to obtain a finite number of sufficient LMI conditions. 

\underline{\it Incorporation of unknown input gain matrix:}
In the presence of an uncertain input gain, to verify the stability condition in \eqref{eq:l1-stability-condition}, we need to compute the PPG bounds of several mappings (i.e., $\mcG_{xm}(\theta),\ \mcG_{xum}(\theta)$ and $\mcH_{xm}(\theta)\mcC K_r(\theta)$) involving the uncertain input gain  $\omega$. Therefore, $\omega$ will appear in the LMI conditions \cref{eq:ppg-bound-computation-lmi1,eq:ppg-bound-computation-lmi2} when we apply Lemma~\ref{lem:ppg-bound-computaion-lmi}. Fortunately, under proper state-space realizations, $\omega$ will appear linearly in the state-space matrices of those mappings (as demonstrated in Appendix~\ref{sec:ss-realization}), and thus appear linearly in the LMI conditions in  Lemma~\ref{lem:ppg-bound-computaion-lmi}. Considering that $\mathrm{vec}(\omega)\in \hat\Omega$ with $\hat\Omega$ being a polytope (Assumption~\ref{assump:unknown-input-gain}), to account for the uncertainty in $\omega$, we can formulate robust LMI conditions using the vertices of $\hat \Omega$; the resulting LMIs will be independent of $w$.

\section{Stability and Performance Analysis}\label{sec:analysis_l1}
In this section, we will analyze the performance of the adaptive closed-loop system with the controller defined in \cref{eq:state-predictor,eq:adaptive_law,eq:l1-control-law}, and derive performance bounds for both transient and steady-state phases under certain conditions. The characterization of the performance of the adaptive system uses a non-adaptive reference system, defined as 
\begin{align}
     \dot{x}_\rt(t) &=  A_m\left(\theta\right)x_\rt(t)+ B\left(\theta\right)\left(\omega u_\rt(t) + f_1(t,x_\rt(t)\right) 
     +\Bu(\theta)f_2(t,x_\rt(t)), \ x_\rt(0) = x_0, \nonumber \\
    u_\rt  &= -\omega^{-1}\mcC\left(\eta_{1\rt}+\barH(\theta)\eta_{2\rt}  -K_r(\theta)r\right), \label{eq:reference-system}
    \\
y_\rt(t) &= C_m(\theta) x_\rt(t), \nonumber
\end{align}
where 
\begin{equation}\label{eq:eta_ir-def}
  \eta_{i\rt}(t)\trieq f_i(t,x_{\rt}(t)), \quad i=1,2.  
\end{equation}
 It can be seen that the control law in the reference system aims to {\it partially} cancel the matched uncertainty within the bandwidth of the filter $\mcC(s)$ and reject the unmatched uncertainty based on the idea introduced in \cref{sec:sub-unmatched-dist-rejection}. Moreover, the control law utilizes the true uncertainties and is thus {\it not implementable}. Therefore, the reference system is only used to help characterize the performance of the adaptive system. With the reference system, performance analysis of the adaptive system will be performed in four steps: (i) establish the boundedness of the reference system (\cref{lem:xref-bounds}); (ii) quantify the difference between the state, input and output  signals of the adaptive system and those of the reference system (\cref{them:x-xref-bounds}); (iii) quantify the difference between the state, input and output signals of the reference system and those of the ideal system (\cref{lem:ref-id-bound}); (iv) based on the results from (ii) and (iii),  quantify the difference between the state, input and output signals of the adaptive system and those of the ideal system (\cref{them:x-xid-bounds}).
{\subsection{Stability condition}
Given the uncertain system dynamics in \cref{eq:plant-dynamics-f12}, a constant $\rho_0$ (the bound on $\norm{x_0}$), a baseline controller \cref{eq:control-baseline-general}  
that guarantees the stability of the nominal CL system (\cref{assump:desired-dynamics-stable-Lyapunov}), and the prior knowledge of uncertainties represented in \cref{assump:unknown-input-gain,assump:ftx-semiglobal_lipschitz,assump:ftx-rate-bounded,assump:ft0-bound}, for guaranteeing the stability and tracking performance, the design of $K$ in \cref{eq:filter-defn} needs to ensure that there exists a positive constant $\rho_r$ such that the following condition holds: \begin{equation}\label{eq:l1-stability-condition}
\lonenormbar{\mcG_{xm}(\theta)}({L_{f_1}^\rho\rho_r+b_{f_1}^0}) + \lonenormbar{\mcG_{xum}(\theta)} ({L_{f_2}^\rho\rho_r+b_{f_2}^0}) < {\rho_r-\lonenormbar{\mcH_{xm}(\theta)\mcC K_r(\theta)}\linfnorm{r} - \rho_\textup{in}},
\end{equation}
where \red{$\rho_\textup{in}$ is defined in \eqref{eq:rhoin-defn}}, $\mcC$ denotes the input-output mapping of $\mcC(s)$, and 
\begin{align}\label{eq:rho-defn}
    \rho & \triangleq \rho_r + \gamma_1,
\end{align}
with ${\gamma}_1$ being a positive constant that is selected by users and therefore can be arbitrarily small. Additionally, $L_{f_i}^\rho$ $(i=1,2)$ are defined in \cref{eq:L2delta-defn,eq:L1delta-defn}, $K_r(\theta)$ is the feedforward gain introduced in \eqref{eq:ideal-dynamics},  and $\mcG_{xm}(\theta)$ and $\mcG_{xum}(\theta)$ are defined in \cref{eq:Gxm-Gxum-defn}.

\begin{remark}
The condition in \eqref{eq:l1-stability-condition} will be used to guarantee the stability of a non-adaptive reference system (defined later in \eqref{eq:reference-system}) and derive the performance bounds of the closed-loop adaptive system with respect to the reference system (see Theorem~\ref{them:x-xref-bounds}) in \cref{sec:analysis_l1}. More specifically, we will show that the states of the reference system and the adaptive systems are bounded by $\rho_r$ and $\rho$, respectively, while the bound on their difference, i.e., $\gamma_1$, can be made arbitrarily small. \red{Additionally, the first (second) term on the left-hand side of  \eqref{eq:l1-stability-condition} denote a bound on $\norm{\xref}$ associated with partially compensated matched (unmatched) uncertainties under zero initial condition, while the term $\lonenormbar{\mcH_{xm}(\theta)\mcC K_r(\theta)}\linfnorm{r}$ on the right-hand side denotes a bound associated with the reference signal $r$ under zero initial condition.} 
\end{remark}
Also, the stability condition \eqref{eq:l1-stability-condition} uses PPG bounds instead of $\lone$ norms, generally used in existing \loneAC~work with desired dynamics represented by LTI models \red{\cite[Chapters~2--4]{naira2010l1book}}. This is because for LPV systems it is difficult to compute the $\lone$ norm exactly, while an upper bound for the PPG can be computed using LMIs, as shown in Lemma~\ref{lem:ppg-bound-computaion-lmi}. 
{\begin{remark}
In the absence of UUs, condition \eqref{eq:l1-stability-condition} degrades to $\lonenormbar{\mcG_{xm}(\theta)}({L_{f_1}^\rho\rho_r+b_{f_1}^0})
< \rho_r-\lonenormbar{\mcH_{xm}(\theta)\mcC K_r(\theta)}\cdot \\ \linfnorm{r} - \rho_\textup{in},$ which can always be satisfied by increasing the filter bandwidth since $\mcG_{xm}(\theta)$ tends towards $0$ when the filter bandwidth becomes infinite. Additionally, if all the terms in \eqref{eq:l1-stability-condition} except $\rho_r$ are finite and $L_{f_i}^\rho$ ($i=1,2$) are sufficiently small regardless of $\rho_r$ (meaning that the uncertainties $f_i(t,x)$ ($i=1,2$) change sufficiently slowly with respect to $x$), such that $\lonenormbar{\mcG_{xm}(\theta)}L_{f_1}^\rho + \lonenormbar{\mcG_{xum}(\theta)}L_{f_2}^\rho
< 1$, then we can always find a large enough $\rho_r$ such that \eqref{eq:l1-stability-condition} holds. 
\vspace{0mm} \end{remark}}
\begin{remark}\label{rem:stability-condition-conservative}
The condition in \eqref{eq:l1-stability-condition} could be rather conservative because it is derived using small gain theorem (see the proof of \cref{lem:xref-bounds}), and furthermore, the PPG bound computed using LMI techniques could be quite loose. Nevertheless, it provides a verifiable way for theoretic guarantees, when addressing the challenging problem of adaptive control of LPV systems subject to unmatched nonlinear uncertainties. 
\end{remark}}


\vspace{-3mm}
\subsection{Boundedness of the reference system}
We now present a lemma to  establish the boundedness of the reference system.
\begin{lemma}\label{lem:xref-bounds}
Given the closed-loop reference system in \eqref{eq:reference-system}, if   {\cref{assump:desired-dynamics-stable-Lyapunov}}, {conditions \cref{eq:fi-lipschitz-cond,eq:fit0-bound}}, and the stability condition \eqref{eq:l1-stability-condition} hold, we have
\vspace{-4mm}\\
\begin{align}
 \linfnorm{x_\rt} &<\rho_r \label{eq:xref-bound}, \quad \linfnorm{u_\rt} <\rho_{ur}, 
\end{align}\vspace{-4mm}\\
where 
\vspace{-3mm}
\begin{align}
   \rho_{ur}  \triangleq& \lonenorm{\omega^{-1}\mcC(s)}\left(\max_{\theta\in\Theta}\norm{K_r(\theta)}\linfnorm{r}+L_{f_1}^{\rho_r}\rhor +b_{f_1}^0+\lonenormbar{\barH(\theta)}(L_{f_2}^{\rho_r}\rhor+b_{f_2}^0)\right) \label{eq:rho-ur-defn}. 
\end{align}\vspace{-4mm}\\
\vspace{-4mm}\end{lemma}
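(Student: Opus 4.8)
The plan is to eliminate the control input from the reference dynamics \cref{eq:reference-system}, express the resulting state trajectory as a superposition of responses through the mappings of \cref{eq:Gxm-Gxum-defn}, and then run a continuity/bootstrapping argument that invokes the strict stability condition \cref{eq:l1-stability-condition} to rule out $x_\rt$ leaving the ball of radius $\rho_r$.

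First I would substitute $\omega u_\rt = -\mcC(\eta_{1\rt} + \barH(\theta)\eta_{2\rt} - K_r(\theta)r)$, obtained from \cref{eq:reference-system}, together with $\eta_{i\rt} = f_i(t,x_\rt)$ into the state equation, which yields
\[
\dot{x}_\rt = A_m(\theta)x_\rt + B(\theta)\big[(\mbI_m-\mcC)f_1(t,x_\rt) - \mcC\barH(\theta)f_2(t,x_\rt) + \mcC K_r(\theta)r\big] + \Bu(\theta)f_2(t,x_\rt).
\]
By linearity of the $A_m$-dynamics, with the plant maps $\mcH_{xm}(\theta),\mcH_{xum}(\theta)$ taken at zero initial conditions and the zero-input response from $x_\rt(0)=x_0$ being exactly the signal $x_\textup{in}$ of \cref{eq:xin-dynamics}, and using $\mcG_{xm}=\mcH_{xm}(\mbI_m-\mcC)$ and $\mcG_{xum}=\mcH_{xum}-\mcH_{xm}\mcC\barH$, the trajectory satisfies the identity
\[
x_\rt = x_\textup{in} + \mcG_{xm}(\theta)\big[f_1(\cdot,x_\rt)\big] + \mcG_{xum}(\theta)\big[f_2(\cdot,x_\rt)\big] + \mcH_{xm}(\theta)\mcC K_r(\theta)\big[r\big],
\]
with all filter states at rest. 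This is merely the variation-of-constants expression for the actual (feedback) forcing signal, so it holds along the true trajectory.

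Next comes the bootstrapping step, which is the heart of the argument. Since the right-hand side of \cref{eq:l1-stability-condition} is positive, $\rho_r > \rho_\textup{in}\geq\rho_0\geq\norm{x_0}$, so the trajectory starts strictly inside the ball. Suppose, for contradiction, that $\linfnorm{x_\rt}\geq\rho_r$; by continuity of $x_\rt$ there is a first $\tau$ with $\linfnormtruc{x_\rt}{\tau}=\rho_r$, and on $[0,\tau]$ we have $\norm{x_\rt(t)}\leq\rho_r$. On this interval \cref{eq:fit0-bound,eq:fi-lipschitz-cond} give $\linfnormtruc{f_i(\cdot,x_\rt)}{\tau}\leq L_{f_i}^{\rho_r}\rho_r + b_{f_i}^0$; applying \cref{lem:input-output-ppg-bound-relation} to each mapping term and \cref{lem:xin_bound} to $x_\textup{in}$ then yields
\[
\rho_r \leq \rho_\textup{in} + \lonenormbar{\mcG_{xm}(\theta)}\big(L_{f_1}^{\rho_r}\rho_r + b_{f_1}^0\big) + \lonenormbar{\mcG_{xum}(\theta)}\big(L_{f_2}^{\rho_r}\rho_r + b_{f_2}^0\big) + \lonenormbar{\mcH_{xm}(\theta)\mcC K_r(\theta)}\linfnorm{r}.
\]
Because $\rho_r<\rho$ and the Lipschitz constants are monotone in the ball radius, $L_{f_i}^{\rho_r}\leq L_{f_i}^{\rho}$, so the right-hand side is strictly below $\rho_r$ by \cref{eq:l1-stability-condition}, a contradiction; hence $\linfnorm{x_\rt}<\rho_r$, establishing \cref{eq:xref-bound}.

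Finally, the input bound \cref{eq:rho-ur-defn} follows directly: applying the $\lone$/PPG bounds of $\omega^{-1}\mcC$ and $\barH(\theta)$ to $u_\rt=-\omega^{-1}\mcC(f_1(\cdot,x_\rt)+\barH(\theta)f_2(\cdot,x_\rt)-K_r(\theta)r)$ and inserting $\linfnorm{f_i(\cdot,x_\rt)}\leq L_{f_i}^{\rho_r}\rho_r+b_{f_i}^0$, now legitimate since $\linfnorm{x_\rt}<\rho_r$, reproduces the stated $\rho_{ur}$. I expect the main obstacle to be the bootstrapping itself: one must be entitled to use the radius-$\rho_r$ Lipschitz constant $L_{f_i}^{\rho_r}$ before the bound on $x_\rt$ is in hand, and this is precisely what the first-exit-time continuity argument, combined with the strict inequality in \cref{eq:l1-stability-condition} and the monotonicity $L_{f_i}^{\rho_r}\leq L_{f_i}^{\rho}$ (a disguised small-gain condition, cf.\ \cref{rem:stability-condition-conservative}), is designed to supply; some care is also needed to confirm that the closed-loop response genuinely decomposes through $\mcG_{xm}$ and $\mcG_{xum}$ with the nonlinearities evaluated along the not-yet-bounded trajectory.
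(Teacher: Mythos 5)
Your proposal is correct and follows essentially the same route as the paper's proof: the same decomposition of $x_\rt$ through $\mcG_{xm}(\theta)$, $\mcG_{xum}(\theta)$, $\mcH_{xm}(\theta)\mcC K_r(\theta)$ and $x_\textup{in}$, the same first-exit-time contradiction argument invoking \cref{lem:input-output-ppg-bound-relation}, \cref{lem:xin_bound}, the bounds $\linfnormtruc{\eta_{i\rt}}{\tau}\leq L_{f_i}^{\rho_r}\rho_r+b_{f_i}^0$ and the monotonicity $L_{f_i}^{\rho_r}\leq L_{f_i}^{\rho}$ against the strict inequality \cref{eq:l1-stability-condition}, and the same direct derivation of $\rho_{ur}$ from the control law. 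No gaps to report.
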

\begin{proof}
It follows from \eqref{eq:reference-system},  the definition of $\mcH_{xm}(\theta)$  above \eqref{eq:lpv-system-no-nonlinearity}, and the definitions of $\mcG_{xm}(\theta)$ and $\mcG_{xum}(\theta)$ in \eqref{eq:Gxm-Gxum-defn} that 
\begin{equation*}
    x_\rt = \mcG_{xm}(\theta) \eta_{1\rt}+\mcG_{xum}(\theta) \eta_{2\rt}+  \mcH_{xm}(\theta)\mcC K_r(\theta)r + x_\textup{in}.
\end{equation*}
Due to \cref{assump:desired-dynamics-stable-Lyapunov}, according to Lemma~\ref{lem:input-output-ppg-bound-relation}, we have 
{
\begin{flalign}
 &\linfnormtruc{x_{\rt}}{\tau} \leq   \linfnorm{x_\textup{in}} +  \lonenormbar{\mcG_{xm}(\theta)}\linfnormtruc{\eta_{1\rt}}{\tau} +   \lonenormbar{\mcG_{xum}(\theta)}\!\linfnormtruc{\eta_{2\rt}}{\tau}  
   +  \lonenormbar{\mcH_{xm}(\theta)\mcC K_r(\theta)}\!\linfnorm{r}\!.  \label{eq:xref-tau-bound-eta-ref}
\end{flalign}}
If the bound on $x_\rt$ in \eqref{eq:xref-bound} is not true, since $\norm{x_\rt(0)}=\norm{x_0}<\rho_r$ and $x_\rt(t)$ is continuous, then there exists $\tau>0$ such that
$\norm{x_\rt(t)}<\rho_r$, for any $t\in[0,\tau)$, and $\norm{x_\rt(\tau)}=\rho_r,$ which implies that
\begin{equation}\label{eq:xref-tau-linfnorm-equal-to-rhor}
    \linfnormtruc{x_{\rt}}{\tau} = \rho_r.
\end{equation}
Further considering conditions \cref{eq:fi-lipschitz-cond,eq:fit0-bound}, we have
\begin{equation}\label{eq:eta_iref-tau-bound}
    \linfnormtruc{\eta_{i\rt}}{\tau} \leq  L_{f_i}^{\rho_r}\rho_r+b_{f_i}^0, \quad i=1,2.
\end{equation}
Plugging the preceding inequality into \eqref{eq:xref-tau-bound-eta-ref} results in
$
  \linfnormtruc{x_{\rt}}{\tau}  \leq  \linfnorm{x_\textup{in}} +  \lonenormbar{\mcG_{xm}(\theta)} (L_{f_1}^{\rho_r}\rho_r+b_{f_1}^0) 
    + \lonenormbar{\mcG_{xum}(\theta)}\cdot \\ (L_{f_2}^{\rho_r}\rho_r+b_{f_2}^0)  
   +  \lonenormbar{\mcH_{xm}(\theta)\mcC K_r(\theta)}\linfnorm{r} 
   <  \rho_r-\rho_\textup{in} +\linfnorm{x_\textup{in}} \leq \rho_r, \label{eq:xref-tau-bound-rhor}
$
where 
the third and last inequalities are due to \eqref{eq:l1-stability-condition} (together with the fact that $L_{f_i}^{\rho_r}\leq L_{f_i}^\rho$ for $i=1,2$) and \eqref{eq:xin-linfnorm-rho-in-relation}, respectively.
As a result, we arrive at $\linfnormtruc{x_{\rt}}{\tau} <\rho_r$, contradicting \eqref{eq:xref-tau-linfnorm-equal-to-rhor}.  This proves the bound on $x_\rt$ in \eqref{eq:xref-bound}. 
Using \cref{eq:xref-bound,eq:eta_iref-tau-bound} and the definition of $u_\rt$ in \eqref{eq:reference-system}, we obtain 
the bound on $u_\rt$ in \eqref{eq:xref-bound}.  \qedclosed
\end{proof}

Let $\bar{\gamma}_0$ be a small positive constant such that 
{
\begin{equation}\label{eq:underline-gamma_1-defn-constr}
\frac{\lonenormbar{\mcH_{xm}(\theta)}}{1-\norm{\mcG_{xm}(\theta)}_\lonebar   L_{f_1}^\rho-\lonenormbar{\mcG_{xum}(\theta)}L_{f_2}^\rho} \bar{\gamma}_0  < \gamma_1,
\end{equation}}where $\gamma_1$ is introduced in \cref{eq:rho-defn}. 
We further define
\begin{equation}\label{eq:rho_u-defn}
\rho_u \triangleq \rho_{ur} +\gamma_2,
\end{equation}
\vspace{-5mm}
\begin{align}
    \gamma_2  \triangleq  \left( {{{\| {{\omega ^{ - 1}}{\mcC}}\|}_{{{\mathcal L}_1}}}{L_{f_1}^\rho} \!+\! {{\| {{\omega ^{ - 1}}{\mcC}\bar {\mathcal H} (\theta )} \|}_{{{\bar {\mathcal L} }_1}}}{L_{f_2}^\rho}} \right){\gamma _1}
  \!  + \!\| {{\omega ^{ - 1}}}\|{{\bar \gamma }_0}. \label{eq:gamma2-defn}
\end{align}
We will show (in Theorem~\ref{them:x-xref-bounds}) that $\rho$ and $\rho_u$ in \eqref{eq:rho-defn} and \eqref{eq:rho_u-defn} are the bounds for the states and inputs of the adaptive closed-loop system with the controller defined by \cref{eq:state-predictor,eq:adaptive_law,eq:l1-control-law}. 
\vspace{-3mm} 
\subsection{Transient and steady-state performance} \label{sec:transient-ss-perf}
For the derivations in this section, let us first define 
\begin{subequations}\label{eq:eta-i-tileta-i-defn} 
\begin{align}
\eta_i(t) & \triangleq f_i(t,x(t)),\quad i=1,2, \label{eq:etai-defn}  \\
     \tileta_m(t) &\triangleq \hsigma_m(t)-\left((\omega-\mbI_m)u(t)+\eta_1(t)\right),\quad   \\
     \tileta_{um}(t) &\triangleq \hsigma_{um}(t)-\eta_2(t).
\end{align}
\end{subequations}
From \cref{eq:plant-dynamics-f12-a-fhat}and \cref{eq:state-predictor}, we obtain the prediction error dynamics:
\begin{align}
   &  \dot{\tilde{x}}(t) = -a \tilx(t) + \hsigma(t) - B(\theta)(\omega-\mbI_m)u(t) -{{\bar f}(t,x(t))} = -a \tilx(t) + \hsigma(t)-\sigma(t,x(t),u(t)), \  \tilx(0) = 0,\hspace{-3mm} \label{eq:prediction-error}
\end{align}
where $\sigma(t,x(t),u(t))$ is defined in \cref{eq:sigma-txu-defn}. 
Recalling from \eqref{eq:hsigma_m_um_hsigma_relation} that
$
    \hsigma(t) = B(\theta)\hsigma_m(t) + \Bu(\theta)\hsigma_{um}(t),
$
and considering \eqref{eq:f1-f2-f-relation}, \eqref{eq:sigma-txu-defn} and \eqref{eq:eta-i-tileta-i-defn}, we have 
\begin{subequations}
\begin{align}
  \tilsigma(t) & \trieq  \hsigma(t)-\sigma(t,x(t),u(t)) \label{eq:tilsigma-defn}\\
  & = B(\theta)\tileta_m(t) +  \Bu(\theta) \tileta_{um}(t), \label{eq:sigma-error-m-um-split-sigmatilde-defn}
\end{align}
\end{subequations}
which further implies that 
{\renewcommand*{\arraystretch}{1.2}
\begin{equation}\label{eq:tileta-m-um-sigma-error}
   \begin{bmatrix}
       \tileta_m(t) \\
       \tileta_{um}(t)
   \end{bmatrix}  = 
   \begin{bmatrix}
       B^\dagger(\theta) \\
        \Bu^\dagger(\theta)
   \end{bmatrix}
   \tilde\sigma(t).
\end{equation}}
\vspace{-1mm}
For following derivations, let us first define
{
\begin{align}
\hspace{-2mm}b_\sigma(\rho,\rho_u)  \triangleq &\  b_{B}\max_{\omega\in\Omega}\norm{\omega\!-\!\mbI_m} (\rho_u{+b_{K_x}\rho}) 
+ L_f^\rho\rho+b_f^0, \label{eq:b-sigma-defn} \\
\hspace{-2mm} b_{\dot{x}}(\rho,\rho_u) \triangleq &\ b_{A_m}\rho + b_{B}\rho_u \max_{\omega\in\Omega}\norm{\omega} +   L_f^\rho\rho+b_f^0  {+ b_B\max\limits_{\omega\in\Omega} \norm{(\omega\!-\!\mbI_m)}b_{K_x}\rho}, \label{eq:b-xdot-defn}\\
\hspace{-2mm} b_{\hsigma}(\rho,\rho_u) \triangleq & \  e^{-aT}\sqrt{n}~b_\sigma(\rho,\rho_u), \label{eq:b-hsigma-defn}\\
b_{\dot u}(\rho,\rho_u) \triangleq &\ \left\| K \right\|\left[{{\rho _u}} + ( 1+ \left\| {\bar {\mathcal H} (\theta )} \right\|_{{\bar {\mathcal L} }_1} )b_{B^\dag}{b_{\hat \sigma}}(\rho ,{\rho _u})
+ b_{K_r} {\left\| r \right\|}_{{\mathcal L}_\infty}  \right], \label{eq:b-dotu-defn} \\
\hspace{-2mm} b_{\tilsigma}(\rho,\rho_u) \triangleq&\   2\sqrt{n}~T\left[\left\| {\omega  - {\mbI_m}} \right\|( {{b_{{B}}}{b_{\dot u}}(\rho ,{\rho _u}) + L_B{b_{\dot \theta }}{\rho _u}} ) + {l_f} + {L_f^\rho }{b_{\dot x}}(\rho ,{\rho _u}) \right] 
    + (1-e^{-aT}) \sqrt{n}~b_\sigma(\rho,\rho_u). \label{eq:b-tilsigma-defn} 
\vspace{-8mm}
\end{align}}
\begin{lemma}\label{lem:sigmatilde-xtilde-bound}
Given the uncertain system in \eqref{eq:plant-dynamics-f12} satisfying \cref{assump:B-lipschitz-bounded,assump:ftx-rate-bounded,assump:ftx-semiglobal_lipschitz,assump:ft0-bound,assump:unknown-input-gain}, the state predictor in \eqref{eq:state-predictor}, and the adaptive law in \eqref{eq:adaptive_law}, if
\begin{equation}\label{eq:x-u-tau-bound-assump-in-lemma}
    \linfnormtruc{x}{\tau}\leq \rho, \quad \linfnormtruc{u}{\tau}\leq \rho_u,
\end{equation}
then 
 \vspace{-4mm}
\begin{align}
    \norm{\tilsigma(t)}&\leq \left\{
    \begin{array}{l}
    b_{\sigma}(\rho,\rho_u),\quad \forall t\in[0,T), \\
     b_{\tilsigma}(\rho,\rho_u),\quad \forall t\in [T,\tau].
    \end{array}
    \right. \label{eq:tilsigma-bound}
\end{align}\vspace{-4mm}\\
\vspace{-4mm}\end{lemma}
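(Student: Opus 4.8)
The plan is to treat the two time ranges separately, exploiting that the piecewise-constant law zeroes the estimate on $[0,T)$ and that the specific gain $\Upsilon(T)=a/(e^{aT}-1)$ forces the prediction error at sampling instants to depend only on $\sigma$ over the preceding interval. First, under \cref{eq:x-u-tau-bound-assump-in-lemma} I would bound the lumped uncertainty itself: writing $\sigma(t,x,u)=B(\theta)(\omega-\mbI_m)u+\bar f(t,x)$ with $\bar f(t,x)=f(t,x)+B(\theta)(\omega-\mbI_m)K_x(\theta)x$, the bounds on $B$ and $K_x$ (\cref{assump:B-lipschitz-bounded}) and the semiglobal Lipschitz/boundedness of $f$ (\cref{assump:ftx-semiglobal_lipschitz,assump:ft0-bound}) give $\norm{\sigma(t,x(t),u(t))}\le b_\sigma(\rho,\rho_u)$ for all $t\in[0,\tau]$. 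On $[0,T)$ we have $\hsigma_c\equiv 0$ by \cref{eq:hsigma-c-defn}, so $\tilsigma=-\sigma$ and the first branch of \cref{eq:tilsigma-bound} follows at once.

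For $t\in[T,\tau]$ I would integrate the prediction-error dynamics \cref{eq:prediction-error} over the sampling interval $[iT,(i+1)T)$, on which $\hsigma$ is frozen at $\hsigma(iT)=-\Upsilon(T)\tilx(iT)$. Variation of constants and the identity $\tfrac{1-e^{-aT}}{a}\Upsilon(T)=e^{-aT}$ make the homogeneous part and the constant-input part cancel the contribution of $\tilx(iT)$ exactly, leaving
\[
\tilx((i+1)T)=-\int_{iT}^{(i+1)T}e^{-a((i+1)T-\tau')}\sigma(\tau',x(\tau'),u(\tau'))\,d\tau'.
\]
This holds for every $i\ge 0$; in particular the (possibly nonzero) initialization error $\tilx(0)$ disappears already at $t=T$, which is exactly why the $[T,\tau]$ bound is independent of it. Since $\tilsigma(t)=\hsigma_c(t)-\sigma(t)=\hsigma((i+1)T)-\sigma(t)$ on $[(i+1)T,(i+2)T)$ and $\Upsilon(T)\int_{iT}^{(i+1)T}e^{-a((i+1)T-\tau')}\,d\tau'=e^{-aT}$, substituting $\hsigma((i+1)T)=-\Upsilon(T)\tilx((i+1)T)$ yields the split $\tilsigma(t)=\Upsilon(T)\int_{iT}^{(i+1)T}e^{-a((i+1)T-\tau')}(\sigma(\tau')-\sigma(t))\,d\tau'-(1-e^{-aT})\sigma(t)$, i.e. a variation part (a difference of $\sigma$ at times at most $2T$ apart) plus a DC-deficiency part $(1-e^{-aT})\sigma(t)$.

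It then remains to bound the time variation of $\sigma$ along the trajectory over windows of length $2T$, which I would do in the dependency order forced by the constants. From the closed-form weight $e^{-aT}$ I get $\norm{\hsigma_c}\le b_{\hsigma_c}(\rho,\rho_u)$ on $[T,\tau]$; then the control law \cref{eq:l1-control-law} together with $\hsigma_m=B^\dag(\theta)\hsigma_c$, $\hat\eta_2=\barH(\theta)\hsigma_{um}$ with $\hsigma_{um}=\Bu^\dag(\theta)\hsigma_c$, and \cref{eq:x-u-tau-bound-assump-in-lemma} gives $\norm{\dot u}\le b_{\dot u}(\rho,\rho_u)$; and the plant dynamics \cref{eq:plant-dynamics-f12-a-fhat} give $\norm{\dot x}\le b_{\dot x}(\rho,\rho_u)$. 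Splitting $\sigma=B(\theta)(\omega-\mbI_m)u+\bar f$, the variation of the first term is controlled through $\norm{B(\theta^1)-B(\theta^2)}\le L_B\norm{\theta^1-\theta^2}$ with $b_{\dot\theta}$ and $b_{\dot u}$, giving $\norm{\omega-\mbI_m}(b_B b_{\dot u}+L_B b_{\dot\theta}\rho_u)$, while the variation of $\bar f$ is controlled by the Lipschitz and rate bounds of $f$ (\cref{assump:ftx-semiglobal_lipschitz,assump:ftx-rate-bounded}) through $l_f+L_f^\rho b_{\dot x}$. Multiplying by the window length $2T$, converting the componentwise $\infty$-norm estimates into the $2$-norm (the source of the $\sqrt n$ factors), and adding the DC-deficiency term $(1-e^{-aT})\sqrt n\,b_\sigma$ assembles exactly $b_{\tilsigma}(\rho,\rho_u)$, completing the second branch.

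The main obstacle is the clean-cancellation step: extracting the exact algebraic identity from $\Upsilon(T)$ so that $\tilx(iT)$ drops out and $\tilsigma$ collapses to a $\sigma$-variation term is what makes the bound uniform and independent of the initialization error; once that is in hand, everything else is careful but routine bookkeeping of the Lipschitz and rate constants, where the only mild subtlety is keeping track of the parameter-dependent matrices' variation via $\dot\theta$ and the $2$-norm/$\infty$-norm conversions.
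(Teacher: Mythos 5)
Your proposal is correct and follows essentially the same route as the paper's proof: cancellation of $\tilde x(iT)$ at sampling instants via the specific gain $\Upsilon(T)$, the split of $\tilde\sigma$ into a $\sigma$-variation term over a window of length at most $2T$ plus the residual $(1-e^{-aT})\sigma$ term, and the chain of bounds $b_{\hat\sigma_c}\to b_{\dot u}\to$ variation of $\sigma$. The only cosmetic difference is that you keep the convolution integral explicitly where the paper invokes the first mean value theorem element-wise (which is where its $\sqrt{n}$ factors arise); your version bounds the same quantity at least as tightly, so the stated constant $b_{\tilde\sigma}(\rho,\rho_u)$ still follows.
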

\begin{proof}
In the following proof,  we often omit the dependence of $\sigma$ on $x$ and $u$, i.e., simply writing $\sigma(t)$ for $\sigma(t,x(t),u(t))$, for notation brevity.  {As explained in the proof of \cref{lem:f-f1-f2-constants-relation},  \cref{assump:ftx-semiglobal_lipschitz,assump:ft0-bound} imply $\norm{f(t,x)}\leq  L_f^\rho \rho + b_f^0$, for all  $\norm{x}\leq \rho$ and for all $t\geq 0$, 
which, together with the definition of ${\bar f}(t,x)$ in \cref{eq:f1-f2-f-relation} and  \eqref{eq:x-u-tau-bound-assump-in-lemma}, indicate 
\begin{equation}\label{eq:f-bar-bound-in-rho}
 \norm{{\bar f}(t,x)}\leq L_f^\rho \rho + b_f^0 + b_B\max\limits_{\omega\in\Omega} \norm{(\omega\!-\!\mbI_m)}b_{K_x}\rho,\quad
 \forall t\in[0,\tau].  
\end{equation}
From \cref{eq:f-bar-bound-in-rho} and the definition of $\sigma(t,x,u)$ in \eqref{eq:sigma-txu-defn}, we have  
\begin{flalign}
  &   \norm{\sigma(t,x,u)} \leq \norm{B(\theta)}\norm{\omega-\mbI_m}\rho_u +\norm{{{\bar f}(t,x)}} \leq  L_f^\rho \rho+b_f^0+ b_{B}\max_{\omega\in\Omega}\norm{\omega\!-\!\mbI_m} (\rho_u\!+\!b_{K_x}\rho) = b_\sigma(\rho,\rho_u), \label{eq:sigma-bound-in-tau}
\end{flalign}
for any $t$ in $[0,\tau]$}. Equations \cref{eq:f-bar-bound-in-rho,eq:plant-dynamics-f12-a-fhat,eq:x-u-tau-bound-assump-in-lemma} imply 
\begin{equation}\label{eq:xdot-b-xdot-relation}
    \norm{\dot{x}(t)} \leq \norm{A_m(\theta)}\norm{x(t)}+\norm{B(\theta)}\norm{\omega}\norm{u(t)}
    +\norm{{{\bar f}(t,x)}} \leq b_{\dot{x}}(\rho,\rho_u), \quad  \forall t\in[0,\tau],  
\end{equation}
where $b_{\dot{x}}(\rho,\rho_u)$ is defined in \eqref{eq:b-xdot-defn}. From the prediction error dynamics in \eqref{eq:prediction-error}, we have for $0\leq t <T$, $i\in\mbZ_0$,
{\begin{align}
   & \tilx(iT+t) = e^{-at}\tilx(iT)\!+\!\int_{iT}^{iT+t}\! e^{-a(iT+t-\xi)}(\hsigma(\xi)\!-\!\sigma(\xi))d\xi  =  e^{-at}\tilx(iT)+\int_{iT}^{iT+t}\! e^{-a(iT+t-\xi)}(\hsigma(iT)\!-\!\sigma(\xi))d\xi,\label{eq:tilx-iTplust}
    \vspace{-3mm}
\end{align}}which implies that for $i\in\mbZ_0$,
$
     \tilx((i\!+\!1)T) = e^{-aT}\tilx(iT)\!+\!\int_{iT}^{(i\!+\!1)T} e^{-a((i\!+\!1)T-\xi)}\hsigma(iT)d\xi 
    -\int_{iT}^{(i\!+\!1)T} e^{-a((i\!+\!1)T-\xi)} \sigma(\xi)d\xi
$.
Plugging the adaptive law in \eqref{eq:adaptive_law} into the above equation gives 
{\begin{equation}\label{eq:xtilde-i1T-sigma}
     \tilx((i+1)T) = 
  -\int_{iT}^{(i+1)T} e^{-a((i+1)T-\xi)} \sigma(\xi)d\xi.
\end{equation}}
Furthermore, since $e^{-a((i+1)T-\xi)}$ is always positive and $\sigma(\xi)$ is continuous, we can apply the first mean value theorem in an element-wise manner
to \eqref{eq:xtilde-i1T-sigma}. Doing this gives $\tilx((i+1)T) = -\int_{iT}^{(i+1)T} e^{-a((i+1)T-\xi)} d\xi \bm{[}\sigma_j(\tau_j^*)\bm{]} $, further implying
\begin{equation}\label{eq:xtilde-iPlus1T-tau-star}
     \tilx((i+1)T) =  -\frac{1}{a}(1-e^{-aT})\bbracket{\sigma_j(\tau_j^*)}, 
\end{equation}
for some $\tau_j^*\in(iT,(i+1)T)$, for $j\in\mbZ_1^n$ and $i\in\mbZ_0$, where $\sigma_j(t)$ is the $j$-th element of $\sigma(t)$, and $$\bbracket{\sigma_j(\tau_j^*)}\triangleq 
[\sigma_1(\tau_1^*), \sigma_2(\tau_2^*),\cdots, \sigma_n(\tau_n^*)
]^{\top\!}.$$ The adaptive law in \eqref{eq:adaptive_law} and the equality \eqref{eq:xtilde-iPlus1T-tau-star} indicate that for any $t\in\left[(i+1)T,  (i+2)T\right) \\ \cap [0,\tau]$ with $i\in\mbZ_0$, there exist $\tau_j^*\in(iT,(i+1)T)$ ($j\in\mbZ_1^n$) such that 
\begin{align}
    \hsigma(t) = -\frac{a}{e^{a T}-1}\tilx((i+1)T)= e^{-aT}\bbracket{\sigma_j(\tau_j^*)}. \label{eq:hsigma-sigma-relation}
\end{align}
Note that 
\begin{align}
    \norm{\sigma(t)-\bbracket{\sigma_j(\tau_j^*)}}\leq  \sqrt{n}\infnorm{\sigma(t)-\bbracket{\sigma_j(\tau_j^*)}} 
    =  \sqrt{n}\abs{\sigma_{j^\star_t}(t)-{\sigma_{j^\star_t}(\tau_{j^\star_t}^*)}} \leq \sqrt{n}\norm{\sigma(t)-{\sigma(\tau_{j^\star_t}^*)}},\label{eq:sigma-sigmaStar-index-conversion}
\end{align}where $j^\star_t=\arg\max_{j\in\mbZ_1^n} \abs{\sigma_j(t)-{\sigma_j(\tau_j^*)}}$. Similarly, we have 
{\begin{align}
    \norm{\bbracket{\sigma_j(\tau_j^*)}}  & \leq \sqrt{n}\abs{{\sigma_{{\bar j^\star}}(\tau_{{\bar j^\star}}^*)}} \leq \sqrt{n}\norm{{\sigma(\tau_{{\bar j^\star}}^*)}},\label{eq:sigmaStar-index-conversion}
\end{align}}
where ${\bar j^\star}=\arg\max_{j\in\mbZ_1^n} \abs{{\sigma_j(\tau_j^*)}}$. The inequality \eqref{eq:sigmaStar-index-conversion}, together with \eqref{eq:hsigma-sigma-relation}, implies that for any $t$ in $[T,\tau]$, we have
$
    \norm{\hsigma(t)}\leq e^{-aT}\sqrt{n}~b_\sigma(\rho,\rho_u) = b_{\hsigma}(\rho,\rho_u),
$ where $b_{\hsigma}(\rho,\rho_u)$ is defined in \eqref{eq:b-hsigma-defn}. 
Further considering that $\hsigma(t)=0$ in $[0,T)$,
we have
$
\linfnormtruc{\hsigma}{\tau}\leq b_{\hsigma}(\rho,\rho_u).
$
On the other hand, the equality in \eqref{eq:hsigma_m_um_hsigma_relation} indicates that
\begin{equation*}\label{eq:sigmahat_m_um-b-sigmac-relation}
 \linfnormtruc{\begin{bmatrix}[1]
{{{\hat \sigma }_m}}\\
{{{\hat \sigma }_{um}}}
\end{bmatrix}}{\tau}  \le {\max _{\theta  \in \Theta }}\left\| {{\bar B^{ - 1}}(\theta )} \right\|{ \linfnormtruc{\hat \sigma}{\tau}} \le {b_{B^\dag}} {b_{\hsigma}}(\rho ,{\rho _u}),
\end{equation*}
which, together with the control law in \eqref{eq:l1-control-law}, implies 
{\begin{align}
\linfnormtruc{\dot u}{\tau}  & \le  \left\| K \right\|\left[ {\rho _u} + \linfnormtruc{\hsigma_m}{\tau} + {{\left\| {\bar {\mathcal H} (\theta )} \right\|}_{{{\bar {\mathcal L} }_1}}}{\linfnormtruc{\hsigma_{um}}{\tau} }\!  +\! b_{K_r} \linfnorm{r} \right ] \nonumber \\
 &  \le\left\| K \right\|\left[ {{\rho _u}} + \left( 1+ \left\| {\bar {\mathcal H} (\theta )} \right\|_{{\bar {\mathcal L} }_1} \right) b_{B^\dag} {b_{\hat \sigma}}(\rho ,{\rho _u}) \right]   + \!\left\| K \right\| b_{K_r} {\left\| r \right\|}_{{\mathcal L}_\infty}  \!\!= b_{\dot u}(\rho,\rho_u). \hspace{-2mm} \label{eq:udot-b-udot-relation}
\end{align}}
Define $\tilde \omega \trieq \omega-\mbI_m$. With the bounds on $\norm{\dot x(t)}$ and $\norm{\dot u(t)}$ in \eqref{eq:xdot-b-xdot-relation} and \eqref{eq:udot-b-udot-relation}, a bound on $\left\| {\sigma (t) - \sigma (\tau _{j^\star_t}^*)} \right\| $ (used in \eqref{eq:sigma-sigmaStar-index-conversion}) is given by \begin{subequations}
\begin{align}
 \nonumber 
& \left\| {\sigma (t) - \sigma (\tau _{j^\star_t}^*)} \right\| = \left\| {{B}(\theta (t))\tilde \omega u(t) + {\bar f}(t,x(t)) - {B}(\theta (\tau _{j^\star_t}^*))\tilde \omega u(\tau _{j^\star_t}^*) - {\bar f}(\tau _{j^\star_t}^*,x(\tau _{j^\star_t}^*))} \right\| \nonumber \\
 &= \left\| { {{B}(\theta (t))\tilde \omega\left( {u(t) - u(\tau _{j^\star_t}^*)} \right) + \left( {{B}(\theta (t)) - {B}(\theta (\tau _{j^\star_t}^*))} \right)\tilde \omega u(\tau _{j^\star_t}^*)} + {\bar f}(t,x(t)) - {\bar f}(\tau _{j^\star_t}^*,x(\tau _{j^\star_t}^*))} \right\| \nonumber  \\
 &\le \left\| {\tilde \omega} \right\|\left( {\left\| {{B}(\theta (t))} \right\|\left\| {u(t) - u(\tau _{j^\star_t}^*)} \right\| + \left\| {{B}(\theta (t)) - {B}(\theta (\tau _{j^\star_t}^*))} \right\|\left\| {u(\tau _{j^\star_t}^*)} \right\|} \right) + \left\| {{\bar f}(t,x(t)) - {\bar f}(\tau _{j^\star_t}^*,x(\tau _{j^\star_t}^*))} \right\| \nonumber  \\
 &\le \left\| {\tilde \omega} \right\|\left( {\left\| {{B}(\theta (t))} \right\|\left\| {\mathop \smallint \nolimits_{\tau _{j^\star_t}^*}^t \dot u(\xi )d\xi } \right\| + \left\| {{B}(\theta (t)) - {B}(\theta (\tau _{j^\star_t}^*))} \right\|\left\| {u(\tau _{j^\star_t}^*)} \right\|} \right) + {l_f}(t - \tau _{j^\star_t}^*) + L_f^\rho \left\| {x(t) - x(\tau _{j^\star_t}^*)} \right\| \nonumber  \\
 &\le \left\| {\tilde \omega} \right\|\left( {{b_{{B}}}\mathop \smallint \nolimits_{\tau _{j^\star_t}^*}^t \left\| {\dot u(\xi )} \right\|d\xi  + L_B\left\| {\theta (t) - \theta (\tau _{j^\star_t}^*)} \right\|\left\| {u(\tau _{j^\star_t}^*)} \right\|} \right) + {l_f}(t - \tau _{j^\star_t}^*) + L_f^\rho \left\| {\mathop \smallint \nolimits_{\tau _{j^\star_t}^*}^t \dot x(\xi )d\xi } \right\| \label{eq:sigma-sigmaStar-bound-intermed-1} \\
 &\le \left\| {\tilde \omega} \right\|\left( {{b_{{B}}}{b_{\dot u}}(\rho ,{\rho _u})(t - \tau _{j^\star_t}^*) + L_B\left\| {\mathop \smallint \nolimits_{\tau _{j^\star_t}^*}^t \dot \theta (\xi )d\xi } \right\|{\rho _u}} \right) + {l_f}(t - \tau _{j^\star_t}^*) + L_f^\rho {b_{\dot x}}(\rho ,{\rho _u})(t - \tau _{j^\star_t}^*) \label{eq:sigma-sigmaStar-bound-intermed-2}\\
 &\le \! \left( {\left\| {\tilde \omega}  \right\|\! \left( {{b_{{B}}}{b_{\dot u}}(\rho ,{\rho _u}) \!  + \! L_B{b_{\dot \theta }}{\rho _u}} \! \right) \!  + \! {l_f} \!  + \! L_f^\rho {b_{\dot x}}(\rho ,{\rho _u})} \! \right)(t \!  -\! \tau _{j^\star_t}^*) \le 2T\! \left( {\left\| {\tilde \omega}  \right\|\! \left( {{b_{{B}}}{b_{\dot u}}(\rho ,{\rho _u}) \!  + \! L_B{b_{\dot \theta }}{\rho _u}} \! \right) \!  + \! {l_f} \!  + \! {L_f^\rho }{b_{\dot x}}(\rho ,{\rho _u})} \! \right) \label{eq:sigma-sigmaStar-bound}
\end{align}
\end{subequations}
where \cref{eq:sigma-sigmaStar-bound-intermed-1} is due to \cref{eq:B-lipschitz-const}, \cref{eq:sigma-sigmaStar-bound-intermed-2} is due to \cref{eq:udot-b-udot-relation}, and  \eqref{eq:sigma-sigmaStar-bound} 
is due to the fact of $t\in\left[(i+1)T,  (i+2)T\right)\cap [0,\tau]$ with $i\in\mbZ_0$ and $\tau _{j^\star_t}^*\in(iT,(i+1)T)$.
Therefore, for  any $t\in [T, \tau]$, we have
$
 \norm{\tilsigma(t)} \! =\! \norm{\sigma(t)-\hsigma(t)}  
\!  = \!\norm{\sigma(t)-e^{-aT}\bbracket{\sigma_j(\tau_j^*)}} 
\!\leq \!\norm{\sigma(t)-\bbracket{\sigma_j(\tau_j^*)}}+(1\!-\!e^{-aT})  \norm{\bbracket{\sigma_j(\tau_j^*)}} 
\!  \leq  \! \sqrt{n}\norm{\sigma(t)-{\sigma(\tau_{j^\star_t}^*)}}  \! +\! (1\!-\!e^{-aT})  \sqrt{n}\norm{{\sigma(\tau_{{\bar j^\star}}^*)}},
$
where the third equality is due to \cref{eq:hsigma-sigma-relation}, and the last inequality is due to \eqref{eq:sigma-sigmaStar-index-conversion} and \cref{eq:sigmaStar-index-conversion}. 
Plugging the bounds \eqref{eq:sigma-bound-in-tau} and \eqref{eq:sigma-sigmaStar-bound}  into the preceding inequality and considering the definition in \eqref{eq:b-tilsigma-defn} lead to $ \norm{\tilsigma(t)} \leq b_{\tilsigma}(\rho,
\rho_u)$ for any $t\in [T, \tau]$. 
On the other hand, the inequality \eqref{eq:sigma-bound-in-tau} and the fact of $\hsigma(0)=0$ imply 
$
 \norm{\tilsigma(t)} =  \norm{\sigma(t}\leq b_{\sigma}(\rho,\rho_u)
$ for any $t\in[0,T)$. 
As a result, we have proved \eqref{eq:tilsigma-bound}.  \qedclosed
\end{proof}

\begin{remark}
Given finite $\rho$ and $\rho_u$ and bounded reference signal $r$ in $[0,\tau]$, the definition in  \eqref{eq:b-tilsigma-defn} implies 
$
    \lim_{T\rightarrow 0}b_{\tilsigma}(\rho,\rho_u) =0. 
$
This means that the uncertainty estimation error in $[T,\tau]$ can be made arbitrarily small by decreasing $T$, even in the presence of initialization error, i.e., $\tilx(0)\neq 0$.
\vspace{0mm} \end{remark}
\begin{lemma}\label{lem:F-theta-tilsigma-bound}
Given the dynamic mapping $\mcF(\theta)$  in \eqref{eq:F-theta-ss-form}, if Assumption~\ref{assump:F-theta-stability} holds and the input $\tilsigma$ is bounded by \eqref{eq:tilsigma-bound}, we have 
\begin{equation}\label{eq:y_F-bound}
\linfnormtruc{y_\mcF}{\tau}\leq
\gamma_0(T,\rho,\rho_u),
\end{equation}
where 
{\begin{align}
\gamma_0(T,\rho,\rho_u) \trieq  & \max_{\theta\in \Theta}\norm{C_\mcF(\theta)}\left(
\tilde\lambda \beta_\Theta(T,\mu_\mcF,P_\mcF,B_\mcF)b_{\sigma}(\rho,\rho_u) + \beta_\Theta(\infty,\mu_\mcF,P_\mcF,B_\mcF) b_{\tilsigma}(\rho,\rho_u)
\right),\label{eq:by_F-def}
\vspace{-3mm}
\end{align}}
with $\tilde\lambda\trieq \sqrt{\frac{\lammaxtheta{P_\mcF}}{\lammintheta{P_\mcF}}}$ and $\beta_\Theta(\cdot,\cdot,\cdot,\cdot) $ defined in \eqref{eq:beta-defn-in-LPV-input-output-response}. 
Furthermore,
\begin{equation}\label{eq:gamma0-T-go-to-zero}
    \lim_{T\rightarrow 0}\gamma_0(T,\rho,\rho_u) =  0.
\end{equation}
\vspace{-5mm}\end{lemma}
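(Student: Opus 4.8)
The plan is to recognize $\mcF(\theta)$ in \eqref{eq:F-theta-ss-form} as a special case of the generic LPV system \eqref{eq:lpv-system-general} — with zero initial state, input $\tilsigma$, state/input/output matrices $A_\mcF(\theta),B_\mcF(\theta),C_\mcF(\theta)$ and no direct feedthrough — and then invoke \cref{lem:lpv-time-domain-response-split@T} directly. First I would observe that \cref{assump:F-theta-stability} is precisely the Lyapunov inequality \eqref{eq:appendix-Lyapunov-stability-cond} required by that lemma, under the identifications $\mrP(\theta)=P_\mcF(\theta)$, $\mu=\mu_\mcF$, $\mrB(\theta)=B_\mcF(\theta)$, $\lambda_1=\lammintheta{P_\mcF}$ and $\lambda_2=\lammaxtheta{P_\mcF}$, so that $\tilde\lambda=\sqrt{\lambda_2/\lambda_1}$. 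Hence the hypotheses of \cref{lem:lpv-time-domain-response-split@T} are met.

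Next I would feed the two-piece input bound \eqref{eq:tilsigma-bound} into that lemma. On $[0,T]$ the input obeys $\linfnormtrucinterval{\tilsigma}{0,T}\leq b_\sigma(\rho,\rho_u)$ (the jump of $\hsigma_c$ at $T$ enters $x_\mcF$ only through an integral and so is irrelevant on a set of measure zero), while $\linfnormtrucinterval{\tilsigma}{T,t}\leq b_{\tilsigma}(\rho,\rho_u)$ for every $t\le\tau$. Substituting these into the two branches of \eqref{eq:x-bound-u-sharp-exp-0-T-infty} gives $\norm{x_\mcF(t)}\le \beta_\Theta(T,\mu_\mcF,P_\mcF,B_\mcF)\, b_\sigma(\rho,\rho_u)$ for $t\le T$, and $\norm{x_\mcF(t)}\le \tilde\lambda\, \beta_\Theta(T,\mu_\mcF,P_\mcF,B_\mcF)\, b_\sigma(\rho,\rho_u)+\beta_\Theta(\infty,\mu_\mcF,P_\mcF,B_\mcF)\, b_{\tilsigma}(\rho,\rho_u)$ for $T<t\le\tau$. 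Since $\tilde\lambda\ge 1$ and the second summand is nonnegative, the latter bound dominates the former, hence it is a uniform bound on $\linfnormtruc{x_\mcF}{\tau}$. Multiplying by $\max_{\theta\in\Theta}\norm{C_\mcF(\theta)}$ — using $\norm{y_\mcF(t)}=\norm{C_\mcF(\theta)x_\mcF(t)}\le\max_{\theta\in\Theta}\norm{C_\mcF(\theta)}\,\norm{x_\mcF(t)}$ — yields exactly \eqref{eq:y_F-bound} with $\gamma_0$ as in \eqref{eq:by_F-def}.

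For the limit \eqref{eq:gamma0-T-go-to-zero} I would inspect the two terms of $\gamma_0$ separately. The factor $\max_{\theta\in\Theta}\norm{C_\mcF(\theta)}$, the constant $\beta_\Theta(\infty,\mu_\mcF,P_\mcF,B_\mcF)$, and $b_\sigma(\rho,\rho_u)$ in \eqref{eq:b-sigma-defn} are all independent of $T$. The first term carries $\beta_\Theta(T,\mu_\mcF,P_\mcF,B_\mcF)$, which from \eqref{eq:beta-defn-in-LPV-input-output-response} contains the factor $\sqrt{\tilde\lambda(1-e^{-\mu_\mcF T})}$ and hence tends to $0$ as $T\to 0$; the second term carries $b_{\tilsigma}(\rho,\rho_u)$, which tends to $0$ as $T\to 0$ (as already noted in the remark following \cref{lem:sigmatilde-xtilde-bound}, because the bracketed coefficient in \eqref{eq:b-tilsigma-defn} stays bounded — observe $b_{\hsigma_c}$ contains only the bounded factor $e^{-aT}$ — while both $T$ and $1-e^{-aT}$ vanish). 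Therefore both terms vanish and \eqref{eq:gamma0-T-go-to-zero} follows.

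The substitutions above are routine; the only points needing care are (i) verifying that the $t>T$ branch of \eqref{eq:x-bound-u-sharp-exp-0-T-infty} majorizes the $t\le T$ branch, so that a single uniform bound holds on all of $[0,\tau]$ — this is exactly where $\tilde\lambda\ge 1$ is used — and (ii) justifying that the coefficient multiplying $T$ in \eqref{eq:b-tilsigma-defn} remains bounded as $T\to 0$, which is what guarantees $b_{\tilsigma}\to 0$ and hence the vanishing of the second term of $\gamma_0$. Neither is a genuine obstacle, so the lemma reduces essentially to a clean application of \cref{lem:lpv-time-domain-response-split@T}.
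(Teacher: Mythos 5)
Your proposal is correct and follows essentially the same route as the paper's own proof: both apply Lemma~\ref{lem:lpv-time-domain-response-split@T} under Assumption~\ref{assump:F-theta-stability} with the two-piece bound \eqref{eq:tilsigma-bound}, take the $t>T$ branch as the uniform bound on $\linfnormtruc{x_\mcF}{\tau}$, multiply by $\max_{\theta\in\Theta}\norm{C_\mcF(\theta)}$, and obtain the limit from $\beta_\Theta(T,\cdot)\to 0$ and $b_{\tilsigma}\to 0$. Your additional justifications (the domination of the first branch via $\tilde\lambda\ge 1$ and the boundedness of the coefficient of $T$ in \eqref{eq:b-tilsigma-defn}) are correct details the paper leaves implicit.
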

\begin{proof}
Since Assumption~\ref{assump:F-theta-stability} holds, applying Lemma~\ref{lem:lpv-time-domain-response-split@T} while considering the bound on $\tilsigma$ in \eqref{eq:tilsigma-bound}
immediately lead to
\begin{equation*}
    \norm{x_\mcF(t)}\leq
\left\{
\!\!\!\begin{array}{l}
\beta_\Theta(T,\mu_\mcF,P_\mcF,B_\mcF)  b_{\sigma}(\rho,\rho_u),\ \ \forall t\in[0, T],\\ \sqrt{\frac{\lammaxtheta{P_\mcF}}{\lammintheta{P_\mcF}}}\beta_\Theta(T,\mu_\mcF,P_\mcF,B_\mcF)  b_{\sigma}(\rho,\rho_u) + \beta_\Theta(\infty,\mu_\mcF,P_\mcF,B_\mcF) b_{\tilsigma}(\rho,\rho_u), \ \ \forall t\in(T,\tau],
\end{array}
\right.
\end{equation*}
which further implies
$
    \linfnormtruc{x_\mcF}{\tau} \leq \sqrt{\frac{\lammaxtheta{P_\mcF}}{\lammintheta{P_\mcF}}}\beta_\Theta(T,\mu_\mcF,P_\mcF,B_\mcF)  b_{\sigma}(\rho,\rho_u) + \beta_\Theta(\infty,\mu_\mcF,P_\mcF,B_\mcF) b_{\tilsigma}(\rho,\rho_u). 
$
The above inequality together with the output equation in \eqref{eq:F-theta-ss-form} immediately gives \eqref{eq:y_F-bound}. On the other hand, the definitions in \eqref{eq:beta-defn-in-LPV-input-output-response} and \eqref{eq:b-tilsigma-defn} indicate 
$\lim_{T\rightarrow 0} \beta_\Theta(T,\mu_\mcF,P_\mcF,B_\mcF) = 0$ and $\lim_{T\rightarrow 0} b_{\tilsigma}(\rho,\rho_u) =0$, 
which, together with the fact that all other terms on the right-hand side of \eqref{eq:by_F-def} are finite, implies \eqref{eq:gamma0-T-go-to-zero}. \qedclosed
\end{proof}

Next, we show that if $T$ is chosen to satisfy
\begin{equation}\label{eq:T-constraints}
    \gamma_0(T,\rho,\rho_u) < \bar{\gamma}_0,
\end{equation}
then we can prove the stability and derive the performance bounds of the actual closed-loop adaptive system with respect to the reference system defined in \cref{eq:reference-system}.  

\begin{theorem}\label{them:x-xref-bounds}
Consider the uncertain system \eqref{eq:plant-dynamics} subject to \cref{assum:theta-thetadot-bounded}, {the control law \cref{eq:control-composition} consisting of the baseline controller \cref{eq:control-baseline-general} and the robust adaptive controller} defined via \cref{eq:state-predictor,eq:adaptive_law,eq:l1-control-law},  and  the reference system \eqref{eq:reference-system}. {Suppose \cref{assump:F-theta-stability,assump:B-lipschitz-bounded,assump:unknown-input-gain,assump:ft0-bound,assump:ftx-semiglobal_lipschitz,assump:ftx-rate-bounded,assump:desired-dynamics-stable-Lyapunov},
the stability condition  \eqref{eq:l1-stability-condition} holds with a user-selected constant $\gamma_1$ that can be arbitrarily small, and  the constraints  \cref{eq:underline-gamma_1-defn-constr,eq:T-constraints} hold. Then,} we have 
\begin{align}
    \linfnorm{x} & \leq \rho, \label{eq:x-bound}\\
    \linfnorm{u} & \leq \rho_u, \label{eq:u-bound}\\
    \linfnorm{x_\rt-x} &\leq \gamma_1, \label{eq:xref-x-bound}\\
    \linfnorm{u_\rt-u} &\leq \gamma_2. \label{eq:uref-u-bound}\\
    \linfnorm{y_\rt-y} &\leq b_C \gamma_1,\label{eq:yref-y-bound}
    \vspace{-3mm}
\end{align}
where $\rho$, $\rho_u$ and $\gamma_2$ are defined in \cref{eq:rho-defn}, \cref{eq:rho_u-defn} and \cref{eq:gamma2-defn}, respectively. 
\end{theorem}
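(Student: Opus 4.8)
The plan is to run a first-exit (continuity) argument on the truncated norms, bootstrapping from the hypothesized bounds $\linfnormtruc{x}{\tau}\leq\rho$, $\linfnormtruc{u}{\tau}\leq\rho_u$ to the desired ones, with the discrepancy between the adaptive and reference systems expressed as an input--output mapping driven solely by the estimation error $\tilsigma$. First I would set up the contradiction: $x(\cdot)$ and $u(\cdot)$ are continuous (the latter as the output of the ODE \cref{eq:l1-control-law}), and at $t=0$ we have $\norm{x(0)}=\norm{x_0}\leq\rho_0<\rho_r<\rho$ and $\norm{u(0)}<\rho_u$, so if \cref{eq:x-bound} or \cref{eq:u-bound} failed there would be a first time $\tau$ with $\linfnormtruc{x}{\tau}=\rho$ or $\linfnormtruc{u}{\tau}=\rho_u$, while $\linfnormtruc{x}{\tau}\leq\rho$ and $\linfnormtruc{u}{\tau}\leq\rho_u$ still hold on $[0,\tau]$. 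These last two inequalities are exactly hypothesis \cref{eq:x-u-tau-bound-assump-in-lemma} of \cref{lem:sigmatilde-xtilde-bound}, so on $[0,\tau]$ the estimation error obeys \cref{eq:tilsigma-bound}, and \cref{lem:F-theta-tilsigma-bound} yields $\linfnormtruc{y_\mcF}{\tau}\leq\gamma_0(T,\rho,\rho_u)<\bar\gamma_0$, the final inequality being \cref{eq:T-constraints}.

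Next I would derive the key error identity. Using \cref{rem:filter-in-control-law} to write $u=-\omega^{-1}\mcC\left((\mbI_m-\omega)u+\hsigma_m+\hat\eta_2-K_r(\theta)r\right)$ and substituting $\hsigma_m=\tileta_m+(\omega-\mbI_m)u+\eta_1$ and $\hsigma_{um}=\tileta_{um}+\eta_2$ from \cref{eq:eta-i-tileta-i-defn}, together with $\tileta_m=B^\dagger(\theta)\tilsigma$, $\tileta_{um}=\Bu^\dagger(\theta)\tilsigma$ from \cref{eq:tileta-m-um-sigma-error} and the definition \cref{eq:Ftheta-defn} of $\mcF(\theta)$, the true-uncertainty terms regroup into the reference control while the estimation-error terms collapse into $\mcF(\theta)\tilsigma$, giving
\[
\omega(u-u_\rt)=-\mcC\left((\eta_1-\eta_{1\rt})+\barH(\theta)(\eta_2-\eta_{2\rt})\right)-\mcF(\theta)\tilsigma .
\]
Feeding this into the dynamics of $x-x_\rt$ (which has zero initial condition since $x(0)=x_\rt(0)=x_0$), expressing $x-x_\rt$ through the mappings $\mcH_{xm}(\theta)$ and $\mcH_{xum}(\theta)$, and regrouping via the definitions \cref{eq:Gxm-Gxum-defn} of $\mcG_{xm}(\theta)$ and $\mcG_{xum}(\theta)$ yields
\[
x-x_\rt=\mcG_{xm}(\theta)(\eta_1-\eta_{1\rt})+\mcG_{xum}(\theta)(\eta_2-\eta_{2\rt})-\mcH_{xm}(\theta)\mcF(\theta)\tilsigma ,
\]
where the last term equals $\mcH_{xm}(\theta)y_\mcF$.

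Then I would close the loop with a small-gain (contraction) estimate. Applying \cref{lem:input-output-ppg-bound-relation} and the semiglobal Lipschitz bounds $\linfnormtruc{\eta_i-\eta_{i\rt}}{\tau}\leq L_{f_i}^\rho\linfnormtruc{x-x_\rt}{\tau}$ (valid since both trajectories lie within $\rho$) to the displayed identity gives
\[
\linfnormtruc{x-x_\rt}{\tau}\leq\left(\lonenormbar{\mcG_{xm}(\theta)}L_{f_1}^\rho+\lonenormbar{\mcG_{xum}(\theta)}L_{f_2}^\rho\right)\linfnormtruc{x-x_\rt}{\tau}+\lonenormbar{\mcH_{xm}(\theta)}\bar\gamma_0 .
\]
The stability condition \cref{eq:l1-stability-condition}, after dropping its nonnegative terms and dividing by $\rho_r$, forces the bracketed gain strictly below $1$; solving and invoking the definition \cref{eq:underline-gamma_1-defn-constr} of $\bar\gamma_0$ then produces $\linfnormtruc{x-x_\rt}{\tau}<\gamma_1$, i.e.\ \cref{eq:xref-x-bound}. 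The input bound \cref{eq:uref-u-bound} follows analogously from the expression for $u-u_\rt$: bounding $\omega^{-1}\mcC$ by its $\lone$ gain, $\omega^{-1}\mcC\barH(\theta)$ by its PPG bound, and the residual $\omega^{-1}\mcF(\theta)\tilsigma=\omega^{-1}y_\mcF$ by $\norm{\omega^{-1}}\bar\gamma_0$, which reproduces exactly the definition \cref{eq:gamma2-defn} of $\gamma_2$.

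Finally I would complete the contradiction and conclude. Combining the two difference bounds with the reference-system bounds $\linfnormtruc{x_\rt}{\tau}<\rho_r$ and $\linfnormtruc{u_\rt}{\tau}<\rho_{ur}$ from \cref{lem:xref-bounds}, the triangle inequality gives $\linfnormtruc{x}{\tau}<\rho_r+\gamma_1=\rho$ and $\linfnormtruc{u}{\tau}<\rho_{ur}+\gamma_2=\rho_u$, both strict, contradicting the choice of $\tau$; hence \cref{eq:x-bound,eq:u-bound} hold for all $t$, and re-running the estimates with $\tau\to\infty$ delivers \cref{eq:xref-x-bound,eq:uref-u-bound} globally, while \cref{eq:yref-y-bound} is immediate from $y-y_\rt=C(\theta)(x-x_\rt)$ and $\norm{C(\theta)}\leq b_C$. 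I expect the main obstacle to be the algebraic reduction producing the clean error identity --- specifically, verifying that every $\eta$--$\eta_\rt$ cross term collapses precisely into $\mcG_{xm}(\theta)$ and $\mcG_{xum}(\theta)$ while the entire estimation-error contribution funnels through the single mapping $\mcF(\theta)$, since it is exactly this structure that makes the contraction gain in \cref{eq:l1-stability-condition} and the residual constant $\bar\gamma_0$ line up with the definitions of $\gamma_1$ and $\gamma_2$.
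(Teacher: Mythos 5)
Your proposal is correct and follows essentially the same route as the paper's proof: the same chain Lemma~\ref{lem:xref-bounds} $\to$ Lemma~\ref{lem:sigmatilde-xtilde-bound} $\to$ Lemma~\ref{lem:F-theta-tilsigma-bound}, the same input--output identity $x_\rt-x=\mcG_{xm}(\theta)(\eta_{1\rt}-\eta_1)+\mcG_{xum}(\theta)(\eta_{2\rt}-\eta_2)+\mcH_{xm}(\theta)\mcF(\theta)\tilsigma$, and the same small-gain closure via $\zeta>0$ from \eqref{eq:l1-stability-condition} together with \eqref{eq:T-constraints} and \eqref{eq:underline-gamma_1-defn-constr}. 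The only (immaterial) difference is that you run the first-exit contradiction on the absolute bounds $\|x\|=\rho$, $\|u\|=\rho_u$, whereas the paper runs it on the difference bounds $\|x_\rt-x\|=\gamma_1$, $\|u_\rt-u\|=\gamma_2$ and recovers $\rho,\rho_u$ by the triangle inequality.
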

\begin{proof}
{As shown in \cref{sec:problem_formulation}, under the baseline controller \cref{eq:control-baseline-general}, the compositional control law \cref{eq:control-composition} and \cref{assump:B-lipschitz-bounded}, we have that (i) the dynamics in \eqref{eq:plant-dynamics} is equivalent to \eqref{eq:plant-dynamics-f12}, and (ii) conditions \cref{eq:fi-lipschitz-cond,eq:fit0-bound} hold according to \cref{lem:f-f1-f2-constants-relation} since \cref{assump:unknown-input-gain,assum:theta-thetadot-bounded,assump:B-lipschitz-bounded,assump:ft0-bound,assump:ftx-semiglobal_lipschitz,assump:ftx-rate-bounded}  hold.} 
We next use contradiction for the proof. Assume that at least one of the bounds in \eqref{eq:xref-x-bound} and \eqref{eq:uref-u-bound} does not hold. Then, since $\norm{x_\rt(0)-x(0)} =0 < \gamma_1$, $\norm{u_\rt(0)-u(0)} =0 < \gamma_2$, and $x(t)$, $x_\rt(t)$, $u(t)$ and $u_\rt(t)$ are continuous, there exists $\tau>0$ such that 
$
\norm{x_\rt(\tau)-x(\tau)} = \gamma_1 \ \textup{or} \ 
\norm{u_\rt(\tau)-u(\tau)} = \gamma_2,
$
while $
   \norm{x_\rt(t)-x(t)} < \gamma_1$ and $  \norm{u_\rt(t)-u(t)} < \gamma_2$ for any $ t$ in $[0,\tau)$.
This implies that at least one of the following equalities holds:
\begin{equation}\label{eq:xref-x-tau-infnorm}
    \linfnormtruc{x_\rt-x}{\tau} = \gamma_1, \quad \linfnormtruc{u_\rt-u}{\tau} = \gamma_2.
\end{equation}
{Since \cref{assump:desired-dynamics-stable-Lyapunov}, and conditions \cref{eq:fit0-bound}, \cref{eq:fi-lipschitz-cond} and \cref{eq:l1-stability-condition} hold}, it follows from Lemma~\ref{lem:xref-bounds} that
\begin{equation}\label{eq:xref-uref-tau-bounds}
    \linfnormtruc{x_{\rt}}{\tau} \leq \rho_r, \quad \linfnormtruc{u_{\rt}}{\tau} \leq \rho_{ur}.
\end{equation}
The definitions of $\rho$ and $\rho_u$ in \eqref{eq:rho-defn} and \eqref{eq:rho_u-defn}, and the bounds in \cref{eq:xref-x-tau-infnorm} and \eqref{eq:xref-uref-tau-bounds} imply that 
\begin{align}
  \linfnormtruc{x}{\tau}  &\leq \rho_r + \gamma_1 = \rho, \label{eq:x-tau-rhor-rho-relation} \\
    \linfnormtruc{u}{\tau} & \leq \rho_{ur}+\gamma_2 = \rho_u. \label{eq:u-tau-rhour-rhou-relation}
\end{align}
It follows from \eqref{eq:l1-control-law}  that
\begin{equation}\label{eq:u-expression-by-mapping}
\hspace{-2mm} u \! = \!  - {\omega ^{ - 1}}{\mcC}\left( {{\eta _1} \! + \! {\tileta_m} \!+\! \bar {\mathcal H} (\theta ){\eta_2}\! +\! \bar {\mathcal H} (\theta ){\tileta_{um}} \!-\! {K_r}(\theta )r} \right), \hspace{-3mm}
\end{equation}where $\eta_i$, $\tilde \eta_{m}$  and $\tilde \eta_{um}$ are defined in \eqref{eq:eta-i-tileta-i-defn}. Equation \cref{eq:u-expression-by-mapping}  further implies 
$\omega u =  - {\mcC}\left( {\eta _1} + {\tileta_m} +  \right.  \left. \bar {\mathcal H} (\theta ){\eta_2} + \bar {\mathcal H} (\theta ){\tileta_{um}} - {K_r}(\theta )r \right).$
Therefore, the system in \eqref{eq:plant-dynamics-f12} (which is equivalent to \eqref{eq:plant-dynamics} as explained at the beginning of this proof) can be rewritten using input-output mapping as:
\begin{align}
x =\ &  {\mcG_{xm}}(\theta ){\eta _1} +  {\mcG_{xum}}(\theta ){\eta_2} - {{\mathcal H}_{xm}}(\theta ){\mcC}({\tileta_m} + \bar {\mathcal H} (\theta ){\tileta_{um}}) + {{\mathcal H}_{xm}}(\theta ){\mcC}{K_r}(\theta )r + {x_{{\rm{in}}}} \nonumber\\
  =\ & {\mcG_{xm}}(\theta ){\eta _1} +  {\mcG_{xum}}(\theta ){\eta_2} - {{\mathcal H}_{xm}}(\theta ){\mathcal F}(\theta )\tilde \sigma + {{\mathcal H}_{xm}}(\theta ){\mcC}{K_r}(\theta )r + {x_{{\rm{in}}}}, \label{eq:x-input-output-map}
\end{align}
where the second equality is due to ${\mcC}({\tileta_m} + \bar {\mathcal H} (\theta ){\tileta_{um}}) = {\mcC}(B^\dagger(\theta) + \bar {\mathcal H} (\theta ) \Bu^\dagger(\theta))\tilsigma = \mcF(\theta)\tilsigma$ according to \eqref{eq:tileta-m-um-sigma-error} and \eqref{eq:Ftheta-defn}, with $\mcH_{xm}(\theta )$, $\mcG_m(\theta )$ and $\mcG_{um}(\theta )$ introduced in \eqref{eq:lpv-system-no-nonlinearity} and \cref{eq:Gm-Gum-defn}. 
The definition of the reference system in \cref{eq:reference-system} implies
\begin{equation}\label{eq:xref-input-output-map}
   {x_\rt} = {\mcG_{xm}}(\theta ){\eta _{1\rt}} \!+\!  {\mcG_{xum}}(\theta ){\eta _{2\rt}} \!+\! {{\mathcal H}_{xm}}(\theta ){\mcC}{K_r}(\theta )r \!+\! x_{{\rm{in}}}, \end{equation}
which, together with \eqref{eq:x-input-output-map}, leads to 
\begin{align}
    {x_\rt} - x =~& {\mcG_{xm}}(\theta )({\eta _{1\rt}} - {\eta _1}) + {\mcG_{xum}}(\theta )({\eta _{{2\rt}}} - \eta_2)  + {{\mathcal H}_{xm}}(\theta ){\mathcal F}(\theta )\tilde \sigma . \label{eq:xref-x-tileta-relation}
\end{align}
The condition \cref{eq:fi-lipschitz-cond} (which holds as explained at the beginning of the proof), together with \cref{eq:xref-uref-tau-bounds} and \cref{eq:x-tau-rhor-rho-relation}, implies
\begin{equation}\label{eq:eta-ref-eta-bound}
 {\left\| {{\eta _{i\rt}} \!-\! {\eta _i}} \right\|_{{\mathcal L}_\infty ^{[0,\tau ]}}} \le\! L_{f_i}^\rho{\left\| {{x_\rt} \!-\! x} \right\|_{{\mathcal L}_\infty ^{[0,\tau ]}}},
\end{equation}
where  $\eta_{i\rt}(t,x_\rt)=f_{i}(t,x_\rt)$ and $\eta_i(t)=f_i(t,x(t))$  according to  
\cref{eq:eta_ir-def} and \cref{eq:etai-defn}. Under \cref{assump:desired-dynamics-stable-Lyapunov}, according to \cref{lem:L1-Linf-relation}, equations \cref{eq:eta-ref-eta-bound} and \eqref{eq:xref-x-tileta-relation} imply $$ {\left\| {{x_\rt} \!-\! x} \right\|_{{\mathcal L}_\infty ^{[0,\tau ]}}}
 \!\le\!   \left\| {\mcG_{xm}}(\theta )\right\|_{{{\bar {\mathcal L} }_1}}{L_{f_1}^\rho}\left\| x_\rt \!-\! x \right\|_{{\mathcal L}_\infty ^{[0,\tau ]}} \!+\! \left\| {\mcG_{xum}}(\theta ) \right\|_{{{\bar {\mathcal L} }_1}}\!\!L_{f_2}^\rho\!\left\| {{x_\rt} \!-\! x} \right\|_{{\mathcal L}_\infty ^{[0,\tau ]}} \!+\! \left\| {{{\mathcal H}_{xm}}(\theta )} \right\|_{{\bar {\mathcal L} }_1}\!\left\| {\mathcal F}(\theta )\tilde \sigma  \right\|_{{\mathcal L}_\infty ^{[0,\tau ]}},$$
which can be rewritten as $
    \zeta {\left\| {{x_\rt} \!-\! x} \right\|_{{\mathcal L}_\infty ^{[0,\tau ]}}} \le 
\left\| {{{\mathcal H}_{xm}}(\theta )} \right\|_{{{\bar {\mathcal L} }_1}}\left\| {\mathcal F}(\theta )\tilde \sigma  \right\|_{{\mathcal L}_\infty ^{[0,\tau ]}},$ where $\zeta \trieq 1 \!-\! {\left\| {{\mcG_{xm}}(\theta )} \right\|}_{{{\bar {\mathcal L} }_1}}{L_{f_1}^\rho} \!-\! {\left\| {{\mcG_{xum}}(\theta )} \right\|}_{{{\bar {\mathcal L} }_1}}{L_{f_2}^\rho}$. 
Since $\zeta$ is positive according to the stability condition in \eqref{eq:l1-stability-condition}, the preceding inequality can be further written as 
$$
 {\left\| {{x_\rt} - x}  \right\|_{{\mathcal L}_\infty ^{[0,\tau ]}}}
\le \frac{1}{\zeta}{{{{\left\| {{{\mathcal H}_{xm}}(\theta )} \right\|}_{{{\bar {\mathcal L} }_1}}}{{\left\| {{\mathcal F}(\theta )\tilde \sigma } \right\|}_{{\mathcal L}_\infty ^{[0,\tau ]}}}}} \le \frac{1}{\zeta}{{{{\left\| {{{\mathcal H}_{xm}}(\theta )} \right\|}_{{{\bar {\mathcal L} }_1}}}{\gamma _0}(T,\rho ,{\rho _u})}}{}  < \frac{1}{\zeta}{{{{\left\| {{{\mathcal H}_{xm}}(\theta )} \right\|}_{{{\bar {\mathcal L} }_1}}}{{\bar \gamma }_0}}}{\zeta},$$
where the second inequality results from Lemma~\ref{lem:F-theta-tilsigma-bound}  and \cref{eq:tilsigma-bound} (which holds according to \cref{lem:sigmatilde-xtilde-bound}), and the third inequality is due to constraint \eqref{eq:T-constraints}. Further considering \eqref{eq:underline-gamma_1-defn-constr} yields  
 \begin{equation}\label{eq:xref-x-tau-bound}
\left\| {{x_\rt} - x} \right\|_{{\mathcal L}_\infty ^{[0,\tau ]}} <\gamma_1.
 \end{equation}
On the other hand, \eqref{eq:reference-system} and \eqref{eq:u-expression-by-mapping} imply that
$
    {u_\rt} - u =  - {\omega ^{ - 1}}{\mcC}\left( {{\eta _{1\rt}} - {\eta _1}} \right) - {\omega ^{ - 1}}{\mcC}\bar {\mathcal H} (\theta )\left( {{\eta _{2\rt}} - \eta_2} \right) 
    + {\omega ^{ - 1}}{\mcC}\left( {{\tileta_m} + \bar {\mathcal H} (\theta ){\tileta_{um}}} \right).
$
Considering the bound in \eqref{eq:eta-ref-eta-bound}, we have 
\begin{align*}
 \left\| {{u_\rt} - u} \right\|_{{\mathcal L}_\infty ^{[0,\tau ]}} 
&\le {\left\| {{\omega ^{ - 1}}{\mcC}} \right\|_{{{\mathcal L}_1}}}{L_{f_1}^\rho}{\left\| {{x_\rt} - x} \right\|_{{\mathcal L}_\infty ^{[0,\tau ]}}} + {\left\| {{\omega ^{ - 1}}{\mcC}\bar {\mathcal H} (\theta )} \right\|_{{{\bar {\mathcal L} }_1}}}{L_{f_2}^\rho} {\left\| {{x_\rt} - x} \right\|_{{\mathcal L}_\infty ^{[0,\tau ]}}} +
\left\| {{\omega ^{ - 1}}} \right\|{\left\| {{\mathcal F}(\theta )\tilde \sigma } \right\|_{{\mathcal L}_\infty ^{[0,\tau ]}}} \\
& < \left( {{{\left\| {{\omega ^{ - 1}}{\mcC}} \right\|}_{{{\mathcal L}_1}}}{L_{f_1}^\rho} + {{\left\| {{\omega ^{ - 1}}{\mcC}\bar {\mathcal H} (\theta )} \right\|}_{{{\bar {\mathcal L} }_1}}}{L_{f_2}^\rho}} \right){\gamma _1} + \left\| {{\omega ^{ - 1}}} \right\|{{\bar \gamma }_0},   
\end{align*}
where the last inequality is due to \cref{eq:tilsigma-bound} (which holds according to \cref{lem:sigmatilde-xtilde-bound}), Lemma~\ref{lem:F-theta-tilsigma-bound}, and the constraint in \eqref{eq:T-constraints}. Further considering the definition of $\gamma_2$ in \eqref{eq:gamma2-defn} gives
\begin{equation}\label{eq:uref-u-tau-bound}
\left\| {{u_\rt} - u} \right\|_{{\mathcal L}_\infty ^{[0,\tau ]}} < \gamma_2.
\end{equation}
Finally, we notice that the upper bounds in \eqref{eq:xref-x-tau-bound} and \eqref{eq:uref-u-tau-bound} contradict both of the two equalities in \eqref{eq:xref-x-tau-infnorm}, indicating that {\it neither} of the two equalities can hold. This proves the bounds in \eqref{eq:xref-x-bound} and \eqref{eq:uref-u-bound}. The bounds in \cref{eq:x-bound,eq:u-bound} follow directly from \cref{eq:x-tau-rhor-rho-relation,eq:u-tau-rhour-rhou-relation}, with $\tau=+\infty$, while the bound in \eqref{eq:yref-y-bound} follows from  $y_\rt(t)-y(t) = C_m(\theta)(x_\rt(t)-x(t))$, and the bound on $C_m(\theta)$ in \eqref{eq:Am-B-Bu-Kx-Kr-bounds}. This completes the proof. \qedclosed
\end{proof}

\begin{remark}\label{rem:ref-ad-bound-discussion}
Theorem~\ref{them:x-xref-bounds} indicates that the differences between the states and inputs of the actual adaptive system and the same signals of the non-adaptive reference system in \eqref{eq:reference-system} are bounded by constants $\gamma_1$ and $\gamma_2$, respectively. As mentioned below \eqref{eq:rho-defn}, $\gamma_1$ satisfies the constraint \eqref{eq:underline-gamma_1-defn-constr} that depends on the constant  $\bar \gamma_0$, while $\gamma_2$ is dependent on $\gamma_1$ and  $\bar{\gamma}_0$ according to \eqref{eq:gamma2-defn}. Furthermore, $\bar{\gamma}_0$ is constrained by \eqref{eq:T-constraints} that involves the estimation sampling time $T$. 
Also, \eqref{eq:gamma0-T-go-to-zero} implies that  $\bar{\gamma}_0$ can be made arbitrarily small while still satisfying the constraint \eqref{eq:T-constraints} by reducing $T$. Moreover, \eqref{eq:underline-gamma_1-defn-constr} and \eqref{eq:gamma2-defn} imply that both $\gamma_1$ and $\gamma_2$ can be made arbitrarily small by reducing $\bar{\gamma}_0$.
Therefore, the difference between the inputs and states of the actual adaptive system and those of the reference system can be made arbitrarily small by reducing $T$, while the size of $T$ is limited by computational hardware and measurement noises.
\vspace{0mm} \end{remark}

After establishing the bounds between the actual closed-loop system and the reference system, we now consider the bounds between the reference system and the ideal system. The results are summarized in the following lemma. 
\begin{lemma}\label{lem:ref-id-bound}
Given the reference system in \eqref{eq:reference-system} and the ideal system in \eqref{eq:ideal-dynamics},  if \cref{assump:desired-dynamics-stable-Lyapunov}, {conditions \cref{eq:fit0-bound,eq:fi-lipschitz-cond}}, and the stability condition in  \cref{eq:l1-stability-condition} hold, we have 
\begin{align}
{\left\| {{x_\rt} - {x_{{\rm{id}}}}} \right\|_{{\mathcal L}{_\infty }}} &  \le \alpha_1(\mcC,\barH), \label{eq:xref-xid-bound}
\\
{\left\| {{u_\rt} - {u_{{\rm{id}}}}} \right\|_{{\mathcal L}{_\infty }}} &\le \alpha_2(\mcC,\barH), \label{eq:uref-uid-bound}\\
{\left\| {{y_\rt} - {y_{{\rm{id}}}}} \right\|_{{\mathcal L}{_\infty }}} & \le \alpha_3(\mcC,\barH),  \label{eq:yref-yid-bound}
\end{align}
where
{\begin{align}
    &\alpha_1(\mcC,\barH) \trieq {\left\| {{{\mathcal G}_{xm}}(\theta )} \right\|_{{{\bar {\mathcal L} }_1}}}(L_{f_1}^{\rho_r}{\rho _r} + {b_{f_1}^0}) + {\left\| {{{\mathcal G}_{xum}}(\theta )} \right\|_{{{\bar {\mathcal L} }_1}}}(L_{f_2}^{\rho_r}{\rho _r} \! + \! {b_{f_2}^0}) \!+ \!{\left\| {{{\mathcal H}_{xm}}(\theta )({\mcC} \!-\! {\mbI_m}){K_r}(\theta )} \right\|_{{{\bar {\mathcal L} }_1}}}{\left\| r \right\|_{{\mathcal L}{_\infty }}}, \label{eq:alpha_1-defn} \\
   &\alpha_2(\mcC,\barH)  \trieq {\left\| {{\omega ^{ - 1}}{\mcC}} \right\|_{{{\mathcal L}_1}}}(L_{f_1}^{\rho_r}{\rho _r} + {b_{f_1}^0}) + {\left\| {{\omega ^{ - 1}}{\mcC}\bar {\mathcal H} (\theta )} \right\|_{{{\bar {\mathcal L} }_1}}}(L_{f_2}^{\rho_r}{\rho _r} + {b_{f_2}^0})+ {\left\| {( {{\omega ^{ - 1}}{\mcC} - {\mbI_m}}){K_r}(\theta )} \right\|_{{{\bar {\mathcal L} }_1}}}{\left\| r \right\|_{{\mathcal L}{_\infty }}}, \label{eq:alpha_2-defn} \\
 & \alpha_3(\mcC,\barH)   \trieq {\left\| {{\mathcal G}{_m}(\theta )} \right\|_{{{\bar {\mathcal L} }_1}}}(L_{f_1}^{\rho_r}{\rho _r} + {b_{f_1}^0}) + {\left\| {{{\mathcal G}_{um}}(\theta )} \right\|_{{{\bar {\mathcal L} }_1}}}(L_{f_2}^{\rho_r}{\rho _r} \!+\! {b_{f_2}^0})\! + \!{\left\| {{{\mathcal H}_m}(\theta )({\mcC} \!-\! {\mbI_m}){K_r}(\theta )} \right\|_{{{\bar {\mathcal L} }_1}}}{\left\| r \right\|_{{\mathcal L}{_\infty }}}. \label{eq:alpha_3-defn} 
\end{align}}
\vspace{-5mm}\end{lemma}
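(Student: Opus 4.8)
The plan is to express all three signal pairs through the input-output mappings already introduced, subtract so that the common autonomous contribution $\xin$ cancels, and then convert the residual mapping expressions into norm bounds via \cref{lem:input-output-ppg-bound-relation} (for the $\theta$-dependent composites) and \cref{lem:L1-Linf-relation} (for the LTI factor $\omega^{-1}\mcC$). The whole argument is bookkeeping built on top of \cref{lem:xref-bounds}, so no contradiction/continuity machinery is needed here.

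First I would reuse the input-output identity for the reference state established inside the proof of \cref{lem:xref-bounds},
\[
x_\rt = \mcG_{xm}(\theta)\eta_{1\rt} + \mcG_{xum}(\theta)\eta_{2\rt} + \mcH_{xm}(\theta)\mcC K_r(\theta)r + \xin .
\]
The ideal system \cref{eq:ideal-dynamics}, driven by the feedforward input $K_r(\theta)r$ with initial condition $x_0$, has the same autonomous part $\xin$ and hence $x_\idt = \mcH_{xm}(\theta)K_r(\theta)r + \xin$. Subtracting removes $\xin$ and the tracking term collapses to $\mcH_{xm}(\theta)(\mcC-\mbI_m)K_r(\theta)r$:
\[
x_\rt - x_\idt = \mcG_{xm}(\theta)\eta_{1\rt} + \mcG_{xum}(\theta)\eta_{2\rt} + \mcH_{xm}(\theta)(\mcC-\mbI_m)K_r(\theta)r .
\]
From the two control laws I would likewise obtain
\[
u_\rt - u_\idt = -\omega^{-1}\mcC\,\eta_{1\rt} - \omega^{-1}\mcC\barH(\theta)\eta_{2\rt} + (\omega^{-1}\mcC-\mbI_m)K_r(\theta)r ,
\]
and, multiplying the state difference by $C(\theta)$ and invoking $\mcG_m=C\mcG_{xm}$, $\mcG_{um}=C\mcG_{xum}$ from \cref{eq:Gm-Gum-defn} together with $\mcH_m(\theta)=C(\theta)\mcH_{xm}(\theta)$, the analogous expression for $y_\rt-y_\idt$.

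Next I would bound the driving signals. Since \cref{lem:xref-bounds} gives $\linfnorm{x_\rt}<\rho_r$, the conditions \cref{eq:fi-lipschitz-cond,eq:fit0-bound} yield $\linfnorm{\eta_{i\rt}}\le L_{f_i}^{\rho_r}\rho_r+b_{f_i}^0$ for $i=1,2$, exactly as in \cref{eq:eta_iref-tau-bound}. Applying \cref{lem:input-output-ppg-bound-relation} to each $\theta$-dependent mapping and \cref{lem:L1-Linf-relation} to the LTI factor $\omega^{-1}\mcC$, then substituting these signal bounds and $\linfnorm{r}$, reproduces $\alpha_1$, $\alpha_2$, $\alpha_3$ of \cref{eq:alpha_1-defn,eq:alpha_2-defn,eq:alpha_3-defn}, establishing \cref{eq:xref-xid-bound,eq:uref-uid-bound,eq:yref-yid-bound}.

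I do not expect a genuine obstacle; the only point demanding care is keeping straight which factors are LTI and which are LPV, so that each gets the correct norm. Concretely, $\omega^{-1}\mcC$ is time-invariant and is bounded by its $\lone$ norm $\lonenorm{\cdot}$, whereas the composites $\omega^{-1}\mcC\barH(\theta)$, $\mcH_{xm}(\theta)(\mcC-\mbI_m)K_r(\theta)$, $(\omega^{-1}\mcC-\mbI_m)K_r(\theta)$, and the $\mcG_{(\cdot)}$ mappings are genuinely parameter-dependent and must be bounded by PPG bounds $\lonenormbar{\cdot}$; conflating the two would put the wrong norm in the statement. A secondary check is confirming that the autonomous term $\xin$ is common to both systems (identical $A_m(\theta)$ and identical $x_0$) so that it cancels cleanly in every difference.
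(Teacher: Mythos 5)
Your proposal is correct and follows essentially the same route as the paper's proof: the same input-output decompositions with cancellation of $\xin$, the same bound $\linfnorm{\eta_{i\rt}}\le L_{f_i}^{\rho_r}\rho_r+b_{f_i}^0$ obtained from \cref{lem:xref-bounds} and conditions \cref{eq:fit0-bound,eq:fi-lipschitz-cond}, and the same application of \cref{lem:L1-Linf-relation,lem:input-output-ppg-bound-relation} to each factor. Your care in distinguishing the $\lone$ norm for the LTI factor $\omega^{-1}\mcC$ from the PPG bounds for the parameter-dependent composites matches the paper exactly.
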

\begin{proof}
The ideal dynamics in \eqref{eq:ideal-dynamics} implies 
${x_{{\rm{id}}}} = {{\mathcal H}_{xm}}(\theta ){K_r}(\theta )r + {x_{{\rm{in}}}},$
which, together with  \eqref{eq:xref-input-output-map}, leads to 
${x_\rt} - {x_{{\rm{id}}}} = {{\mathcal G}_{xm}}(\theta ){\eta _{1\rt}} + {{\mathcal G}_{xum}}(\theta ){\eta _{2\rt}} + {{\mathcal H}_{xm}}(\theta )({\mcC} - {\mbI_m}){K_r}(\theta )r.$ Therefore, 
$
{\left\| {{x_\rt} - {x_{{\rm{id}}}}} \right\|_{{\mathcal L}{_\infty }}} \le {\left\| {{\mathcal G}{_{xm}}(\theta )} \right\|_{{{\bar {\mathcal L} }_1}}}{\left\| {{\eta _{1\rt}}} \right\|_{{\mathcal L}{_\infty }}} + {\left\| {{{\mathcal G}_{xum}}(\theta )} \right\|_{{{\bar {\mathcal L} }_1}}}{\left\| {{\eta _{{2\rt}}}} \right\|_{{\mathcal L}{_\infty }}} + {\left\| {{{\mathcal H}_{xm}}(\theta )({\mcC} - {\mbI_m}){K_r}(\theta )} \right\|_{{{\bar {\mathcal L} }_1}}}{\left\| r \right\|_{{\mathcal L}{_\infty }}},$ which holds due to \cref{assump:desired-dynamics-stable-Lyapunov} and \cref{lem:L1-Linf-relation}. The preceding inequality implies \eqref{eq:xref-xid-bound}, since for $i=1,2,$
\begin{equation}\label{eq:eta_ref-bound}
    {\left\| {{\eta _{i\rt}}} \right\|_{{\mathcal L}{_\infty }}} = {\left\| {{f_{{i\rt}}}(t,{x_\rt})} \right\|_{{\mathcal L}{_\infty }}} \le L_{f_i}^{\rho_r}{\rho _r} + {b_{f_i}^0},
\end{equation}
which holds due to $\linfnorm{x_\rt}\leq \rho_r$ (Lemma~\ref{lem:xref-bounds}) and {conditions \cref{eq:fit0-bound,eq:fi-lipschitz-cond}}. On the other hand,
${y_\rt} - {y_{{\rm{id}}}} = {C}(\theta )\left( {{x_\rt} - {x_{{\rm{id}}}}} \right) = {{\mathcal G}_m}(\theta ){\eta _{1\rt}} + {{\mathcal G}_{um}}(\theta ){\eta _{2\rt}} + {{\mathcal H}_m}(\theta )({\mcC} - {\mbI_m}){K_r}(\theta )r.$
Thus,
${\left\| {{y_\rt} - {y_{{\rm{id}}}}} \right\|_{{\mathcal L}{_\infty }}} \le {\left\| {{\mathcal G}{_m}(\theta )} \right\|_{{{\bar {\mathcal L} }_1}}}{\left\| {{\eta _{1\rt}}} \right\|_{{\mathcal L}{_\infty }}} + {\left\| {{{\mathcal G}_{um}}(\theta )} \right\|_{{{\bar {\mathcal L} }_1}}}{\left\| {{\eta _{{2\rt}}}} \right\|_{{\mathcal L}{_\infty }}} + {\left\| {{{\mathcal H}_m}(\theta )({\mcC} - {\mbI_m}){K_r}(\theta )} \right\|_{{{\bar {\mathcal L} }_1}}}{\left\| r \right\|_{{\mathcal L}{_\infty }}},$
which, together with \cref{eq:eta_ref-bound}, implies \eqref{eq:yref-yid-bound}.

The input equations in \eqref{eq:reference-system} and \eqref{eq:ideal-dynamics} indicate 
${u_\rt} - {u_{{\rm{id}}}} =  - {\omega ^{ - 1}}{\mcC}{\eta _{1\rt}} - {\omega ^{ - 1}}{\mcC}\bar {\mathcal H} (\theta ){\eta _{{2\rt}}} + \left( {{\omega ^{ - 1}}{\mcC} - {\mbI_m}} \right){K_r}(\theta )r.$
Therefore, 
${\left\| {{u_\rt} - {u_{{\rm{id}}}}} \right\|_{{\mathcal L}{_\infty }}} \le {\left\| {{\omega ^{ - 1}}{\mcC}} \right\|_{{{\mathcal L}_1}}}{\left\| {{\eta _{1\rt}}} \right\|_{{\mathcal L}{_\infty }}} + {\left\| {{\omega ^{ - 1}}{\mcC}\bar {\mathcal H} (\theta )} \right\|_{{{\bar {\mathcal L} }_1}}}{\left\| {{\eta _{{2\rt}}}} \right\|_{{\mathcal L}{_\infty }}} + {\left\| {\left( {{\omega ^{ - 1}}{\mcC} - {\mbI_m}} \right){K_r}(\theta )} \right\|_{{{\bar {\mathcal L} }_1}}}{\left\| r \right\|_{{\mathcal L}{_\infty }}},$
which leads to \eqref{eq:uref-uid-bound} due to \eqref{eq:eta_ref-bound}.  \qedclosed
		  
\end{proof}

\begin{remark}\label{rem:ref-id-bound}
According to \cref{eq:alpha_1-defn,eq:alpha_2-defn,eq:alpha_3-defn}, $\alpha_i(\mcC,\barH)$ ($i=1,2,3$) depend on the filter $\mcC(s)$ and the mapping $\barH(\theta)$ (for UUA). The terms ${\left\| {{{\mathcal G}_{xm}}(\theta )} \right\|_{{{\bar {\mathcal L} }_1}}}$ and ${\left\| {{{\mathcal H}_{xm}}(\theta )({\mcC} - {\mbI_m}){K_r}(\theta )} \right\|_{{{\bar {\mathcal L} }_1}}}$ in \cref{eq:alpha_1-defn}, and ${\left\| {{{\mathcal G}_{m}}(\theta )} \right\|_{{{\bar {\mathcal L} }_1}}}$ and ${\left\| {{{\mathcal H}_{m}}(\theta )({\mcC} - {\mbI_m}){K_r}(\theta )} \right\|_{{{\bar {\mathcal L} }_1}}}$ in \cref{eq:alpha_3-defn} can be systematically reduced (to zero) by increasing the  filter bandwidth (to infinity). 
Despite being desired for better performance bounds,  
a high-bandwidth filter allows for high-frequency control signals to enter the system under fast adaptation (corresponding to small $T$), compromising the robustness (e.g., reducing stability margins). Thus, the filter presents a trade-off between robustness and performance.  More details about the role and design of the filter can be found in \red{\cite[Sections 1.3, 2.1.4, and 2.6]{naira2010l1book} for the case of using an LTI model to describe the desired dynamics}.  
 \end{remark}

With the results in \cref{them:x-xref-bounds} and \cref{lem:ref-id-bound}, we are now ready to quantify the difference between the adaptive closed-loop system and the ideal system in the following theorem. 
\begin{theorem}\label{them:x-xid-bounds}
Consider the uncertain system \eqref{eq:plant-dynamics} subject to \cref{assum:theta-thetadot-bounded}, {the control law \cref{eq:control-composition} consisting of the baseline controller \cref{eq:control-baseline-general} and the robust adaptive controller} defined via \cref{eq:state-predictor,eq:adaptive_law,eq:l1-control-law}, and  the ideal system in \eqref{eq:ideal-dynamics}. {Suppose \cref{assump:F-theta-stability,assump:B-lipschitz-bounded,assump:unknown-input-gain,assump:ft0-bound,assump:ftx-semiglobal_lipschitz,assump:ftx-rate-bounded,assump:desired-dynamics-stable-Lyapunov} hold, and 
the stability condition  \eqref{eq:l1-stability-condition} holds with a user-selected constant $\gamma_1$ that can be arbitrarily small, and the constraints  \cref{eq:underline-gamma_1-defn-constr,eq:T-constraints} hold. Then}, we have 
\begin{align}
{\left\| {{x} - {x_{{\rm{id}}}}} \right\|_{{\mathcal L}{_\infty }}} &  \le \alpha_1(\mcC,\barH) +\gamma_1, \label{eq:x-xid-bound}
\\
{\left\| {{u} - {u_{{\rm{id}}}}} \right\|_{{\mathcal L}{_\infty }}} &\le \alpha_2(\mcC,\barH) +\gamma_2, \label{eq:u-uid-bound}\\
{\left\| {{y} - {y_{{\rm{id}}}}} \right\|_{{\mathcal L}{_\infty }}} & \le \alpha_3(\mcC,\barH)+b_C \gamma_1.  \label{eq:y-yid-bound}
\end{align}
\vspace{-3mm}
\end{theorem}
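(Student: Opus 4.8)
The plan is to obtain these bounds by a straightforward triangle-inequality argument that bridges the adaptive system and the ideal system through the non-adaptive reference system in \eqref{eq:reference-system}. The two ingredients are already in place: \cref{them:x-xref-bounds} quantifies the gap between the adaptive and the reference systems, while \cref{lem:ref-id-bound} quantifies the gap between the reference and the ideal systems. So the entire content of the proof is to verify that both of those results are applicable under the present hypotheses, and then to add the two gaps.

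First I would check that the stated assumptions are exactly those required to invoke \cref{them:x-xref-bounds}; indeed the theorem lists \cref{assump:F-theta-stability,assump:B-lipschitz-bounded,assump:unknown-input-gain,assump:ft0-bound,assump:ftx-semiglobal_lipschitz,assump:ftx-rate-bounded,assump:desired-dynamics-stable-Lyapunov} together with the stability condition \eqref{eq:l1-stability-condition} and the constraints \cref{eq:underline-gamma_1-defn-constr,eq:T-constraints}, so its conclusions \cref{eq:xref-x-bound,eq:uref-u-bound,eq:yref-y-bound} hold directly, giving $\linfnorm{x_\rt-x}\le\gamma_1$, $\linfnorm{u_\rt-u}\le\gamma_2$, and $\linfnorm{y_\rt-y}\le b_C\gamma_1$. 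Next I would note, exactly as in the opening of the proof of \cref{them:x-xref-bounds}, that since \cref{assump:unknown-input-gain,assum:theta-thetadot-bounded,assump:B-lipschitz-bounded,assump:ft0-bound,assump:ftx-semiglobal_lipschitz,assump:ftx-rate-bounded} hold, \cref{lem:f-f1-f2-constants-relation} guarantees that conditions \cref{eq:fit0-bound,eq:fi-lipschitz-cond} are satisfied. Together with \cref{assump:desired-dynamics-stable-Lyapunov} and the stability condition \eqref{eq:l1-stability-condition}, these are precisely the hypotheses of \cref{lem:ref-id-bound}, whose conclusions \cref{eq:xref-xid-bound,eq:uref-uid-bound,eq:yref-yid-bound} then yield $\linfnorm{x_\rt-x_\idt}\le\alpha_1(\mcC,\barH)$, $\linfnorm{u_\rt-u_\idt}\le\alpha_2(\mcC,\barH)$, and $\linfnorm{y_\rt-y_\idt}\le\alpha_3(\mcC,\barH)$.

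The final step is the triangle inequality applied signal by signal. For the state,
\[
\linfnorm{x-x_\idt}\le\linfnorm{x-x_\rt}+\linfnorm{x_\rt-x_\idt}\le\gamma_1+\alpha_1(\mcC,\barH),
\]
which is \eqref{eq:x-xid-bound}; the input bound \eqref{eq:u-uid-bound} and the output bound \eqref{eq:y-yid-bound} follow identically from $\linfnorm{u-u_\idt}\le\linfnorm{u-u_\rt}+\linfnorm{u_\rt-u_\idt}$ and $\linfnorm{y-y_\idt}\le\linfnorm{y-y_\rt}+\linfnorm{y_\rt-y_\idt}$, using $\gamma_2+\alpha_2$ and $b_C\gamma_1+\alpha_3$ respectively. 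I do not anticipate a genuine technical obstacle here, since all of the analytical work has been discharged in \cref{them:x-xref-bounds} and \cref{lem:ref-id-bound}; the only point requiring care is the bookkeeping that ensures the hypotheses of both prior results are simultaneously met (in particular that the auxiliary conditions \cref{eq:fit0-bound,eq:fi-lipschitz-cond} needed by \cref{lem:ref-id-bound} are inherited from \cref{lem:f-f1-f2-constants-relation} rather than assumed afresh). It is worth flagging in the accompanying remark that $\gamma_1,\gamma_2$ can be made arbitrarily small by shrinking the sampling time $T$ (via \cref{eq:gamma0-T-go-to-zero,eq:underline-gamma_1-defn-constr,eq:gamma2-defn,eq:T-constraints}), so that the residual gaps to the ideal system are governed predominantly by $\alpha_i(\mcC,\barH)$, i.e. by the filter and the unmatched-uncertainty-attenuation mapping rather than by the adaptation rate.
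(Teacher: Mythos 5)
Your proposal is correct and follows essentially the same route as the paper: a triangle inequality through the reference system, invoking \cref{them:x-xref-bounds} and \cref{lem:ref-id-bound}, with the same bookkeeping observation that conditions \cref{eq:fit0-bound,eq:fi-lipschitz-cond} needed by \cref{lem:ref-id-bound} are inherited from \cref{lem:f-f1-f2-constants-relation} under the stated assumptions. No gaps.
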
 
\begin{proof}
Since $x(t)-x_\idt(t) = x(t)-x_\rt(t) + x_\rt(t)-x_\idt(t),$
we have $ 
    \linfnorm{x-x_\idt} \le \linfnorm{x-x_\rt} + \linfnorm{x_\rt-x_\idt}.$
Then, the bound in \eqref{eq:x-xid-bound} immediately follows from  \eqref{eq:xref-x-bound} in Theorem~\ref{them:x-xref-bounds} and \eqref{eq:xref-xid-bound} in Lemma~\ref{lem:ref-id-bound} (since conditions \cref{eq:fi-lipschitz-cond,eq:fit0-bound} hold as explained in the proof of \cref{them:x-xref-bounds}). The bounds in \eqref{eq:u-uid-bound} and \eqref{eq:y-yid-bound} can be proved analogously. \qedclosed
\end{proof}

According to \eqref{eq:x-xid-bound} and \eqref{eq:y-yid-bound}, each of the bounds on  $\linfnorm{x-x_\idt}$ and $\linfnorm{y-y_\idt}$ contains two terms. The first term $\alpha_i(\mcC,\barH)\ (i=1,3)$ can be reduced by increasing the filter bandwidth, but  cannot be made arbitrarily small due to its dependence on $\barH(\theta)$ (see Remark~\ref{rem:ref-id-bound}).
The second term, however, can be arbitrarily reduced by decreasing the estimation sampling time $T$ in theory, although the size of T is limited by computation hardware and measurement noises in practice. (see Remark~\ref{rem:ref-ad-bound-discussion}). 
{\begin{remark}
    The bounds  in \cref{them:x-xid-bounds} are  characterized using $\linf$ norm and could be quite conservative for some individual states and/or inputs. It is possible to derive a tighter separate bound for each state or input, as done in \cite{zhao2024integrated-L1-RG}.
\end{remark}}
\section{Application to Flight Control}\label{sec:simulation-example}
We consider the short-period longitudinal dynamics of an F-16 aircraft, which involves the angle of attack (AoA) and pitch rate of the aircraft as the states and the elevator deflection as the control input.  The control objective is to track a reference AoA signal throughout a large operating envelope. 
\vspace{-3mm}
\subsection{LPV modeling and baseline controller design}\label{sec:sub-lpv-modeling-f16}
The operating envelope of the aircraft was selected as $h\in [5000,40000]\ \textup{ft}, V\in[350, 900]\ \textup{ft/s}$, where $h$ and $V$ are the altitude and the true airspeed, respectively. The dynamic pressure $\bar{q}= \frac{1}{2}\rho(h)V^2,$
where $\rho(h)$ is the air density at altitude $h$.
To derive an LPV model to capture the time-varying short-period dynamics (SPD), we used the software package in \cite{huo2006F16} that implements the nonlinear six-degree-of-freedom F-16 model, which has $13$ states. We first linearized this full nonlinear model at different trim points, then used the linearized models for the SPD to fit an LPV model. 
The scheduling parameters are chosen to be the (linearly) scaled dynamic pressure, $\barqs \in [-1,1]$, and scaled airspeed, $V_s\in [-1,1]$,  where
$
\bar{q}_s \trieq 2\frac{\bar{q}-\bar{q}_{\min} }{\bar{q}_{\max}-\bar{q}_{\min} }-1,\quad V_s \trieq  2\frac{V-V_{\min} }{V_{\max}-V_{\min} }-1,
$
with $\bar{q}_{\min}= 37.1$ psf, $\bar{q}_{\max}= 830.4$ psf, $V_{\min}= 350$ ft/s, $V_{\max}= 900$ ft/s. The LPV model for the SPD is obtained as 
{
\begin{equation*}
      \dot x(t)
        = A(\theta) 
       {x(t)} + B(\theta) u_{total}(t), \   x(0) 
    = [0,0]^{\top\!}, 
\end{equation*}
    where 
\begin{align*}
    x(t) \triangleq [
         \tilalpha(t), \tilq(t)
]^{\top\!}, \ u_{total}(t)\triangleq\tilde \delta_e(t),\ 
\tilalpha(t) \triangleq \alpha(t) - \alpha_0(\theta(t)),\\  \tilq(t) \triangleq q(t) - q_0(\theta(t)), \  \tildeltae(t) \triangleq \delta_e(t) - \delta_{e,0}(\theta(t)),\ \theta(t) \triangleq [\bar{q}_s(t), V_s(t)]. 
\end{align*}}
In the preceding equations,
 $\alpha(t)$ is the AoA, $q(t)$ is the pitch rate, $\delta_e$ is the elevator deflection, and
$\alpha_0(\theta(t))$,  $q_0(\theta(t))$ and $\delta_{e,0}(\theta(t))$ are the corresponding value at the trim point defined by $\theta(t)$. \red{We used an LPV model with polynomial parameter dependence to fit those LTI models obtained from linearization. 
Through experimentation, we found that an LPV model with affine dependence on $\barqs$ and $V_s$ yielded a good accuracy in fitting the LTI models, while increasing the order of the parameter dependence did not bring much improvement of the accuracy. Use of the LTI models obtained from linearization of the nonlinear model in \cite{huo2006F16} at different operating points to fit an LPV model gives the following state-space matrices}: 
 \begingroup
\setlength\arraycolsep{1.3pt}
\renewcommand*{\arraystretch}{1.2}
\begin{align}
    A(\theta)& = \begin{bmatrix}
        -0.97& 0.94 \\
        -3.44 & -1.30 
    \end{bmatrix} -
    \begin{bmatrix}
         0.70 &  0.02\\
   2.99  & 0.89
    \end{bmatrix}
    \bar{q}_s  -
    \begin{bmatrix}
          .004 &   0\\
    -.086  & .004
    \end{bmatrix}
    V_s, \quad
  B(\theta)  = \begin{bmatrix}
        -0.002 \\
        -0.264
    \end{bmatrix} 
    +
    \begin{bmatrix}
        0.001 \\
        -0.241
    \end{bmatrix}
    \bar{q}_s.\hfill \label{eq:f16-lpv-model-concrete}
\end{align}
\endgroup 
{We set $\abs{\dot \theta_1}\leq 0.02$ and $\abs{\dot \theta_2}\leq 0.05$, and $\norm{x(0)}\leq 0.3$. It is obvious that \cref{assum:theta-thetadot-bounded} holds with sets and constants that can be easily computed}. {For instance, the sets and constants in \cref{assum:theta-thetadot-bounded} can be determined as $\Theta =[-1,1]\times [-1,1]$, $\Theta_d =[-0.02,0.02]\times[-0.05,0.05]$, $b_{\dot \theta} = 0.054.$ }


{The adaptive flight control design often proceeds with a control augmentation architecture (CAA)\cite{wise2006adaptive-flight}, which includes a baseline control law 
in the form of \cref{eq:control-baseline-general},
designed for the nominal plant to achieve desired 
closed-loop performance 
in the absence of uncertainties, and an adaptive control law $u(t)$ to handle the uncertainties. Within this CAA, $u_\textup{total}(t) = u_\bl(t)+ u(t)$}. 
As noted in \cite{biannic1997parameter}, 
it is desirable that the response of AoA to a longitudinal stick input approximates a second-order transfer function. Furthermore,
pilots like higher (lower) bandwidth at higher (lower) speeds during up-and-away flight. 
We therefore design $K_x(\theta)$ such that 
the natural frequency of the nominal closed-loop system $\dot{x} ={A_m(\theta)}x$ with $A_m(\theta)= A(\theta)+B(\theta)K_x(\theta)$
ranges from 2 rad/s at $\bar q_{\min}$ to 6 rad/s at $\bar q_{\max}$, and the damping ratio is lower bounded by 0.7. Here we used the method based on PD Lyapunov functions proposed \cite{Wu96Induced} for designing the LPV gain $K_x(\theta)$, where pole placement constraints are enforced following \cite{chilali1996pole} to satisfy the requirements mentioned above for natural frequency and damping ratio. {Note that a matrix $P(\theta)$ was automatically generated from this design process to satisfy \cref{assump:desired-dynamics-stable-Lyapunov} (see also \cref{rem:desired-dynamics-stable-Lyapunov}). }
{The open-loop and closed-loop poles when $\barqs$ changes from $-1$ to $1$ 
with $V_s=1$, 
are \red{shown in Fig.~\ref{fig:OL_CL_poles},  where the dashed lines denote a damping ratio equal to 0.7. }.} It can be verified that \cref{assump:B-lipschitz-bounded} holds with the constants  $L_{B}$, {$L_{B^\dag}$} and {$L_{K_x}$} that can be computed numerically. 

\begin{figure}[h]
    \centering
    \includegraphics[width=0.7\textwidth]{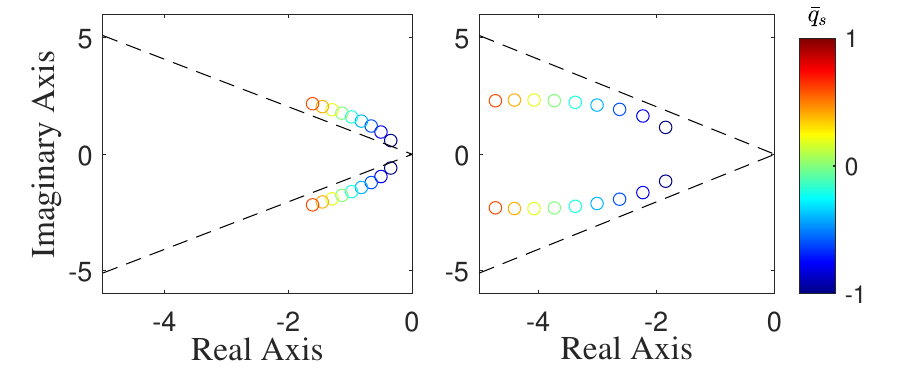}
    \caption{{\red{Open-loop (left) and closed-loop (right)  poles when $\bar{q}_s$ changes from $-1$ 
    to $1$ 
    with an increment of $0.2$}
    }}
    \label{fig:OL_CL_poles}
\end{figure}
\vspace{-3mm}
{\subsection{Uncertainties and adaptive controller design}}\label{sec:sub-f16-l1-design}

We set the uncertainties in \eqref{eq:plant-dynamics} as
\begin{equation}\label{eq:f16-uncertainties-lpv}
    \omega= 0.7,\quad f(t,x) = [ 0.02 \sin(20\pi x_1(t))+0.01\sin(\pi t), 5x_1(t)x_2(t)+0.01\cos(2\pi t)]^{\top\!},
\end{equation}
and $\Omega=[0.5,1.5]$. It is obvious that \cref{assump:unknown-input-gain} holds, and furthermore, $f(t,x)$ satisfies the inequalities in \cref{assump:ft0-bound} with a constant $b_f^0=0.01\sqrt{2}$.  
{Using the Jacobian of $f(t,x)$ with respect to $x$ and the relation between the 2-norm and Frobenius norm of a matrix, one can verify that \cref{assump:ftx-semiglobal_lipschitz} holds with $L_f^\delta= 0.16\pi^2+25\delta^2$}. Additionally,  \cref{assump:ftx-rate-bounded} holds with the constant $l_f = 0.0702$. 
{So far we have verified that  \cref{assump:B-lipschitz-bounded,assump:unknown-input-gain,assump:ft0-bound,assump:ftx-semiglobal_lipschitz,assump:ftx-rate-bounded,assump:desired-dynamics-stable-Lyapunov} all hold.} 

For the robust adaptive controller design, we chose 
$
    \ T = 0.001\ s,\ a =10,\ K=30, 
$
where 
$K$ specifies a nominal bandwidth of $30$ rad/s for the low-pass filter (see Remark~\ref{rem:filter-in-control-law}). 
We used the gridding technique \cite{Apk98} to convert the PD-LMIs involved in the baseline and robust adaptive controller design into a finite number of LMIs, which were solved by Yalmip  \cite{YALMIP} and the Mosek solver \cite{andersen2000mosek}.
{For verification of the stability condition in \eqref{eq:l1-stability-condition}, we need to compute the PPG bounds of $\mcG_{xm}(\theta)$, $\mcG_{xum}(\theta)$, and $\mcH_{xm}(\theta)\mcC K_r(\theta)$, using \cref{lem:ppg-bound-computaion-lmi}, and $\rho_\textup{in}$, using \cref{lem:xin_bound}. 
Using an affine parameterization for the decision matrix $\mrP(\theta)$, we were able to obtain $\lonenormbar{\mcG_{xm}(\theta)} =0.0509$ (with $\mu = 3.7$), and   $\lonenormbar{\mcH_{xm}(\theta)\mcC K_r(\theta)} = 6.35$ (with $\mu=3.2$).  As mentioned in \cref{rem:barH-tradeoff},  the weighting function, $\mcW(\theta)$, can be used to tune the trade-off between the performance of unmatched uncertainty attenuation (UUA) (represented by~$\lonenormbar{\mcG_{um}(\theta)}$)  and the margin left for satisfying the stability condition in \eqref{eq:l1-stability-condition} (represented by $\lonenormbar{\mcG_{xum}(\theta)}$). To focus on UUA, we selected $\mcW(\theta) = \textup{diag}(1,0)$, and obtained $\lonenormbar{\mcG_{xum}(\theta)} = 1.874$ (with $\mu=3.6$). Also, given the initial state bound $\norm{x_0}\leq \rho_0= 0.3$, we computed $\rho_\textup{in} = 3.56$ using \cref{lem:xin_bound}. With these bounds and the properties of $f(t,x)$ computed earlier, the stability condition \eqref{eq:l1-stability-condition} can be verified. As mentioned in \cref{rem:stability-condition-conservative}, the condition \cref{eq:l1-stability-condition} could be rather conservative. Therefore, to better illustrate the capability of the proposed techniques to deal with large uncertainties, we did not enforce satisfaction of condition \eqref{eq:l1-stability-condition} (and \cref{assump:F-theta-stability}) for the subsequent simulations.} 

\vspace{-.4cm}
\subsection{Simulation results using the LPV plant}\label{sec:sub-f16-lpv-simu}
We first simulate the control system using the fitted LPV model introduced in \cref{sec:sub-lpv-modeling-f16}. 
To verify the performance of the proposed controller under nonzero initialization error, i.e., $\tilx(0) \neq 0$, we  intentionally set $\hat x(0) = [\pi/180, -0.1]^{\top\!}$. 
The trajectories of the scheduling parameters were selected to be $\theta(t)=\sin(\frac{2\pi}{5}t)\times[0.5,1]^{\top\!}$, as shown in \cref{fig:lpv_theta}. We tested the AOA tracking performance using three step reference signals of different magnitudes, denoted by $r_1(t),\ r_2(t)$ and $r_3(t)$, respectively. {Considering the low bandwidth of the filter $\mcC(s)$, we chose not to filter the feedforward signal (see Remark~\ref{rem:not-filer-r})}. The state and control input trajectories are illustrated in Fig.~\ref{fig:lpv_x12} and Fig.~\ref{fig:lpv_ubl}, respectively. In Fig.~\ref{fig:lpv_x12}, the ideal response given by the ideal dynamics in \eqref{eq:ideal-dynamics} is also included. One can see that the actual  trajectories of $x_1$ (i.e., $\tilde{\alpha}$) were very close to their ideal counterparts and displayed scaled response under step references of different magnitude, despite the presence of the uncertainties in \eqref{eq:f16-uncertainties-lpv}. 
Fig.~\ref{fig:lpv_x12} also shows a relatively large discrepancy between the actual trajectory of $x_2$ (i.e., $\tilde{q}$) and its ideal counterpart, as expected. This is because we considered mitigating the effect of unmatched uncertainty on $x_1$ only when designing the mapping $\barH(\theta)$. 
\begin{figure}[h]
    \centering
\includegraphics[width=0.6\linewidth]{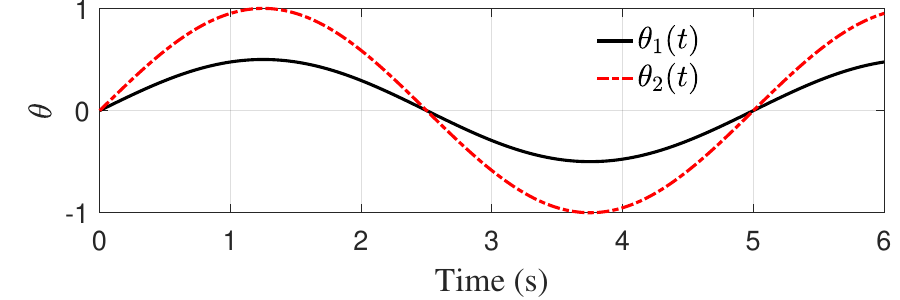}
    \caption{Trajectories of scheduling parameters}
    \label{fig:lpv_theta}
    \vspace{-3mm}
\end{figure}
\begin{figure}[h]
    \centering
    \includegraphics[width=0.6\linewidth]{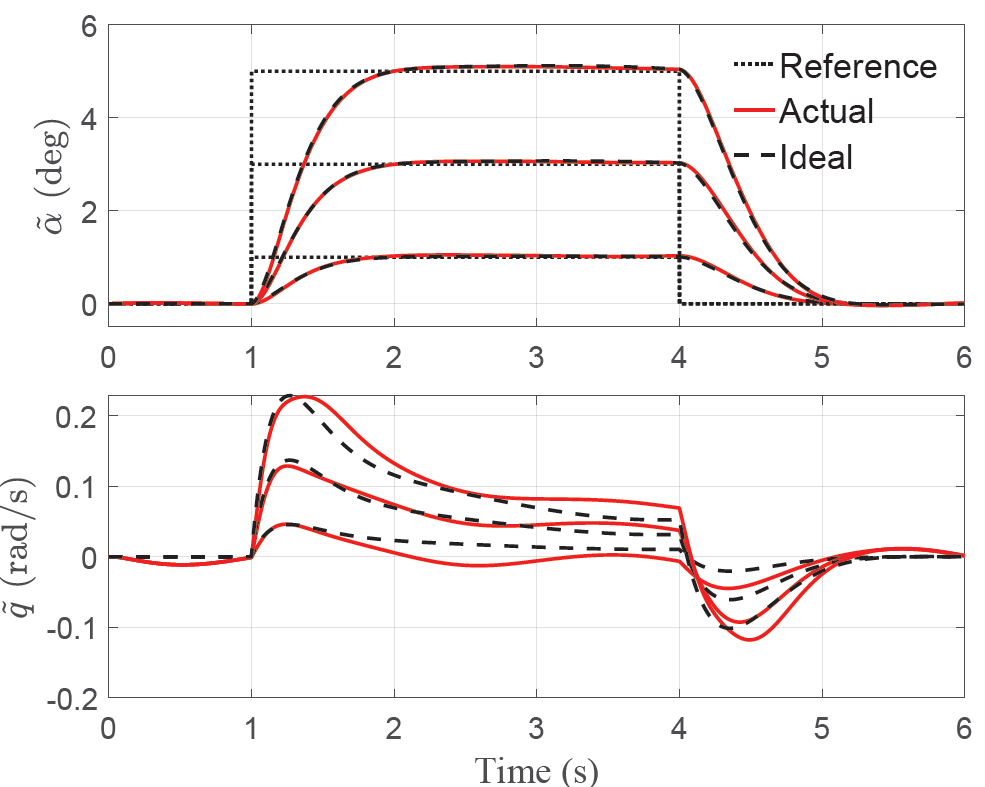} 
    \caption{\red{Trajectories of $x_1$ (top) and $x_2$ (bottom) under different references}} 
    \label{fig:lpv_x12}
    \vspace{-3mm}
\end{figure}
\begin{figure}[h]
    \centering
    \includegraphics[width=0.6\linewidth]{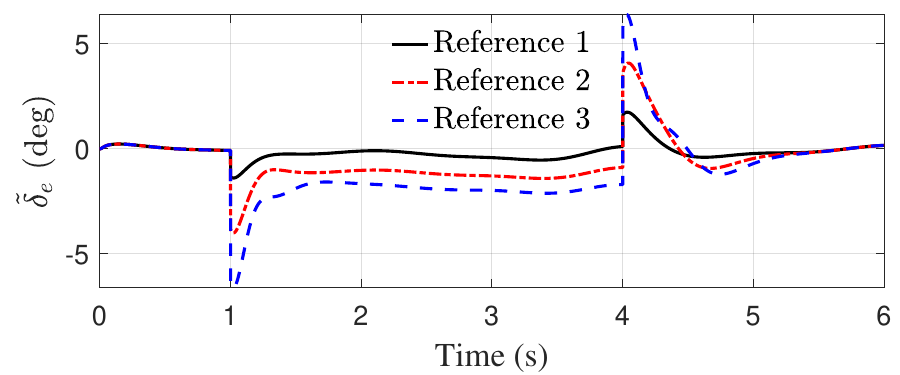}
  \caption{{Trajectories of total control inputs under different references}}
    \label{fig:lpv_ubl}
\end{figure}
Finally, Fig.~\ref{fig:lpv_sigmahat} shows the actual lumped uncertainty and its estimation using the adaptive law in \eqref{eq:adaptive_law}. We can see that due to the nonzero initialization error, i.e., $\tilx(0)\neq 0$, the estimated uncertainty $\hsigma(t)$ deviated from its true value $\sigma(t,x,u)$ (defined in \eqref{eq:sigma-txu-defn}) significantly at the first sampling interval $[0,T)$, but became very close to the true value afterward. This is consistent with Lemma~\ref{lem:sigmatilde-xtilde-bound}. For comparison, Figure~\ref{fig:lpv_x1_noL1} depicts the performance degradation when we only applied the baseline input or compensated for the matched uncertainty only (by dropping the $\hat{\sigma}_{um}$  term in the control law defined by \eqref{eq:l1-control-law}.
\begin{figure}[h!]
    \centering
    \includegraphics[width=0.6\linewidth]{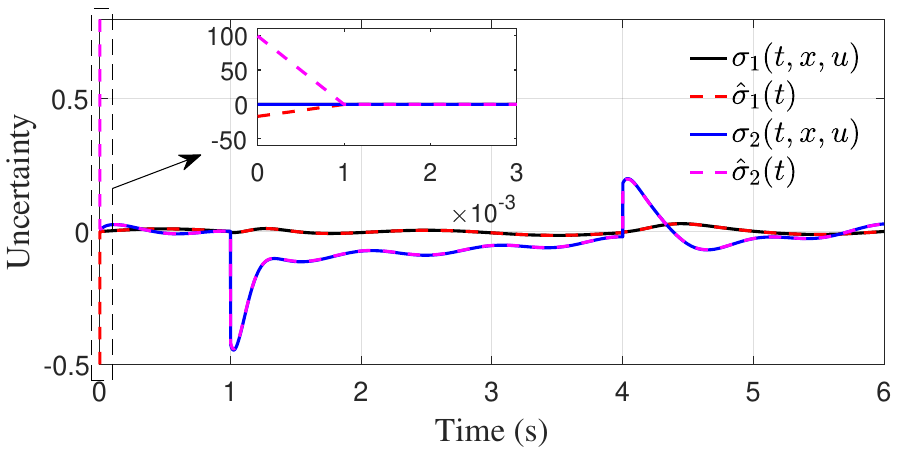}  
    \caption{{Actual and estimated (lumped) uncertainties  under the step reference of two degrees. $\hsigma_i(t)$ and $\sigma_i(t,x,u)$ denote the $i$-th element of $\hsigma(t)$ and $\sigma(t,x,u)$, respectively. }}
    \label{fig:lpv_sigmahat}
    \vspace{-3mm}
\end{figure}
\begin{figure}[h!]
    \centering
\includegraphics[width=0.6\linewidth]{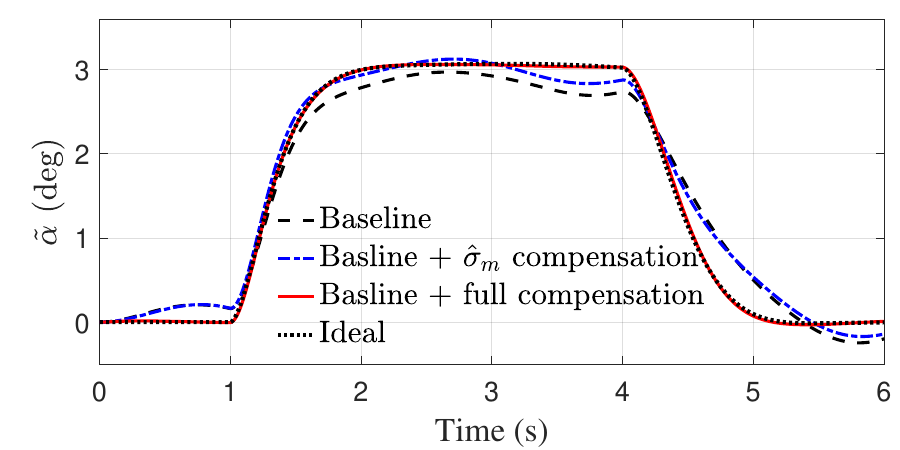}
  \caption{{Trajectories of $x_1$ under $r(t) = 3$ and different uncertainty compensation schemes}}
    \label{fig:lpv_x1_noL1}
    \vspace{-6mm}
\end{figure}

\subsection{Simulation results using a full nonlinear model}
We also used the {\it full nonlinear model} consisting of 13 states in \cite{huo2006F16}  to perform more realistic simulations.  We selected three trim points to represent the low, medium and high dynamic pressures:
\begin{itemize} 
 \item {\makebox[15mm]{Low:\hfill}} 
 $h= 40,000 \ \textup{ft}$, $V= 400$ ft/s, $\bar q =48.5\ \textup{psf}$,
    \item {\makebox[1.5cm]{Medium:\hfill}} $h= 20,000 \ \textup{ft}$,  $V= 600$ ft/s, $\bar q =228.6\ \textup{psf}$,
    \item {\makebox[1.5cm]{High:\hfill}} $h= 5,000 \ \textup{ft},\ $ $V= 900$ ft/s, $\bar q =830.4\ \textup{psf}$.
\end{itemize}
We injected the same uncertainty in \eqref{eq:f16-uncertainties-lpv}, and tested the tracking performance with a step reference signal. The AoA tracking performance is illustrated in Fig.~\ref{fig:nlin_x1}. We can see that the tracking response is different at different trim points, consistent with the design specifications. The actual trajectory of $x_1$ is still very close to its ideal counterpart, despite the presence of uncertainties and use of the simplified LPV model for controller design. 
\begin{figure}[h]
    \centering
     \vspace{-2mm}  
    \includegraphics[width=0.6\linewidth]{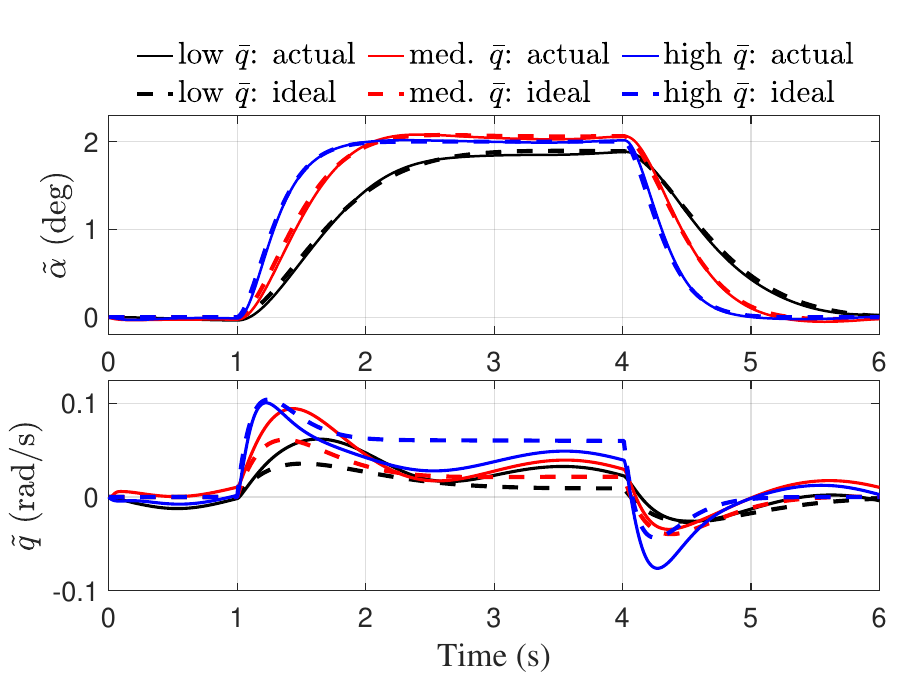}  \hspace{-5mm}  
    \caption{{Trajectories of $x_1$ (top) and $x_2$ (right) under different trim points}}
    \label{fig:nlin_x1}
\end{figure}
\begin{figure}[h]
    \centering
    \includegraphics[width=0.59\linewidth]{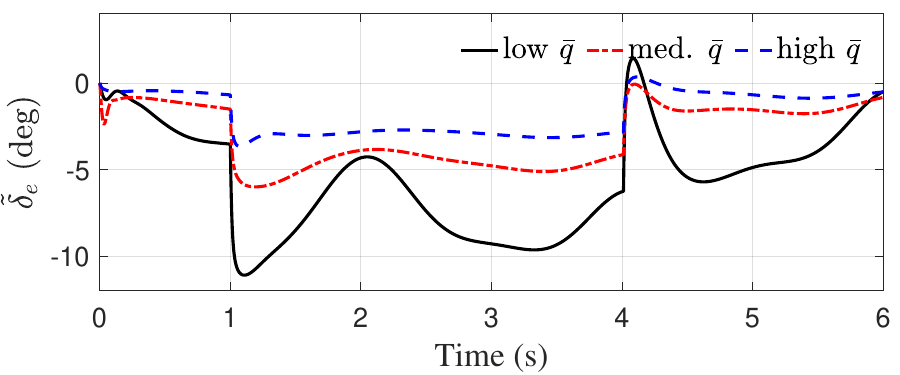}  \\
      \includegraphics[width=0.6\linewidth]{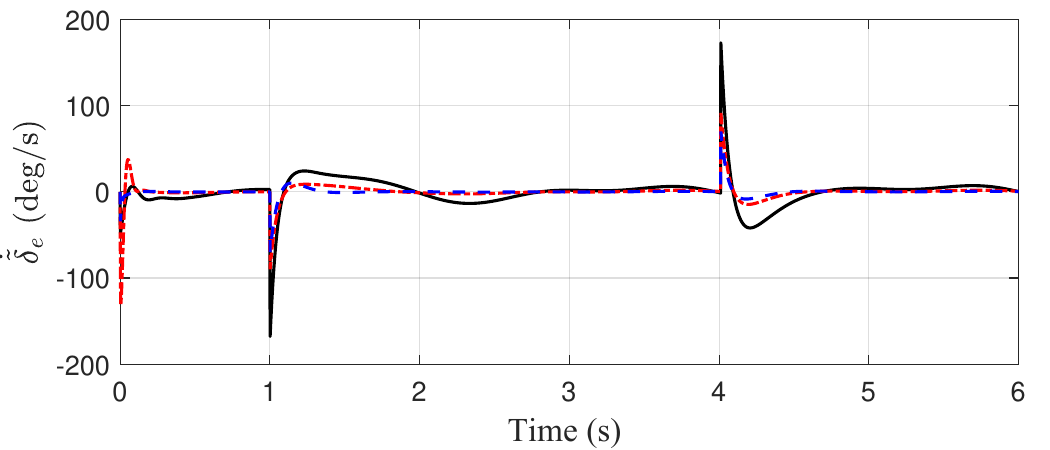}
    \caption{{Trajectories of control input and input rate under different trim points}}
    \label{fig:nlin_uBL_L1}
    \vspace{-4mm}
\end{figure}

\red{
The trajectory of the total control input and input rate is illustrated in \cref{fig:nlin_uBL_L1}.
 Note that the big jumps around 1 and 4 seconds are primarily due to the feedforward term $K_gr$ in the control signal and the step reference used in the simulation. 
Such big jumps can be mitigated by reducing the bandwidth of the filter embedded in the control law (i.e., reducing the gain $K$ in \cref{eq:l1-control-law}) and restricting the rate of the reference signal, e.g., via a low-pass filter. For instance, \cref{fig:control-rate-filtered} shows the trajectories of $x_1$ (top) and control rate (bottom) under $K=20$ (instead of $K=30$ used in other simulations) and the original reference signal filtered by a low-pass filter $\frac{10}{s+10}$. We can see that the magnitude of the control rate is reduced significantly and is roughly within the rate limits ($\pm$60 deg/s) of a real F-16 aircraft \cite[pp.210]{nguyen1979simulator-F-16}, while the compromise of tracking performance due to the reduced $K$ value is almost invisible.
 \begin{figure}[H]
     \centering
       \includegraphics[width=0.6\linewidth]{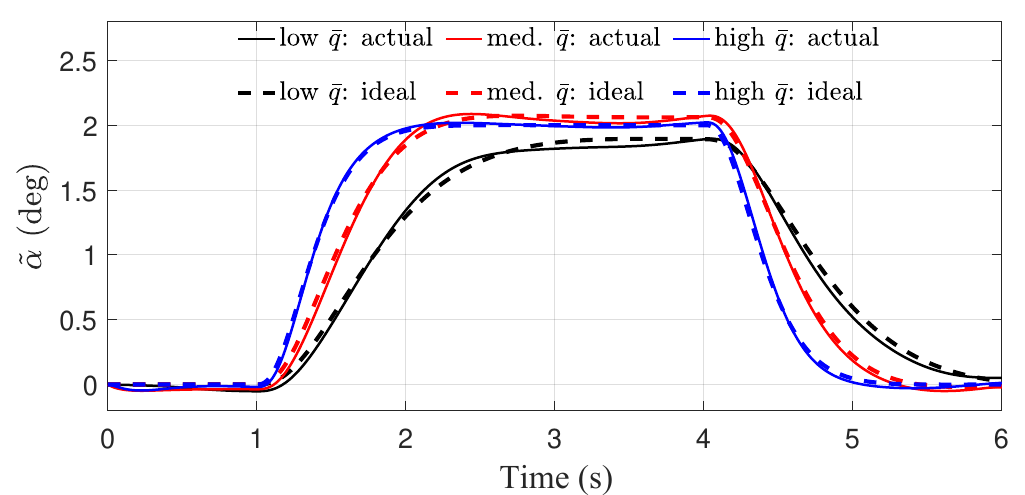}
     \includegraphics[width=0.6\linewidth]{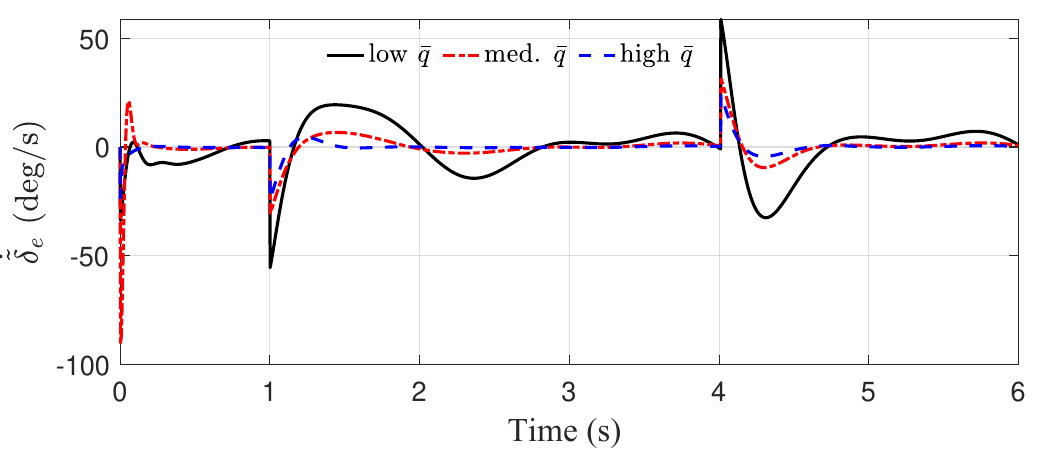}
     \caption{\red{Trajectories of $x_1$ (top) and control input rate (bottom) in nonlinear simulations under $K=20$ and the reference signal filtered by a low-pass filter $\frac{10}{s+10}$}}
     \label{fig:control-rate-filtered}
 \end{figure}}

For comparison, Fig.~\ref{fig:nlin_x1_no_comp} depicts the significant degradation when there was no compensation for the uncertainties or compensation for the matched uncertainty only.
\begin{figure}[H]
    \centering
    \includegraphics[width=0.6\linewidth]{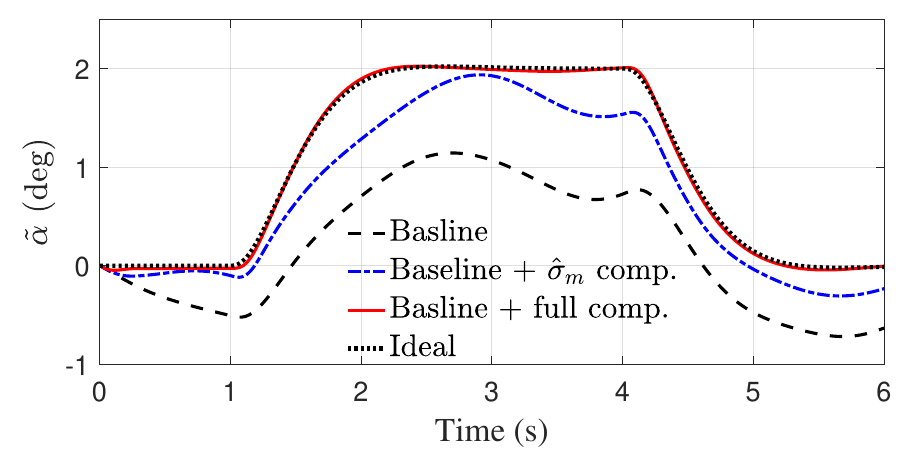}
     \vspace{-1mm}
    \caption{{Trajectories of $x_1$ under  the medium pressure scenario and different uncertainty compensation schemes}}
    \label{fig:nlin_x1_no_comp}
\end{figure}

\section{Conclusion}\label{sec:conclusion}
This paper presented a {robust adaptive control} architecture for LPV systems subject to unknown input gain and unmatched nonlinear uncertainties. Specifically, a new approach is introduced to mitigate the effect of the unmatched uncertainty on  variables of interest, e.g., system outputs, which relies on computing a feedforward LPV mapping to minimize the peak-to-peak gain from the unmatched uncertainty to those  variables of interest using linear matrix inequality (LMI) techniques. We derived the transient and steady-state performance bounds for the closed-loop system in terms of its input and output signals as compared to the same signals of a nominal system.  Simulation results on the short-period dynamics of an F-16 aircraft validate theoretical development.  Our future work includes an extension of the proposed approach to the output-feedback case. 

\section*{Acknowledgments}
This work is funded in part by the Air Force Office of Scientific Research
through grant FA9550-21-1-0411, in part by the NASA
Langley Research Center through grant 80NSSC20M0229,
and NASA University Leadership Initiative program through grant 80NSSC22M0070.

\bibliographystyle{unsrt}
\bibliography{bib/refs-pan}

\appendix
\section{Proofs}
\subsection{Proof of Lemma~\ref{lem:f-f1-f2-constants-relation}}\label{sec:prof-lem:f-f1-f2-constants-relation}
\begin{proof}
For notation brevity, we define  $\tilde \omega \trieq \omega-\mbI_m$. Equation \cref{eq:fbar0-bound} directly follows from \cref{assump:ft0-bound} and \cref{eq:f1-f2-f-relation}. 
\cref{assump:ftx-semiglobal_lipschitz} implies $\norm{f(t,x)-f(t,0)}\leq L_f^\delta\norm{x}\leq  L_f^\delta \delta $, for any $\norm{x}\leq \delta$ and any $t\geq 0$, which, together with \cref{assump:ft0-bound}, implies 
\begin{equation}\label{eq:ftx-bnd-w-delta}
    \norm{f(t,x)}\leq  L_f^\delta \delta + b_f^0, \quad \forall  \norm{x}\leq \delta, \forall t\geq 0.
\end{equation} 
With the bound $b_{\dot \theta}$  on $\norm{\dot \theta(t)}$ in \cref{assum:theta-thetadot-bounded}, we have 
\begin{equation}\label{eq:theta-t1-t2-bound}
   \norm{\theta(t_1)-\theta(t_2)}=\left\| {\int_{{t_1}}^{{t_2}} {\dot \theta } (\tau )d\tau } \right\| \leq b_{\dot \theta}\abs{t_1-t_2}.
\end{equation}
Note that 
\begin{subequations}
  \begin{align*}
    & \quad \left\| {B(\theta ({t_1}))K_x(\theta ({t_1})){x_1}\!-\!B(\theta ({t_2}))K_x(\theta ({t_2})){x_2}} \right\| \nonumber \\
   & \leq\! \left\| {\left( {B(\theta ({t_1}))\!-\!B(\theta ({t_2}))} \right)K_x(\theta ({t_1})){x_1}} \right\| \!+\! \left\| {B(\theta ({t_2}))\left( {K_x(\theta ({t_1}))\!-\!K_x(\theta ({t_2}))} \right){x_1}} \right\| \!+\! \left\| {B(\theta ({t_2}))K_x(\theta ({t_2}))\left( {{x_1}\!-\!{x_2}} \right)} \right\| \\
   &\leq\! L_B \left\| {\theta ({t_1}) \!-\! \theta ({t_2})} \right\| \!  \left\| {{x_1}} \right\|\left\| {K_x(\theta ({t_1}))} \right\| \!+\!\left\| {B(\theta ({t_2}))} \right\|{L_{K_x}}\left\| {\theta ({t_1})\!-\!\theta ({t_2})} \right\|\left\| {{x_1}} \right\| +\left\| {B(\theta ({t_2}))K_x(\theta ({t_2}))} \right\|\left\| {{x_1}\!-\!{x_2}} \right\| \\
   & \leq\! \left( {L_B \left\| {K_x(\theta ({t_1}))} \right\| \!+\!\left\| {B(\theta ({t_2}))} \right\|{L_{K_x}}} \right) \!\left\| {{x_1}} \right\|  b_{\dot \theta}\abs{t_1-t_2} +    \mathop {\max }\limits_{\theta  \in \Theta } \left\| {B(\theta )K_x(\theta )} \right\|
   \left\| {{x_1}\!-\!{x_2}} \right\| \\
   & \leq\!  \left( {L_B  b_{K_x} \!+\!b_{B}{L_{K_x}}} \right)\delta {b_{\dot \theta }}\left| {{t_1}\!-\!{t_2}} \right|  \!+\! 
  \mathop {\max }\limits_{\theta  \in \Theta } \left\| {B(\theta )K_x(\theta )} \right\|\left\| {{x_1}\!-\!{x_2}} \right\|,  
  \end{align*}
\end{subequations}
   where the second inequality is due to \cref{assump:B-lipschitz-bounded} and \cref{eq:theta-t1-t2-bound}, and the last inequality is due to the definitions in \cref{eq:Am-B-Bu-Kx-Kr-bounds} and the fact that $\norm{x_1}\leq \delta$. From the preceding inequality and the definition of ${{\bar f}}(t,x)$ in \cref{eq:f1-f2-f-relation}, we have 
\begin{align*}
    \left\| {{{\bar f}}({t_1},{x_1}) - {{\bar f}}({t_2},{x_2})} \right\| &\leq \left\| {f({t_1},{x_1}) - f({t_2},{x_2})} \right\| + \left\| {\omega  - {\mathbb{I}_m}} \right\| \left\| {B(\theta ({t_1}))K_x(\theta ({t_1})){x_1} - B(\theta ({t_2}))K_x(\theta ({t_2})){x_2}} \right\| \\
  & \leq {l_f}\left| t_1\!-\! t_2 \right| + {L_f^\delta }\left\| {{x_1} - {x_2}} \right\| + \left\| {\omega  - {\mathbb{I}_m}} \right\| \left( {L_B  b_{K_x} + b_{B}{L_{K_x}}} \right)  \delta{b_{\dot \theta }}\left| t_1\!-\! t_2 \right|\\
  & \quad + \left\| {\omega  - {\mathbb{I}_m}} \right\|\mathop {\max }\limits_{\theta  \in \Theta } \left\| {B(\theta )K_x(\theta )} \right\|\left\| {{x_1} - {x_2}} \right\| \\
  & \leq l_{{\bar f}}\left| t_1\!-\! t_2 \right| + L_{{\bar f}}^\delta\left\| {{x_1} - {x_2}} \right\|. 
\end{align*}
 Equation \cref{eq:f1-f2-f-relation} implies
\begin{subequations}\label{eq:f1f2-expression-by-f}
\begin{align}
   \hspace{-3mm} f_1(t,x) &\!=\! B^\dagger\!(\theta){\bar f}(t,x) 
  \! = \!B^\dagger\!(\theta)f(t,x) \!+\! (\omega\!-\!\mbI_m)K_x(\theta)x, \hspace{-1mm} \label{eq:f1-expression-by-f}\\
  \hspace{-3mm}  f_2(t,x) &\!=\! \Bu^\dagger(\theta)f(t,x), \label{eq:f2-expression-by-f}
\end{align}
\end{subequations}
where $B^\dagger(\theta)$ and $\Bu^\dagger(\theta)$ are the  pseudo-inverse of $B(\theta)$ and $\Bu(\theta)$, respectively. 
From \cref{assump:ft0-bound} and \cref{eq:f1f2-expression-by-f}, we obtain \cref{eq:fit0-bound}. On the other hand, from \cref{eq:f1-expression-by-f}, 
we have
\begin{align*}
    \left\| {{f_1}({t_1},{x_1}) \!-\! \!{f_1}({t_1},{x_2})} \right\| 
 &\leq  \!\left\| {{B^\dag }(\theta ({t_1}))} \right\|\!\!\left\| {f({t_1},\!{x_1}) \!-\!\! f({t_1},\!{x_2})} \right\| + \left\| {(\omega  \!-\! {\mathbb{I}_m})K_x(\theta ({t_1}))} \right\| \!\left\| {{x_1} \!\!-\! {x_2}} \right\| \\
 & \leq  \!\left( {{L_f^\delta }\left\| {{B^\dag }(\theta ({t_1}))} \right\| + \left\| {(\omega  \!-\! {\mathbb{I}_m})K_x(\theta ({t_1}))} \right\|} \right) \!\! \left\| {{x_1} 
 \!-\! {x_2}} \right\|,
\end{align*}
where the last inequality is due to \cref{assump:ftx-semiglobal_lipschitz}. Considering the definition of $L_{f_1}^\delta$ in \cref{eq:L1delta-defn}, we have  
\begin{equation}\label{eq:f1-t1-x1-x2-final}
   \left\| {{f_1}({t_1},{x_1}) \!-\! \!{f_1}({t_1},{x_2})} \right\| \leq   L_{f_1}^\delta \!\left\| {{x_1} \!- \!{x_2}} \right\|.
\end{equation}
Furthermore,
\begin{subequations}
\begin{align}
  & \left\| {{B^\dag }(\theta ({t_1}))f({t_1},{x_2}) \!- \! {B^\dag }(\theta ({t_2}))f({t_2},{x_2})} \right\| \nonumber  \\ 
  \hspace{-4mm} \leq & \left\| {{B^\dag }(\theta ({t_1}))f({t_1},{x_2}) \!- \! {B^\dag }(\theta ({t_1}))f({t_2},{x_2})} \right\|+ \left\| {{B^\dag }(\theta ({t_1}))f({t_2},{x_2}) \!- \! {B^\dag }(\theta ({t_2}))f({t_2},{x_2})} \right\| \nonumber\\ 
    \hspace{-4mm} \leq & \left\| {{B^\dag }(\theta ({t_1}))} \right\|{l_f}\left| {{t_1} \!- \! {t_2}} \right| + L_{B^\dag }b_{\dot \theta}\left| {{t_1} \!- \!{t_2}} \right|\!\left\| {f({t_2},{x_2})} \right\| \label{eq:B-pinv-f-t1t2-x2-intermed}\\ 
    \hspace{-4mm} \leq &\left( {{l_f}{\max_{\theta  \in \Theta }}\left\| {{B^\dag }(\theta )} \right\| +  L_{B^\dag }b_{\dot \theta}({b_f^0} + {L_f^\delta }\delta )} \right)\left| t_1\!-\! t_2 \right|,\label{eq:B-pinv-f-t1t2-x2-final}
\end{align}
\end{subequations}
where \cref{eq:B-pinv-f-t1t2-x2-intermed} is due to \cref{assump:ftx-rate-bounded}, \cref{eq:Bdagger-lipschitz-const} in \cref{assump:B-lipschitz-bounded} and \cref{eq:theta-t1-t2-bound}, and \cref{eq:B-pinv-f-t1t2-x2-final} results from \cref{eq:ftx-bnd-w-delta}. As a result,
\begin{align*}
    \left\| {{f_1}({t_1},{x_2}) - {f_1}({t_2},{x_2})} \right\| 
  & =\left\| {B^\dag }(\theta ({t_1}))f({t_1},{x_2}) \right.  - {B^\dag }(\theta ({t_2}))f({t_2},{x_2})  + \left. (\omega  - {\mathbb{I}_m})\left( {K_x(\theta ({t_1})) - K_x(\theta ({t_2}))} \right){x_2} \right\| 
   \\ 
   &\leq \left\| {{B^\dag }(\theta ({t_1}))f({t_1},{x_2}) - {B^\dag }(\theta ({t_2}))f({t_2},{x_2})} \right\| 
 + \left\| {(\omega  - {\mathbb{I}_m})} \right\| {L_{K_x}} b_{\dot \theta}\left| t_1\!-\! t_2 \right|\left\| {{x_2}} \right\|,
\end{align*}
 where the inequality is due to  \cref{eq:K-lipschitz-const} in \cref{assump:B-lipschitz-bounded} and \cref{eq:theta-t1-t2-bound}. Further considering \cref{eq:B-pinv-f-t1t2-x2-final} and \cref{eq:l1t-defn}, we have
 \begin{align}
 \left\| {{f_1}({t_1},{x_2}) - {f_1}({t_2},{x_2})} \right\| \leq  ~l_{f_1}\left| t_1\!-\! t_2 \right|.\label{eq:f1-t1t2-x2-final}
\end{align}
From \cref{eq:f1-t1-x1-x2-final} and \cref{eq:f1-t1t2-x2-final}, we obtain 
\begin{align}
 & \left\| {{f_1}({t_1},{x_1}) - {f_1}({t_2},{x_2})} \right\|\leq  \left\| {{f_1}({t_1},{x_1}) - {f_1}({t_1},{x_2})} \right\| + \left\| {{f_1}({t_1},{x_2}) - {f_1}({t_2},{x_2})} \right\| \leq L_{f_1}^\delta\left\| {{x_1} - {x_2}} \right\| + l_{f_1}
 \left| t_1\!-\! t_2 \right|,  \label{eq:f1-t1t2-x1x2}
\end{align}
which proves \cref{eq:fi-lipschitz-cond} for $i=1$.
Similarly, \begin{align}
  \hspace{-3mm}\left\| {{f_2}({t_1},{x_1}) \!-\! \!{f_2}({t_1},{x_2})} \right\| &= \! \left\| {\Bu^\dag }(\theta ({t_1}))\!\left( {f({t_1},{x_1})\!\! -\!\! f({t_1},{x_2})} \right)\right\| \nonumber \\ 
 &\leq   \!\left\| {{\Bu^\dag }(\theta ({t_1}))} \right\|\!\left\| {f({t_1},{x_1}) \!-\! f({t_1},{x_2})} \right\| \!  
  \leq  \left\| {{\Bu^\dag }(\theta ({t_1}))} \right\| {L_f^\delta } \left\| {{x_1} \!-\! {x_2}} \right\|\leq 
  \! {L_{f_2}^\delta} \!\left\| {{x_1} \!-\! {x_2}} \right\|,  \label{eq:f2-t1-x1-x2-final}
\end{align}
where the second last inequality is due to \cref{assump:ftx-semiglobal_lipschitz}. 
Furthermore, following a  procedure similar to that for deriving \cref{eq:B-pinv-f-t1t2-x2-intermed}, we can obtain
\begin{align*}
    \left\| {{f_2}({t_1},{x_2}) - {f_2}({t_2},{x_2})} \right\| 
 & =  \left\| {{\Bu^\dag }(\theta ({t_1}))f({t_1},{x_2}) - {\Bu^\dag }(\theta ({t_2}))f({t_2},{x_2})} \right\|   \\
  &\leq \left\| {{\Bu^\dag }(\theta ({t_1}))} \right\|{l_f}\left| t_1\!-\! t_2 \right| + L_{\Bu^\dag }b_{\dot \theta}\left| t_1\!-\! t_2 \right|\left\| {f({t_2},{x_2})} \right\| \label{eq:f2-t1t2-x2-intermed}\\ 
 &  \leq ( {{l_f}\left\| {{\Bu^\dag }(\theta ({t_1}))} \right\| + L_{\Bu^\dag }b_{\dot \theta}\left\| {f({t_2},{x_2})} \right\|})\left| {{t_1} \!-\! {t_2}} \right|.
\end{align*}

Further considering the definition of $l_{f_2}$ in \cref{eq:l2t-defn}, we have 
\begin{equation}\label{eq:f2-t1t2-x2-final}
    \left\| {{f_2}({t_1},{x_2}) - {f_2}({t_2},{x_2})} \right\| \leq  l_{f_2}\left| t_1\!-\! t_2 \right|,
\end{equation}
which, together with \cref{eq:f2-t1-x1-x2-final}, leads to \cref{eq:fi-lipschitz-cond} for $i=2$.  \qedclosed
\end{proof}
\subsection{A Preliminary Lemma for Proof of Lemma~\ref{lem:p2p-synthesis-final}}\label{sec:p2p-synthesis}
\vspace{-1mm}
\begin{lemma}\label{lem:p2p-synthesis}
Consider the LPV generalized plant in \eqref{eq:lpv-generalized-plant}, with parameter trajectories constrained by \eqref{eq:theta-thetadot-constraints}. There exists an LPV controller \eqref{eq:barH-ss-form} 
enforcing internal stability and a bound $\gamma$ on the PPG of the closed-loop system, whenever there exist PD symmetric matrices $\tilde{X}(\theta)$ and $\tilde{Y}(\theta)$, a PD quadruple of state-space data ($\tilde{A}(\theta), \tilde{B}(\theta), \tilde C(\theta), \tilde D(\theta)$) and positive constants $\mu$ and $\upsilon$ such that the conditions \eqref{eq:p2p-lim-1} and \eqref{eq:p2p-lim-2} hold. In such case, the controller in \eqref{eq:barH-ss-form} is given by
\begin{equation}\label{eq:lpv-controller-construction}
  \left\{ \begin{array}{rl}
 A_{\bar {\mathcal H}}(\theta)  =& N^{-1} (\tilde A- \tilde Y{{\bar B}_2}{C_{\bar {\mathcal H}}}{M^ \top } + \tilde Y\bar A\tilde X  + \dot{\tilde Y}\tilde X + \dot N{M^ \top })M^{-\top}\\
{ B_{\bar {\mathcal H}}}(\theta)= &N^{-1} (\tilde B - \tilde Y{{\bar B}_2}{D_{\bar {\mathcal H}}})\\
 C_{\bar {\mathcal H}}(\theta)=& \tilde C M^ {-\top }\\
{D_{\bar {\mathcal H}}}(\theta) =&\tilde  D, 
\end{array}\right.
\end{equation}
where $M$ and $N$ are computed from the factorization
$
    N(\theta)M(\theta)^{\top\!} = \mbI_{\bar n}-\tilde X(\theta) \tilde Y(\theta): 
$
{\setlength{\arraycolsep}{2.2pt}
\begin{equation}\label{eq:p2p-lim-1}
\hspace{-3mm}
\left[ {\begin{array}{*{20}{c}}
{\! - \dot{\tilde X} \!+\! \langle {\bar A\tilde X \!+\! {{\bar B}_2}\tilde C} \rangle  \!+\! \mu \tilde X}& \star & \star \\
{\tilde A \!+\! {\bar A^ \top } \!+\! \mu {\mbI_{\bar n}}}&  { - \dot{\tilde Y} \!+\! \langle {\bar A\tilde Y} \rangle  \!+\! \mu \tilde Y} & \star \\
{{{({{\bar B}_1} \!+\! {{\bar B}_2}\tilde D{{\bar D}_{21}})}^ \top }}& \! {\bar B}_1^ \top  \!\!+\! \bar D_{21}^{\top\!}({{\tilde B}^ \top } \!\!+\! \tilde D{\bar B}_2^{\top\!} ) &{ - \upsilon {\mbI_{n - m}}}
\end{array}} \right]  < 0, 
\end{equation}
\begin{align}
\vspace{-2mm}
\left[ {\begin{array}{*{20}{c}}
{\mu \tilde X}& \star & \star & \star \\
{\mu {\mbI_{\bar n}}}&{\mu \tilde Y}& \star & \star \\
0&0&{(\gamma  - \upsilon ){\mbI_{n - m}}}& \star \\
{{{\bar C}_1}\tilde X \!+\! {{\bar D}_{12}}\tilde C}& {{{\bar C}_1}}&{{{\bar D}_{11}} \!+\! {{\bar D}_{12}}\tilde D{{\bar D}_{21}}}&{\gamma {\mbI_n}}
\end{array}} \right] > 0.  \label{eq:p2p-lim-2} 
\end{align}}Note that in \cref{eq:lpv-controller-construction,eq:p2p-lim-1,eq:p2p-lim-2}, the dependence of the  decision matrices on $\theta$ is omitted for brevity.  
\end{lemma}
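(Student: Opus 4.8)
The plan is to obtain \cref{lem:p2p-synthesis} as the LPV counterpart of the peak-to-peak (PPG) \emph{synthesis} result for LTI systems \cite[Section~IV.E]{Scherer97Multi}, the only genuinely new ingredient being a \emph{parameter-dependent} Lyapunov certificate and the careful bookkeeping of the derivative terms it produces. The starting point is the analysis characterization of \cref{lem:ppg-bound-computaion-lmi}: a PPG bound $\gamma$ for the closed loop is certified by a PD symmetric $\mcP(\theta)$ and scalars $\mu,\upsilon>0$ satisfying \eqref{eq:ppg-bound-computation-lmi1} and \eqref{eq:ppg-bound-computation-lmi2} written for the closed-loop state-space data. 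So the first step is to close the loop: substituting the controller \eqref{eq:barH-ss-form} into the generalized plant \eqref{eq:lpv-generalized-plant} yields matrices $(\mrA_{cl},\mrB_{cl},\mrC_{cl},\mrD_{cl})$ on the stacked state $[\bar x^\top\ x_\barH^\top]^\top$ that are affine in the controller data $(A_\barH,B_\barH,C_\barH,D_\barH)$. Here I would immediately exploit the structure of \eqref{eq:lpv-generalized-plant}, namely $\bar C_2=0$, $\bar D_{11}=0$ and $\bar D_{22}=0$: the first collapses the feedback interconnection into a feedforward one, so $\mrA_{cl}$ becomes block upper-triangular and every $B_\barH\bar C_2$- and $D_\barH\bar C_2$-type cross term vanishes, while the latter two remove the direct-feedthrough couplings. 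This is exactly what trims the fully general synthesis inequalities down to the form \eqref{eq:p2p-lim-1}--\eqref{eq:p2p-lim-2}.

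Next I would introduce the linearizing change of variables. Partition $\mcP(\theta)$ and $\mcP^{-1}(\theta)$ into $\bar n\times\bar n$ blocks, with $\tilde Y(\theta)$ the leading block of $\mcP$ and $\tilde X(\theta)$ the leading block of $\mcP^{-1}$, and let $M(\theta),N(\theta)$ be a smooth factorization $N M^\top=\mbI_{\bar n}-\tilde X\tilde Y$ as in the statement. With the congruence factors $\Pi_1=\begin{bmatrix}\tilde X & \mbI\\ M^\top & 0\end{bmatrix}$ and $\Pi_2=\mcP\Pi_1=\begin{bmatrix}\mbI & \tilde Y\\ 0 & N^\top\end{bmatrix}$, I define new variables
$$\tilde B = N B_\barH + \tilde Y\bar B_2\tilde D,\qquad \tilde C = C_\barH M^\top,\qquad \tilde D = D_\barH,$$
(where $\tilde C$ has no $D_\barH\bar C_2\tilde X$ term precisely because $\bar C_2=0$), and $\tilde A$ collecting $N A_\barH M^\top$ together with the bilinear plant--controller and derivative contributions. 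This is an invertible map onto the controller data once $M,N$ are fixed. Performing congruence of \eqref{eq:ppg-bound-computation-lmi1} by $\mathrm{diag}(\Pi_1,\mbI_{n-m})$ and of \eqref{eq:ppg-bound-computation-lmi2} by $\mathrm{diag}(\Pi_1,\mbI_{n-m},\mbI_n)$, and then substituting the new variables, converts the two bilinear analysis conditions into the LMIs \eqref{eq:p2p-lim-1} and \eqref{eq:p2p-lim-2}, now affine in $(\tilde X,\tilde Y,\tilde A,\tilde B,\tilde C,\tilde D,\upsilon,\gamma)$ for each fixed $\mu$. Conversely, the coupling block of \eqref{eq:p2p-lim-2} forces $\mbI_{\bar n}-\tilde X\tilde Y$ to be nonsingular, so a factorization $N,M$ exists; inverting the change of variables then reproduces exactly the controller reconstruction \eqref{eq:lpv-controller-construction}, and the stability inequality \eqref{eq:p2p-lim-1} delivers the internal stability claim for free via the PD Lyapunov certificate.

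The one place where the LPV setting genuinely departs from the LTI template---and the step I expect to be the main obstacle---is the treatment of $\dot{\mcP}(\theta)$ in \eqref{eq:ppg-bound-computation-lmi1}. After congruence by the $\theta$-dependent $\Pi_1(\theta)$, this derivative does \emph{not} simply become $\mathrm{diag}(\dot{\tilde X},\dot{\tilde Y})$; using $\mcP\Pi_1=\Pi_2$ one has $\Pi_1^\top\dot{\mcP}\,\Pi_1=\tfrac{d}{dt}(\Pi_1^\top\Pi_2)-\dot\Pi_1^\top\Pi_2-\Pi_2^\top\dot\Pi_1$, which yields the diagonal terms $\dot{\tilde X},\dot{\tilde Y}$ seen in \eqref{eq:p2p-lim-1} together with off-diagonal remainders involving $\dot N$ and $\dot{\tilde Y}\tilde X$. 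The change of variables for $\tilde A$ must absorb precisely these remainders, which is why \eqref{eq:lpv-controller-construction} carries the extra $\dot{\tilde Y}\tilde X+\dot N M^\top$ terms. Making this rigorous requires (i) choosing the factorization $N M^\top=\mbI_{\bar n}-\tilde X\tilde Y$ to be continuously differentiable in $\theta$, hence in $t$ along admissible trajectories, so that $\dot N$ is well defined, and (ii) verifying that every derivative term either cancels or is accounted for, so that the transformed inequality is genuinely the claimed LMI. The remaining work---expanding the congruences and matching blocks---is routine but tedious, and I would suppress it, noting that it reduces to \cite[Section~IV.E]{Scherer97Multi} combined with the LPV derivative handling of \cite{Apk98}. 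This is also why the lemma is only ``preliminary'': the dependence of \eqref{eq:lpv-controller-construction} on $\dot\theta$ through $\dot N,\dot{\tilde Y}$ is exactly what \cref{lem:p2p-synthesis-final} subsequently removes.
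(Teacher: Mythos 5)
Your proposal is correct and follows exactly the route the paper takes: the paper's own proof is a one-sentence sketch ("extend the LTI PPG synthesis conditions of \cite[Section~IV.E]{Scherer97Multi} to LPV via a PD Lyapunov function depending on $\tilde X$ and $\tilde Y$, using $\bar C_2=0$ and $\bar D_{11}=0$"), and you have simply unpacked that same argument — closing the loop, applying the standard $\Pi_1,\Pi_2$ congruence and linearizing change of variables, and correctly identifying that the only genuinely LPV-specific step is absorbing the remainders of $\Pi_1^{\top\!}\dot{\mcP}\Pi_1$ into $\tilde A$, which is precisely where the $\dot{\tilde Y}\tilde X+\dot N M^{\top\!}$ terms in \eqref{eq:lpv-controller-construction} come from. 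No gap; your write-up is in fact more explicit than the paper's.
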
 
The proof can be readily obtained by extending the conditions for bounding the PPG of LTI systems \cite[Section~IV.E]{Scherer97Multi} to LPV systems, and considering the fact that $\bar C_2 =0$ and $\bar D_{11} =0$ according to \eqref{eq:lpv-generalized-plant}. The extension is achieved through the use of a PD Lyapunov function that depends on $\tilde X$ and $\tilde Y$. \qedclosed
\begin{remark}\label{rem:practical-validity}
Due to the derivative terms, $\dot{\tilde Y}(\theta)$ and $\dot{\tilde N}(\theta)$, in \eqref{eq:lpv-controller-construction}, implementation of the controller needs the derivative of scheduling parameters, $\dot \theta$, which is often not available in practice. One way to remove the dependence on $\dot \theta$ is to impose structures on the Lyapunov matrices , i.e., making either $\tilde X$ or $\tilde Y$ parameter-independent \cite[Section~III]{Apk98}. However, enforcing this structure will induce conservatism. Motivated by the idea in \cite{sato2008inverse},  we  derive new conditions equivalent to \eqref{eq:p2p-lim-1} and \eqref{eq:p2p-lim-2} while removing the use of $\dot \theta$ in controller implementation without resorting to structured Lyapunov functions. The new conditions are given in Lemma~\ref{lem:p2p-synthesis-final}.
\end{remark}

\section{State-Space Realizations of LPV Mappings}\label{sec:ss-realization}
We now show the state-space realization of mappings $\mcG_{xm}(\theta)$ and $\mcH_{xm}(\theta)\mcF K_r(\theta)$ involved in the stability condition in \eqref{eq:l1-stability-condition}. 
The mapping $\mcH_{xm}(\theta)$ defined for \cref{eq:lpv-system-no-nonlinearity} can be rewritten using its state-space realization as {$\ssreal{A_m(\theta)}{B(\theta)}{\mbI_n}{0_{n\times m} }$}. The mappings $\mcF$ for the filter in \eqref{eq:filter-defn} and $\mbI_m-\mcF$ can be written as {$\ssreal{-\omega K}{\mbI_m}{\omega K}{0_{m\times m}}$} and {$\ssreal{-\omega K}{\mbI_m}{-\omega K}{\mbI_m}$}, respectively. Thus, the definition in \eqref{eq:Gxm-defn} implies that $\mcG_{xm}(\theta)$ can be represented as 
\begin{equation}\label{eq:Gtheta-ss-realization}
  \mcG_{xm}(\theta) =   \ssreal{ \begin{bmatrix}
    A_m(\theta) & -B(\theta)\omega K \\
    0_{m \times n} & -\omega K 
\end{bmatrix}}{
\begin{bmatrix}
    B(\theta)\\
    \mbI_m
\end{bmatrix}
}{\begin{bmatrix}
    \mbI_n & 0_{n\times m}
\end{bmatrix}
}{0_{n\times m}}.
\end{equation}
Similarly, the mapping $\mcH_{xm}(\theta)\mcF K_r(\theta)$ can be written as 
\begin{equation}\label{eq:HxCkg-ss-realization}
    \mcH_{xm}(\theta)\mcF K_r(\theta) \! = \! \ssreal{
    \begin{bmatrix}
        A_m(\theta)& B(\theta)\omega K \\
        0_{m\times n} &-\omega K
    \end{bmatrix}
    }{
    \begin{bmatrix}
        0_{n\times m} \\
       K_r(\theta)
    \end{bmatrix}
    }{
    \begin{bmatrix}
      \mbI_n& 0_{n \times m}  
    \end{bmatrix}
     }{0_{m\times m}}.
\end{equation}Note that the state space matrices in \cref{eq:Gtheta-ss-realization} and \cref{eq:HxCkg-ss-realization} are affine with respect to $\omega$.  Similarly, the  state-space realization of the mapping $\mcG_{xum}(\theta)$ can be derived and shown to have an affine relation with respect to $\omega$. With these state space realizations, the PPG bounds of $\mcG_{xm}(\theta)$, $\mcH_{xm}(\theta)\mcF K_r(\theta)$, and $\mcG_{xum}(\theta)$ can be computed using Lemma~\ref{lem:ppg-bound-computaion-lmi} in order to verify the stability condition in \eqref{eq:l1-stability-condition}. See Section~\ref{sec:sub-computation-perspective} for an explanation of how to deal with the uncertain $\omega$ in computing the PPG bounds.  

\end{document}